\def\arxivversion{1} %
\ifx\arxivversion\undefined
    \input{main}
\else
	\documentclass[11pt]{article}
\def\arxivversion{1}
\usepackage{textcomp}
\usepackage{amsmath}
\usepackage{amssymb}
\usepackage{amsthm}
\usepackage{mathtools}
\usepackage{tikz}
\usepackage{pgfplots}
\usepackage{graphicx}
\usepackage{epstopdf}
\usepackage{wrapfig}
\usepackage{xcolor}
\usepackage{bbm}
\usepackage{makecell}
\usepackage{authblk}
\usepackage{fullpage,etoolbox}
\usepackage[shortlabels]{enumitem}
\usepackage[round,sort,compress]{natbib}
\usepackage{listings}
\usepackage[ruled,vlined,linesnumbered]{algorithm2e}
\usepackage{float}
\usepackage{cancel}
\usepackage{makecell}
\usepackage{balance}  %
\usepackage[colorlinks=true,citecolor=blue,linkcolor=blue,urlcolor=blue]{hyperref}

\usepackage{auxFiles/symbolDef}
\input{auxFiles/auxCommands}

\usepackage{subfig}
\usepackage[labelfont=bf]{caption}
\DeclareRobustCommand{\legendline}[1]{\hspace{-0pt}\tikz[#1,line width=0.4mm,baseline=-0.5ex]{\draw (0,0) -- (.2,0);}\hspace{-0pt}}

\definecolor{mblue}{rgb}{0,0.4470,0.7410}
\definecolor{morange}{rgb}{0.8500,0.3250,0.0980}
\definecolor{myellow}{rgb}{0.9290,0.6940,0.1250}
\definecolor{mpurple}{rgb}{0.4940,0.1840,0.5560}
\definecolor{mgreen}{rgb}{0.4660,0.6740,0.1880}
\definecolor{mcyan}{rgb}{0.3010,0.7450,0.9330}
\definecolor{mred}{rgb}{0.6350,0.0780,0.1840}
\definecolor{mgreenblue}{rgb}{0.0,1.0,0.5}
\definecolor{parulablue}{rgb}{0.2431,0.1490,0.6588}
\definecolor{parulalblue}{RGB}{39,151,235}
\definecolor{parulagreen}{RGB}{129,204,89}
\definecolor{parulayellow}{RGB}{249,251,21}
\definecolor{cblue}{rgb}{0,0.9,1}
\definecolor{corange}{rgb}{1,0.7,0}
\definecolor{mgray}{rgb}{0.8,0.8,0.8}

\newcommand{\mc}[1]{\mathcal{#1}}

\newcommand{\mr}[1]{\mathrm{#1}}
\newcommand{\mb}[1]{\mathbb{#1}}

\newcommand{\R}{\mathbb{R}}

\DeclareMathOperator*{\minimz}{minimize}
\DeclareMathOperator*{\subjto}{s.t.\ }

\newcommand{\norm}[1]{\left\lVert #1 \right\rVert}
\newcommand{\paren}[1]{\left( #1 \right)}

\newcommand{\minimize}[1]{\minimz_{#1}}

\newcommand{\lmin}{\lambda_{\min}}
\newcommand{\lmax}{\lambda_{\max}}
\newcommand{\cn}{\kappa}

\newcommand{\floor}[1]{{\left\lfloor #1 \right\rfloor}}

\newcommand{\Ecal}{\mathcal{E}}
\newcommand{\Fcal}{\mathcal{F}}

\newcommand{\Kcal}{\mathcal{K}}
\newcommand{\Lcal}{\mathcal{L}}
\newcommand{\Mcal}{\mathcal{M}}

\newcommand{\Xcal}{\mathcal{X}}
\newcommand{\Ucal}{\mathcal{U}}

\newcommand{\Tau}{\mathsf{T}}

\newcommand{\dx}{{n_\mr{x}}}
\newcommand{\du}{n_\mr{u}}

\newcommand{\DXF}[3]{D_XF\!\left(#1,#2\right)[#3]}
\newcommand{\Sig}{\Sigma_\tau}

\newcommand{\dt}{[t]}
\newcommand{\xdt}{x_{\dt}}

\renewcommand{\cite}[2][]{\citep[#1]{#2}}

\newcommand{\arxivver}[2]{%
  \ifx\arxivversion\undefined%
    #1%
  \else%
    #2%
  \fi%
}

\newcommand{\comment}[1]{}

\begin{document}
\title{A Quantitative Framework for Navigating Controller Design Tradeoffs under Computational Constraints}
\author{%
Chris Verhoek and Nikolai Matni
\thanks{C.~Verhoek and N.~Matni are with the Dept. of Electrical and Systems Engineering, University of Pennsylvania, United States. C.~Verhoek is also with the Control Systems Group, Eindhoven University of Technology, The Netherlands. Email addresses: \texttt{\{cverhoek, nmatni\}@seas.upenn.edu.} (Corresponding author: C.~Verhoek). 

This work has been supported by the European project COVER under the grant No.~101086228, and AFOSR Award FA9550-24-1-0102.

Keywords: Optimal Control, Computational Limits, Tradeoff Analysis, Resource-aware Controller Design, Predictive control for linear systems, Lyapunov methods.}
}

\maketitle
\thispagestyle{plain}
\pagestyle{plain}

\begin{abstract}
	Computational constraints permeate the controller design process, and yet are rarely treated as explicit design constraints. Towards addressing this gap, we propose a quantitative framework that captures the effects of common design approximations, such as model order reduction, temporal discretization, horizon truncation, and solver accuracy, on both controller performance and computational requirements.  Our framework highlights that these approximations are \emph{tunable parameters} within an overall controller design process.  By leveraging incremental input-to-state stability, we show that bounding the aggregate effects of these approximations reduces to verifying a design-dependent sector bound on the difference between the deployed policy and an idealized baseline, with stability enforced via a small-gain condition. We operationalize these insights via a \emph{Design Meta-Problem} in which the performance gap is minimized subject to stability, real-time compute, and timing constraints. Finally, we instantiate the framework on a receding horizon LQR case study, and demonstrate a principled near-optimal navigation of tradeoffs among sampling rate, model order, horizon length, and solver iterations. 

\end{abstract}

\section{Introduction}

Controllers implemented on digital hardware must respect hard timing bounds imposed by plant dynamics, directly coupling achievable performance to available processing speed. Yet computational limits are rarely treated as first-class design constraints. Instead, tractability is achieved through a sequence of standard simplifications falling into roughly three categories: (i) modeling approximations, e.g., discretization of continuous-time dynamics, truncation of infinite horizons, and model order reduction, (ii) problem relaxations, e.g., convex relaxations of nonconvex optimal control problems, and early termination of iterative solvers, and (iii) architectural decompositions, e.g., layered control hierarchies that distribute computation across decision-making, planning, and regulation layers. 

While each simplification is individually well-motivated, they are typically introduced in isolation, and their combined effect on stability, performance, and computational feasibility is rarely analyzed jointly.  This leaves practitioners without principled guidance on which approximations to make, and at what fidelity. To address this gap, we propose a quantitative framework that makes computational constraints explicit within the control design process. Focusing on a single-layer optimal control problem, we treat model reduction, temporal discretization, horizon truncation, and solver accuracy as tunable design parameters, relating each directly to closed-loop performance degradation and real-time computational feasibility. This yields a {design meta-problem} in which performance is optimized subject to stability and real-time compute constraints. To our knowledge, this is the first framework for systematically and jointly navigating approximation choices under explicit computational constraints. %

\emph{Related work:\ }
The individual effects of design approximations (model reduction, horizon truncation, discretization, solver accuracy) are well understood, see~\citet{antoulas2005approximation, grune2008infinite, chen1995optimal, karapetyan2025closed} and references therein. What is lacking is a framework that treats these approximations as jointly tunable parameters under performance, stability, and computational constraints.  

Complementary lines of work address (i) the division of compute resources between fast and slow control loops~\cite{rosolia2022unified, garg2021multi, stamouli2025layered},
which connects naturally to the broader research program on layered control architectures (LCAs)~\cite{matni2024quantitative}, (ii) co-design and resource-aware methods~\cite{zardini2021codesign,
lahijanian2018resource}, which take a system-level perspective in which the
controller is one component among many to be selected, and (iii) event-triggered control~\cite{heemels2021event}, which addresses
computational constraints from a different angle, optimizing \emph{when} the controller acts rather than \emph{how} it computes. The most directly related body of work concerns compute-aware, real-time optimization for control, particularly for model-predictive control (MPC). Results in this space analyze performance loss incurred by early termination of online optimization in MPC~\cite{chen2025sampled, richter2011computational, rubagotti2014stabilizing, shitasos2025suboptimality, liao2020time, srikanthan2024closed, karapetyan2025closed} by leveraging (incremental) input-to-state stability. While important, these works treat model order, discretization, and horizon length as fixed rather than as tuenable parameters.

\emph{Contributions:\ }
We propose a quantitative framework for optimizing controller performance subject to stability, real-time compute, and timing constraints. Our framework is centered around a \emph{Design Meta-Problem} (DMP) in which performance degradation induced by design approximations, i.e., model reduction, temporal discretization, horizon truncation, and solver accuracy, is minimized subject to stability and computational feasibility. By leveraging incremental input-to-state stability ($\delta$ISS), we show that bounding degradation in performance reduces to verifying a design-dependent sector bound on the difference between deployed and baseline policies, with stability enforced via a small-gain condition. We operationalize the framework on a receding horizon linear quadratic regulator~(LQR) case study, demonstrating principled near-optimal navigation of tradeoffs among model order, sampling rate, horizon length, and solver iterations. %

\emph{Notation:\ }
$\mb{R}$ is the set of real numbers and let~$\R_+$ be the set of non-negative numbers. For a matrix $X\in\mb{R}^{n\times n}$, its smallest (largest) eigenvalue is denoted by~$\lmin(X)$ ($\lmax(X)$). The conditioning number of $X$ is denoted~$\cn(X)$, i.e., for~$X=X^\top$, $\cn(X)=\frac{\lmax(X)}{\lmin(X)}$. We denote the operator norm of~$X$ by~$\|X\|$. %
All proofs can be found \arxivver{in~\cite{extended}.}{in the appendices.}

\section{Problem Formulation}\label{s:problem}
We consider the nonlinear dynamical system
\begin{equation}\label{eq:system}
	\dot{x}_t = f_t(x_t,u_t),
\end{equation}
and associated infinite-horizon optimal control problem,%
\arxivver{%
\begin{subequations}\label{eq:base-problem}
\begin{align}
    \minimize{\pi} \quad & {\textstyle\int_{0}^\infty} c_t(x_t,u_t)\mathrm{d}t \\
     \subjto  \quad & \text{dynamics \eqref{eq:system}}, 
                        \ u_t = \pi_t(x_{[0,t]},u_{[0,t)}), \\
                        & x_t \in \Xcal_t,\, u_t \in \Ucal_t,\,x_0 = \xi,
\end{align}
\end{subequations}}{%
\begin{subequations}\label{eq:base-problem}
\begin{align}
    \minimize{\pi} \quad & \int_{0}^\infty c_t(x_t,u_t)\mathrm{d}t \\
     \subjto  \quad & \text{dynamics \eqref{eq:system}}, \\ & 
     u_t = \pi_t(x_{[0,t]},u_{[0,t)}), \\
     & x_t \in \Xcal_t,\, u_t \in \Ucal_t,\\
     & x_0 = \xi,
\end{align}
\end{subequations}
}
with state $x_t \in \R^{\dx}$, control input $u_t \in \R^{\du}$, dynamics model $f_t$, stage-wise cost function~$c_t$, state and input constraint sets $\Xcal_t$ and $\Ucal_t$, and state-feedback policy $\pi_t$. %
We denote\footnote{For clarity of notation, we suppress the $t$ subscript when referring to the policy $\pi=\{\pi_t\}_{t\geq0}$.} the cost achieved by a policy~$\pi$ in problem~\eqref{eq:base-problem} by~$J(\pi)$, and the optimal cost is given by $J^\star := J(\pi^\star)$, where~$\pi^\star$ is an optimal solution of~\eqref{eq:base-problem}.  We assume throughout that $J^\star$ is finite and is attained by a policy~$\pi^\star$.  %
The problem, as stated, is in general intractable to solve, and as such approximations to the problem must be introduced, motivated in part by computational constraints.  A goal of this paper is to bound the joint effects of these approximations on overall system performance, i.e., to bound $J(\hat \pi)-J(\pi^\star)$, for $\hat \pi$ the deployed policy. We highlight commonly used approximations next.

\subsection{Design Approximations}\label{ss:designapprox}
\begin{enumerate}[wide, leftmargin=0pt,label={\arxivver{\em}{\bf}\arabic*)},noitemsep,topsep=0pt]
    \item {\arxivver{\em}{\bf}Reduced Order Models and Policy Classes:} It is common to approximate the dynamics with a reduced order model~$\hat f$ from a model class~$\Fcal$ (e.g., linear, control-affine), and similarly to restrict our search to policies~$\hat\pi$ lying in a policy class~$\Pi$ (e.g., Markovian, time-invariant). We denote the joint reduced order model and policy class by~$\Mcal=\Fcal\times\Pi$. %
\item {\arxivver{\em}{\bf}Horizon Length:} Rather than solving problem~\eqref{eq:base-problem} over an infinite horizon, it is instead solved over a \emph{finite} horizon~$T$. The cost for $t\in(T,\infty)$ is captured by augmenting the finite-horizon objective with a terminal cost~$c_T(x_T)$.

\item {\arxivver{\em}{\bf}Temporal Discretization:} To make solving the optimal control problem~\eqref{eq:base-problem} computationally tractable, it is often temporally discretized with a sampling time~$\tau$.

\item {\arxivver{\em}{\bf}Optimization Iterations:} The resulting reduced-order, finite-horizon, discrete-time problem, is often deployed in a receding horizon fashion, which demands online re-solving at each step using iterative algorithms (e.g., gradient descent, interior point methods). Due to loop rate and compute constraints, only a finite number~$N$ of iterations can be run.
\end{enumerate}

\begin{remark}
	This list is not exhaustive, and the proposed framework can easily be extended to handle other bespoke approximations and design choices. Moreover, the \emph{order} in which these design approximations are introduced may depend on the control design problem at hand.
\end{remark}

While these approximations are necessary to make the control design
problem~\eqref{eq:base-problem} tractable, the resulting deployed 
policy~$\hat\pi$ must still respect certain key \emph{design 
constraints}.

\subsection{Design Constraints}\label{ss:designcons}
\begin{enumerate}[wide, leftmargin=0pt,label={\arxivver{\em}{\bf}\arabic*)},noitemsep,topsep=0pt]
    \item {\arxivver{\em}{\bf}Stability Constraints:} A minimal requirement is for the deployed policy~$\hat\pi\in\Pi$ to be stabilizing.  This, in turn, may impose lower bounds on model and policy class complexity~$\Mcal$, on loop rate and sampling time, on horizon length, and on the number of optimization iterations.
    \item {\arxivver{\em}{\bf}Performance Constraints:} These may be imposed from the design problem at hand or due to integration within a larger LCA.  These could include, e.g., tracking error or convergence rate guarantees.  These can be enforced by requiring a policy~$\hat\pi$ satisfying~$R(\hat\pi)\leq 0$, for a suitably defined requirements function~$R$.
    
\item {\arxivver{\em}{\bf}Offline Compute Constraints:} While heavier demands can be made of offline compute requirements, we still require methods that can be solved within a prescribed amount of time, e.g., polynomial time synthesis algorithms.

\item {\arxivver{\em}{\bf}Online Compute Constraints:} Online computation must respect constraints as imposed by, e.g., loop rate constraints and hardware capabilities.
\end{enumerate}

As before, this list is not exhaustive, and our framework naturally accommodates additional constraints.

\subsection{Problem Statement} While the above design approximations and constraints are well understood individually, they have yet to be holistically quantified and optimized.  We therefore seek to provide a systematic framework for navigating the tradeoffs introduced by these design approximations and constraints. Toward that end, in the next section, we formulate a DMP for the optimal control problem~\eqref{eq:base-problem} that optimizes the performance gap of the deployed policy induced by the design approximations of~\S\ref{ss:designapprox} subject to the constraints of~\S\ref{ss:designcons}, with a particular emphasis on explicitly encoding \emph{computational constraints}.

\section{The Design Meta-Problem}\label{s:qframe}

To track the contribution of each approximation to the overall performance gap, we associate a dedicated approximate policy with each design parameter, wherein subscripts identify the relevant quantities that define the approximation. For example, $\hat\pi_\Mcal$ denotes the policy computed using the reduced order model and policy class~$\Mcal$, $\hat\pi_T$ denotes the policy computed for a finite horizon of length~$T$, $\hat\pi_\tau$ denotes the policy computed using a sampling time~$\tau$, and $\hat\pi_N$ denotes the policy computed using~$N$ compute iterations. There is an implicit dependence induced by the order of the approximations, e.g., $\hat\pi_N$ here is inherently dependent on the choices of $\tau,T,\Mcal$, i.e.,~$\hat\pi_{N,\tau,T,\Mcal}$. However, to reduce notational clutter, we suppress this implicit dependence in our notation. In our model, the deployed policy (on~$f$) is $\hat\pi_N$, and hence our goal is to understand the effect of the design choices on the performance gap~$\Ecal := J(\hat\pi_N) - J(\pi^\star)$.

\begin{remark}\label{rem:notoptpi}
    We use the optimal policy $\pi^\star$ as a natural 
    baseline, but this is not a requirement, and
    $\hat\pi$ can be compared against any baseline policy. This is 
    particularly useful when $\pi^\star$ is impossible to compute, but 
    a performant yet non-implementable policy (e.g., one that violates 
    online compute constraints) is available as a baseline.
\end{remark}

Towards quantifying the tradeoffs that design choices induce in the overall control design problem, we take inspiration from work on tradeoffs in large-scale learning~\cite{bottou2007tradeoffs}, and consider the following \emph{{performance gap} decomposition}:
\arxivver{%
\begin{align}
    \Ecal &= J(\hat\pi_N) - J(\pi^\star) \nonumber\\
    &\begin{multlined}[b]
    	= \underbrace{J(\hat\pi_N) - J(\hat\pi_\tau)}_\text{Optimization Error $\Ecal_\textsc{opt}(N)$} + \underbrace{J(\hat\pi_\tau) - J(\hat\pi_{T})}_\text{Discretization Error $\Ecal_\textsc{dt}(\tau)$} \\   + \underbrace{J(\hat\pi_{T}) - J(\hat\pi_\Mcal)}_\text{Finite Horizon Error $\Ecal_\textsc{fh}(T)$} + \underbrace{J(\hat\pi_{\Mcal}) - J(\pi^\star)}_\text{ROM Error $\Ecal_\textsc{rom}(\Mcal)$} 	
 \end{multlined} \nonumber\\
    & =: \Ecal_\textsc{opt}(N) + \Ecal_\textsc{dt}(\tau) + \Ecal_\textsc{fh}(T)+ \Ecal_\textsc{rom}(\Mcal). \label{eq:app-error}
\end{align}}{%
\begin{align}
    \Ecal &= J(\hat\pi_N) - J(\pi^\star) \nonumber\\
    & = \underbrace{J(\hat\pi_N) - J(\hat\pi_\tau)}_\text{Optimization Error $\Ecal_\textsc{opt}(N)$} + \underbrace{J(\hat\pi_\tau) - J(\hat\pi_{T})}_\text{Discretization Error $\Ecal_\textsc{dt}(\tau)$}   + \underbrace{J(\hat\pi_{T}) - J(\hat\pi_\Mcal)}_\text{Finite Horizon Error $\Ecal_\textsc{fh}(T)$} + \underbrace{J(\hat\pi_{\Mcal}) - J(\pi^\star)}_\text{ROM Error $\Ecal_\textsc{rom}(\Mcal)$}  \nonumber\\
    & =: \Ecal_\textsc{opt}(N) + \Ecal_\textsc{dt}(\tau) + \Ecal_\textsc{fh}(T)+ \Ecal_\textsc{rom}(\Mcal). \label{eq:app-error}
\end{align}
}%
Decomposition~\eqref{eq:app-error} shows that the {performance gap} of the deployed policy~$\hat\pi_N$ relative to the baseline policy~$\pi^\star$ can be decomposed into error terms that depend on the choice of horizon~$T$ (i.e., $\Ecal_\textsc{fh}$), the sampling time~$\tau$ (i.e., $\Ecal_\textsc{dt}$), the choice of reduced order model class~$\Mcal$ (i.e., $\Ecal_\textsc{rom}$), and the number of optimization iterations~$N$ (i.e., $\Ecal_\textsc{opt}$).  We note that the order of the decomposition will typically depend on the design problem at hand, and that not all approximations need be introduced. %
Moreover, note that the error terms in~\eqref{eq:app-error} share the same implicit dependence on earlier approximation errors as the approximate policies do. %
As a final note, 
this sequential introduction of design approximations is common in every field in engineering, and is not only done to make the problem computationally tractable, but also to make solutions understandable and interpretable. Hence, our proposed framework not only leads to computational tractability, but also to \emph{intellectual tractability}: each error term has a clear operational interpretation, and the sequential structure mirrors how experienced engineers approach these problems.

Absent additional constraints, one would simply minimize all error terms simultaneously. However, as outlined in \S\ref{ss:designcons}, engineers are subject to several constraints when designing systems, which we encode in the following (informal) DMP:
\arxivver{\vspace{-1em}}{}
\begin{subequations}\label{eq:meta-problem}
\begin{align}
     \minimize{N,\tau,T,\Mcal} \quad &  \Ecal_\textsc{opt}(N) + \Ecal_\textsc{dt}(\tau) + \Ecal_\textsc{fh}(T) + \Ecal_\textsc{rom}(\Mcal) \label{eq:meta-obj}  \\
     \subjto \quad &  T\geq T_{\min}, \, \tau \leq \tau_{\max}, \label{eq:meta-timing}\\
                        & \exists(\hat{f},\hat\pi)\in\Mcal \text{ stable, s.t. } \hat\pi\text{ stabilizes \eqref{eq:system},} \label{eq:meta-stable}\\
                        &  \hat\pi_N \text{ is stabilizing,} \label{eq:meta-opt}\\
                        &  \Tau(N,\tau,T,\Mcal)\leq \tau. \label{eq:meta-compute}
\end{align}
\end{subequations}

In solving the DMP, we seek to minimize the objective function~\eqref{eq:meta-obj}, which characterizes the total error induced by design approximation parameters $(N,\tau,T,\Mcal)$, as defined in decomposition~\eqref{eq:app-error}, subject to practical design constraints.  Constraint~\eqref{eq:meta-timing} ensures that the horizon~$T$ is sufficiently large %
and the sampling time~$\tau$ is sufficiently small, %
so as to ensure stability of the resulting policy when deployed on the base problem~\eqref{eq:base-problem}.\footnote{For this informal problem definition, we assume that a minimum horizon $T_{\min}$ and a maximum sampling time $\tau_{\max}$ are provided.  In \S\ref{s:formalization}, we show how these can be implicitly characterized by a small-gain constraint.}
Constraint~\eqref{eq:meta-stable} imposes a minimal complexity bound on the reduced order model and policy class~$\Mcal$, such that a stabilizing controller for the base problem~\eqref{eq:base-problem} can be found using a reduced order model and policy defined by~$\Mcal$.
Constraint~\eqref{eq:meta-opt} ensures that enough optimization algorithm iterations~$N$ are run so that~$\hat\pi_N$ is stabilizing.\footnote{We omit the performance constraint $R(\hat{\pi}_N)\leq 0$ for 
simplicity, but note that if $R$ satisfies smoothness conditions 
analogous to those imposed on the cost $c$ in Lemma~\ref{lem:delta_J}, the same 
$\delta$ISS-based analysis applies to bound $R(\hat{\pi}_N)$.}

Finally, in constraint~\eqref{eq:meta-compute} we introduce the compute time~$\Tau$, which maps the design parameters $(N,\tau,T,\Mcal)$ to a wall-clock time that is needed to compute the next control action defined by~$\hat\pi_N$. Motivated by receding horizon deployment of policies, we constrain that the control action update time $\Tau$ be no larger than the sampling time~$\tau$.  This constraint can be naturally adapted to other deployment paradigms, and we emphasize that this constraint captures the interplay between \emph{online compute and loop rate constraints}. %

The DMP~\eqref{eq:meta-problem} makes the tensions inherent in controller design under computational constraints explicit, which we qualitatively summarize in Table~\ref{tab:tradeoffs}.  The table shows the typical evolution of objective and constraint components (rows) when design variables (columns) increase: compute time invariably increases (decreases) when a performance metric decreases (increases), illustrating the unavoidable tradeoffs to be made in controller design. While conceptually useful, the DMP is not yet actionable, as the expressions appearing in~\eqref{eq:meta-problem} are difficult to quantify, let alone jointly optimize. We show next that assuming a $\delta$ISS baseline system reduces both the stability constraints and the performance gap bound to a design-dependent sector bound on $\|(\hat \pi - \pi^\star)(x)\|$, yielding a tractable reformulation of the DMP.

\begin{table}[t]
    \centering
    \caption{Typical evolution of the error terms in~$\Ecal$ and compute time~$\Tau$ when design parameters $(N,\tau,T,\Mcal)$ increase.}
    \label{tab:tradeoffs}
    \begin{tabular}{clcccc}
    \hline
                       &                                & $N$ & $\tau$  & $T$ & $\Mcal$ \\
         \hline
         $\Ecal_\textsc{opt}$ &(optimization error)     & $\searrow$ & $-$ & $-$ & $-$ \\
         $\Ecal_\textsc{dt}$  &(discretization error)   & $-$ & $\nearrow$ & $-$ & $-$\\
         $\Ecal_\textsc{fh}$  &(finite horizon error)   & $-$ & $-$ & $\searrow$ & $-$\\
         $\Ecal_\textsc{rom}$ &(ROM error)              & $-$ & $-$ & $-$ & $\searrow$ \\
         $\Tau$ &(compute time)                  & $\nearrow$ & $\searrow$ & $\nearrow$ & $\nearrow$\\
         \hline
    \end{tabular}
    \arxivver{\vspace{-1.5em}}{}
\end{table}

\section{Formalizing the Design Meta-Problem via Incremental Input-to-State Stability}\label{s:formalization}

Towards simplifying exposition, we assume that 
problem~\eqref{eq:base-problem} is time-invariant, i.e., $c_t\equiv c$, $f_t \equiv f$, $\Xcal_t\equiv \Xcal$, and $\Ucal_t \equiv \Ucal$, so that the optimal {baseline} policy $\pi^\star$ is Markovian and time-invariant. %
Although time-varying policies may be optimal for sub-problems induced by design approximations, we will consider the deployment of these solutions in a receding horizon fashion, and hence, we can assume without loss of generality that $\hat\pi_t(x_{[0,t]},u_{[0,t)})=\hat\pi(x_t)$.

To bound the performance gap of $\hat \pi$, we view it as a perturbed version of the baseline policy $\pi^\star$, i.e., $\hat \pi(x) = \pi^\star(x) + \Delta(x)$, where $\Delta(x):=(\hat \pi - \pi^\star)(x)$ denotes the policy perturbation.  %
Towards that end, for a fixed policy $u = \pi(x)$, define the closed-loop dynamics $f^{\pi}(x,u) := f(x, \pi(x) + u)$, and $f^{\pi}(x) := f^{\pi}(x,0)$.  To characterize the difference in cost achieved by the {baseline} closed-loop system~$f^\star := f^{\pi^\star}$ and a closed-loop system evolving under {the deployed} policy~$\hat\pi$, i.e., $\hat f := f^{\hat \pi}$, we observe that $\hat f(\hat x) = f^\star(\hat x,\Delta(\hat x))$, i.e., it is \emph{an input perturbed version} of the {baseline} system~$f^\star$.  

\begin{remark}
    Throughout, we use $(x, x')$ and $(f, f')$ to denote generic 
    nominal and perturbed systems, reserving $(x^\star, \hat x)$ and 
    $(f^\star, \hat f)$ for the baseline and deployed systems 
    specifically. The former appear in the general results of 
    \S\ref{ss:smallgain}, while the latter 
    connect those to formalizing the DMP~\eqref{eq:meta-problem}.
\end{remark}

Incremental input-to-state stability, 
which characterizes a notion of robust stability between trajectories, provides a natural framework for bounding the effects of such \emph{input perturbations} on closed-loop behavior, see~\citet{angeli2002lyapunov} and references therein.  We show next that if the baseline system $f^\star$ is $\delta$ISS, then stability and performance guarantees can then be characterized via a suitable small-gain condition defined in terms of a tunable sector bound on the policy difference.

\subsection{Preliminaries}

Recall definitions of standard comparison functions~\cite{khalil2002nonlinear}: 
\arxivver{}{\begin{definition}}
A function $\gamma: [0,a) \to\R_+$ is of class $\mathcal{K}$ if it is continuous, strictly increasing, and $\gamma(0) = 0$. It is of class $\mathcal{K}_\infty$ if it is of class $\Kcal$ with $a=\infty$ and, in addition, $\lim_{r\to\infty}\gamma(r)=\infty$.
\arxivver{}{\end{definition}\begin{definition}}
	A function $\beta : [0,a) \times \R_+ \to\R_+$ is of class $\mathcal{KL}$ if: \emph{i)}~it is continuous, \emph{ii)}~for each fixed~$t$, $\beta(\cdot, t)$ is class~$\mathcal{K}_\infty$, and \emph{iii)}~for each fixed~$x$, $\beta(x, \cdot)$ is strictly decreasing and $\lim_{t\to\infty}\beta(x,t)=0$.
\arxivver{}{\end{definition}}
We restrict ourselves to $\Kcal_\infty$ comparison functions, which corresponds to assuming global $\delta$ISS.\footnote{This can be relaxed to local $\delta$ISS on a compact domain under suitable assumptions on the input perturbations.}

\begin{definition}[Def. 4.1, \citet{angeli2002lyapunov}]\label{def: delta ISS}
     A system $\dot x = f(x,u)$ is %
     $\delta$ISS if there exists a class $\Kcal\Lcal$ function $\beta$ and a class $\Kcal_\infty$ function~$\gamma$, such that for all initial conditions $x_0, x_0' \in \R^\dx$, all input signals $\{u_t\}_{t \geq 0},\{u'_t\}_{t\geq 0}$, and all $t\geq 0$:
    \[
        \norm{x_t-x_t'} \leq \beta\paren{\norm{x_0 - x_0'}, t} + \gamma\big({\textstyle\sup_{0 \leq s \leq t}} \norm{u_s-u_s'}\big).
    \]
\end{definition}
Definition \ref{def: delta ISS} says that: \emph{i)}~trajectories generated by $\delta$ISS systems converge towards each other if they begin from different initial conditions, and \emph{ii)}~the effect of bounded differences in inputs $\{u_t-u'_t\}_{t\geq 0}$ on trajectories is bounded.
\emph{$\delta$ISS Lyapunov functions} can be used to certify whether a system satisfies the $\delta$ISS condition of Definition~\ref{def: delta ISS}.

\begin{definition}[Def. 5.1, \citet{angeli2002lyapunov}]\label{def:deltaISS}
	A smooth function $V(x, x') : \mathbb{R}^\dx \times \mathbb{R}^\dx \rightarrow \mathbb{R}_{+}$ is called a \emph{$\delta$ISS-Lyapunov function} if
\begin{equation}\label{eq: lyap comp fcns}
\alpha_1(\|x - x'\|) \leq V(x, x') \leq \alpha_2(\|x - x'\|)
\end{equation}
for some $\alpha_1, \alpha_2 \in \mathcal{K}_\infty$, and there exists $\kappa \in \mathcal{K}_\infty$ such that for any $u, u' \in \mathcal{U}$ and any $x, x' \in \mathbb{R}^\dx$, we have that $\kappa(\|x - x'\|) \geq \|u - u'\|$ implies
\begin{equation}\label{eq: vdot_0}
\dot V(x,x') = \tfrac{\partial V}{\partial x} f(x, u) + \tfrac{\partial V}{\partial x'} f(x', u') < -\rho(\|x - x'\|),
\end{equation}
is satisfied with $\rho$ positive definite.
\end{definition}
It is shown in~\arxivver{\cite[Thm.~2]{angeli2002lyapunov}}{\citet[Thm.~2]{angeli2002lyapunov}} that when~$u,u'\in\Ucal$, with~$\Ucal$ a compact set, then the existence of a $\delta$ISS-Lyapunov function for a system $\dot x = f(x,u)$ is equivalent to it being $\delta$ISS. Further, condition~\eqref{eq: vdot_0} is equivalent to the existence of a $\delta$ISS Lyapunov function with Lie derivative~$\dot V$ satisfying the exponential dissipation inequality:
\begin{equation}\label{eq: diss}
    \dot V(x,x') \leq -cV(x,x')+ \sigma(\|u-u'\|),
\end{equation}
for some constant $c>0$ and class $\Kcal_\infty$ (supply) function~$\sigma$.

\subsection{A Small-Gain Robustness Criterion}\label{ss:smallgain}

We now %
analyze the evolution of a perturbed system
\begin{equation}\label{eq: pert sys}
    \dot x'=f(x',\Delta(x')+u)
\end{equation}
relative to an unperturbed system $\dot x = f(x,u)$.  We consider sector bounded perturbations $\Delta$ for which there exists an $L>0$, such that $\|\Delta(x')\|\leq L\|x'\|$ for all $x'\in\R^\dx$ \emph{that can be generated by the perturbed system~\eqref{eq: pert sys}}.  
Under a suitable \emph{robustness} assumption on the nominal system $\dot x = f(x,u)$, akin to a small-gain condition, we can guarantee that $\|x_t-x'_t\|$ is bounded, and converges to zero:

\begin{proposition}\label{prop: small-gain}
	Suppose that $\dot x = f(x,u)$ is $\delta$ISS with $\delta$ISS-Lyapunov function $V_\delta$ satisfying bound~\eqref{eq: lyap comp fcns} and inequality~\eqref{eq: diss}, characterized by class $\Kcal_\infty$ functions $\alpha_1,\alpha_2,\sigma$ and constant $c>0$. Let $\rho\in(0,1)$ and assume that
	\begin{equation}\label{eq:smallgaincondition}
		\sigma(2Lz) \leq \rho c\alpha_1(z), \quad \text{for all } z\geq 0,
	\end{equation}
	that $x'_0=x_0$, and that $u_t\equiv 0$ for all $t\geq0$.  Then,
	\begin{equation}\label{eq:small-gain}
	    \|x_t-x'_t\|\leq \alpha_1^{-1}\big(e^{-c(1-\rho)t}\alpha_2(\|x_0\|)\big).
	\end{equation}
\end{proposition}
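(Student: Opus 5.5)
The plan is to run a comparison argument on a \emph{max-type} Lyapunov function built from $V_\delta$. It tracks both the incremental error $x_t-x'_t$ and the nominal state $x_t$ relative to the origin. The idea is that the perturbation enters through $\|\Delta(x'_t)\|\le L\|x'_t\|$, and $\|x'_t\|$ can be split by the triangle inequality into $\|x_t\|+\|x_t-x'_t\|$. This split is where the factor $2$ in~\eqref{eq:smallgaincondition} will come from.

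I will assume, as is implicit in the statement, that the origin is an equilibrium, $f(0,0)=0$, so that $x\equiv 0$ is a trajectory of the nominal system with $u\equiv 0$. Define
\[
W_1(t):=V_\delta(x_t,x'_t),\qquad W_2(t):=V_\delta(x_t,0),\qquad M(t):=\max\{W_1(t),W_2(t)\}.
\]

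The first step bounds the derivative of $W_2$. Applying~\eqref{eq: diss} to the pair $(x_t,0)$, both driven by the input $0$, gives $\dot W_2\le -cW_2$.

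The second step bounds the derivative of $W_1$. Applying~\eqref{eq: diss} to $(x_t,x'_t)$ with inputs $u=0$ and $u'=\Delta(x'_t)$ gives
\[
\dot W_1\le -cW_1+\sigma(\|\Delta(x'_t)\|)\le -cW_1+\sigma(L\|x'_t\|).
\]
Using $\|x'_t\|\le \|x_t\|+\|x_t-x'_t\|\le 2\max\{\|x_t\|,\|x_t-x'_t\|\}$ and monotonicity of $\sigma$, I will then invoke~\eqref{eq:smallgaincondition} with $z=\max\{\|x_t\|,\|x_t-x'_t\|\}$. Combined with the lower bound in~\eqref{eq: lyap comp fcns}, this gives
\[
\sigma(L\|x'_t\|)\le \rho c\,\alpha_1(z)\le \rho c\,\max\{W_1,W_2\}=\rho c\,M(t).
\]

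The third step combines the two bounds through the upper Dini derivative of the maximum. At times where $M=W_1$, we have $D^+M\le -cM+\rho cM=-c(1-\rho)M$. At times where $M=W_2$, we have $D^+M\le -cM\le -c(1-\rho)M$. Hence $D^+M\le -c(1-\rho)M$ everywhere, and a comparison lemma (Grönwall for Dini derivatives) yields $M(t)\le e^{-c(1-\rho)t}M(0)$.

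Since $x'_0=x_0$, we have $W_1(0)=V_\delta(x_0,x_0)=0$, while $W_2(0)\le \alpha_2(\|x_0\|)$. Therefore $M(0)\le\alpha_2(\|x_0\|)$. Finally, $\alpha_1(\|x_t-x'_t\|)\le W_1(t)\le M(t)$, and applying $\alpha_1^{-1}$ gives~\eqref{eq:small-gain}.

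The main obstacle is the second step. The sector bound is stated in terms of $\|x'\|$, but the dissipation inequality controls only the incremental quantity $V_\delta(x,x')$. The max construction is what closes this loop without producing a factor of $2$ outside $\alpha_1^{-1}$. The technical point is to justify the Dini-derivative comparison for the nonsmooth $M$, which is standard once $W_1$ and $W_2$ are $C^1$ along trajectories. I will also check that the sector bound is only needed along the generated trajectory $x'_t$, which is exactly the set on which it is assumed to hold.
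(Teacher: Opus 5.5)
Your argument is correct, but it follows a different route from the paper's.

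\textbf{The paper's route.} It splits the perturbation additively using $\sigma(a+b)\le\sigma(2a)+\sigma(2b)$, which gives
\[
\dot V_\delta(x,x')\le -c(1-\rho)V_\delta(x,x')+\rho c\,V_\delta(x,0).
\]
It then bounds $V_\delta(x_t,0)\le e^{-ct}V_\delta(x_0,0)$ with the comparison lemma. The result is a linear differential inequality with a decaying forcing term, which it integrates explicitly (variation of constants) to get $V_\delta(x_t,x'_t)\le e^{-c(1-\rho)t}(1-e^{-\rho ct})V_\delta(x_0,0)$.

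\textbf{Your route.} You use $\|x'\|\le 2\max\{\|x\|,\|x-x'\|\}$ and close the loop with the max-type function $M=\max\{V_\delta(x,x'),V_\delta(x,0)\}$. This gives the single inequality $D^+M\le -c(1-\rho)M$, with no convolution integral.

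\textbf{What each buys.} The paper's cascade argument stays within the smooth comparison lemma. It also yields a slightly sharper intermediate estimate: the factor $1-e^{-\rho ct}$ encodes $x'_0=x_0$. However, it relies on the linear (exponential) dissipation form to evaluate the integral in closed form. Your construction is the standard Lyapunov small-gain device and needs a Dini-derivative comparison for the nonsmooth $M$, which you handle correctly by checking whichever of the two terms attains the maximum. In exchange, it carries over directly to nonlinear decay rates such as $\dot V\le-\alpha(V)+\sigma$, or to genuinely two-way interconnections where no explicit solution formula is available. Both routes give the same final bound \eqref{eq:small-gain}. Both also need $f(0,0)=0$; you state this explicitly, whereas the paper leaves it implicit.

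\textbf{One overstatement.} The max construction is not what prevents a stray factor outside $\alpha_1^{-1}$. The paper's additive split loses nothing either. Even the plain sum $S=V_\delta(x,x')+V_\delta(x,0)$ would satisfy $\dot S\le -c(1-\rho)S$ with $S(0)\le\alpha_2(\|x_0\|)$, which gives the same conclusion.
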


Prop.~\ref{prop: small-gain} follows from a standard small-gain argument, but its implications are profound. In particular, ensuring stability of the perturbed system~\eqref{eq: pert sys} is reduced to verifying the sector bound $\|\Delta(x')\|\leq L\|x'\|$ for all states~$x'$ that can be generated by the perturbed system~\eqref{eq: pert sys}, so long as the small-gain condition~\eqref{eq:smallgaincondition} holds. The result also reveals an interplay between~$c$, a measure for the robustness of the baseline system, and the tunable parameter~$\rho$. A small~$c$, corresponding to a fragile baseline, makes the small-gain 
condition~\eqref{eq:smallgaincondition} harder to satisfy, and while increasing~$\rho$ relaxes this condition, it degrades the performance 
bound~\eqref{eq:small-gain}.

An immediate corollary of Prop.~\ref{prop: small-gain} is that if the baseline system $f^\star$ is $\delta$ISS, then ensuring robust stability of the deployed system $\hat f$ can be reduced to verifying a sector bound on the policy difference $\|(\hat \pi-\pi^\star)(\hat x)\|$. 
In \S\ref{ss:dmp-diss}, we show how this offers a practical means of enforcing the stability, robustness, and timing constraints~\eqref{eq:meta-timing}--\eqref{eq:meta-opt} of the DMP proposed above.  However, we defer this discussion, and now focus on bounding the {performance gap} of the deployed system $\hat f$.

\begin{remark}%
We note that if instead, $\|\Delta(x')\|\leq L\|x'\| + d$,\arxivver{\\}{ }for some $d>0$, then Prop.~\ref{prop: small-gain} can be suitably modified to show \emph{incremental practical stability} of $\|x-x'\|,$ i.e.,~\eqref{eq:small-gain} instead becomes $\|x'_t-x_t\|\leq \alpha_1^{-1}(e^{-c(1-\rho)t}\alpha_2(\|x_0\|)) + \gamma(\sup_{0\leq \tau \leq t}\|d_t\|)$ for~$\gamma$ some class~$\Kcal_\infty$ function.  Similarly, if $u_t\neq 0$ for all~$t$, then the error system inherits a (suitably degraded) ISS property of the nominal system.  Both follow from a standard comparison lemma with inputs argument.
\end{remark}

\subsection{Bounding the Performance Gap}%

Our next goal is to bound the performance gap
\begin{equation}\label{eq:delta_J}
    J(\hat\pi)-J(\pi^\star) = \arxivver{\textstyle\int_0^\infty}{\int_0^\infty} c(\hat x_t,\hat \pi(\hat x_t))-c(x^\star_t,\pi^\star(x^\star_t))\mr{d}t
\end{equation}
for a generic deployed policy $\hat \pi$ satisfying $\|(\hat \pi-\pi^\star)(\hat x)\|\leq \hat{L}\|\hat x\|$ for all $\hat x\in\R^\dx$ generated by the perturbed system~$\hat f$, and for a baseline system~$f^\star$ satisfying the assumptions of Prop.~\ref{prop: small-gain}, i.e.,~$\|\hat x_t-x^\star_t\|$ satisfies bound~\eqref{eq:small-gain}.  We further assume that there exists some function $\mu_\rho\in\Kcal_\infty$ such that 
\begin{equation}\label{eq:mu}
    \arxivver{\textstyle\int_0^\infty}{\int_0^\infty} \alpha_1^{-1}(e^{-{c(1-\rho)}t}\alpha_2(\|x_0\|))\mathrm{d}t\leq \mu_\rho(\|x_0\|).
\end{equation}
This assumption is true for most reasonable $\alpha_1$,~$\alpha_2$ functions, e.g., any $\alpha_i$ functions that grow at most polynomially in their arguments. Moreover, it is easy to show that %
the left-hand side of~\eqref{eq:mu} is monotonically increasing with respect to~$\rho$.

The following lemma shows, under suitable regularity assumptions on the stage-wise cost function $c(\cdot,\cdot)$ and the assumption that $\|(\hat \pi-\pi^\star)(x^\star)\|\leq L^\star\|x^\star\|$ for all~$x^\star$ generated by the baseline system~$f^\star$, that the {performance gap}~\eqref{eq:delta_J} scales at most linearly with $L^\star$.

\begin{lemma}\label{lem:delta_J}
    Assume that $\|(\hat \pi-\pi^\star)(\hat x)\|\leq \hat{L}\|\hat x\|$ for any~$\hat x$ generated by the perturbed system~\eqref{eq: pert sys}, and that $\|(\hat \pi-\pi^\star)(x^\star)\|\leq L^\star\|x^\star\|$ for any~$x^\star$ generated by the optimal system~$f^\star$.  Further, assume that the cost function satisfies
    \begin{align}\label{eq:cost_x}
    |c(\hat x,\hat \pi(\hat x))-c(x^\star,\hat \pi(x^\star))|&\leq L_\mr{x}\|\hat x-x^\star\|, \\
	\label{eq:cost_pi}
    	|c(x^\star,\hat \pi(x^\star))-c(x^\star,\pi^\star(x^\star))|&\leq L_\pi\|(\hat \pi-\pi^\star)(x^\star)\|,
	\end{align}
	for all $\hat x,x^\star\in\R^\dx$.  Then, if the assumptions of Prop.~\ref{prop: small-gain} are satisfied with $L^\star$, and if upper bound~\eqref{eq:mu} holds,
\begin{equation}\label{eq:delta_J_bound}
    J(\hat \pi)-J(\pi^\star) %
    \leq (L_\mr{x} + L_\pi L^\star)\mu_\rho(\|x_0^{\star}\|).
\end{equation}
\end{lemma}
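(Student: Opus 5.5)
We start from the integral representation~\eqref{eq:delta_J}, with $\hat x_0 = x^\star_0$, and split the integrand by adding and subtracting the mixed term $c(x^\star_t,\hat\pi(x^\star_t))$:
\begin{align*}
c(\hat x_t,\hat \pi(\hat x_t))-c(x^\star_t,\pi^\star(x^\star_t))
&= \big[c(\hat x_t,\hat \pi(\hat x_t))-c(x^\star_t,\hat \pi(x^\star_t))\big] \\
&\quad + \big[c(x^\star_t,\hat \pi(x^\star_t))-c(x^\star_t,\pi^\star(x^\star_t))\big].
\end{align*}
The first bracket is controlled by~\eqref{eq:cost_x}, which gives at most $L_\mr{x}\|\hat x_t-x^\star_t\|$. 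The second bracket is controlled by~\eqref{eq:cost_pi} together with the baseline sector bound, which gives at most $L_\pi L^\star\|x^\star_t\|$.

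For the first term, we invoke Prop.~\ref{prop: small-gain} with the nominal system $f^\star$ and perturbation $\Delta=\hat\pi-\pi^\star$. The small-gain hypothesis is assumed to hold with the stated sector constant, and we use $x'_0=x_0$ and $u\equiv 0$. This yields $\|\hat x_t-x^\star_t\|\le \alpha_1^{-1}(e^{-c(1-\rho)t}\alpha_2(\|x_0^\star\|))$. Integrating over $[0,\infty)$ and applying~\eqref{eq:mu} then bounds this contribution by $L_\mr{x}\mu_\rho(\|x_0^\star\|)$.

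For the second term, we need $\int_0^\infty\|x^\star_t\|\,\mathrm{d}t\le \mu_\rho(\|x_0^\star\|)$. We obtain this by applying the $\delta$ISS estimate of the baseline system to the pair $(x^\star_t,0)$. Since $\pi^\star$ is a stabilizing optimal policy with finite cost, we take the origin to be an equilibrium of $f^\star$ (with $f^\star(0)=0$), so that $0$ is itself a trajectory. Comparing $x^\star$ with the zero trajectory, with both inputs zero, the dissipation inequality~\eqref{eq: diss} gives $\dot V_\delta\le -cV_\delta$. Hence $\|x^\star_t\|\le\alpha_1^{-1}(e^{-ct}\alpha_2(\|x^\star_0\|))$. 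Because $\alpha_1^{-1}$ is increasing and $e^{-ct}\le e^{-c(1-\rho)t}$, this is dominated by the integrand in~\eqref{eq:mu}, so the integral is at most $\mu_\rho(\|x_0^\star\|)$. Summing the two contributions gives~\eqref{eq:delta_J_bound}.

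The main obstacle is bookkeeping on which sector constant feeds the small-gain step. Prop.~\ref{prop: small-gain} needs a sector bound along the \emph{perturbed} trajectories, i.e.\ $\hat L$, whereas the lemma states the small-gain assumption with $L^\star$. I would handle this by reading the hypothesis as requiring~\eqref{eq:smallgaincondition} with the sector constant valid along $\hat x$, i.e.\ taking $L^\star$ to also cover $\hat L$ or assuming $\hat L\le L^\star$. A secondary point is justifying that $x^\star\equiv0$ is a trajectory of the baseline system. Finally, the integral $J(\hat\pi)$ must be finite, which follows from the bounds just derived via dominated convergence.
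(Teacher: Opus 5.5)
Your proposal is correct and follows the paper's proof essentially step for step. It uses the same add-and-subtract of $c(x^\star_t,\hat\pi(x^\star_t))$, Prop.~\ref{prop: small-gain} for $\|\hat x_t-x^\star_t\|$, the comparison of $x^\star$ with the zero trajectory to get $\|x^\star_t\|\le\alpha_1^{-1}(e^{-ct}\alpha_2(\|x^\star_0\|))$, domination of $e^{-ct}$ by $e^{-c(1-\rho)t}$, and integration against~\eqref{eq:mu}. The two points you flag are also assumptions the paper's proof uses implicitly without comment: the small-gain step actually needs the sector constant valid along the deployed trajectories (i.e., $\hat L$ rather than $L^\star$), and $0$ must be an equilibrium of $f^\star$.
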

Analogously to Prop.~\ref{prop: small-gain}, Lemma~\ref{lem:delta_J} shows that bounding the {performance gap}~\eqref{eq:delta_J} can be reduced to characterizing a design-dependent sector bound quantified by~$L^\star$.  In contrast to the small-gain condition~\eqref{eq:smallgaincondition}, this sector bound must only be valid on trajectories \emph{generated by the baseline system~$f^\star$.}

\subsection{The Design Meta-Problem for $\delta$ISS Systems}\label{ss:dmp-diss}

Towards operationalizing these conditions and bounds, we again leverage the sequential decomposition approach used to obtain~\eqref{eq:app-error}, but this time applied to $\|\hat\pi_N-\pi^\star\|$. %
Towards that end, consider the following policy error decomposition:
\arxivver{%
\begin{multline}\label{eq:policydecomposition}
\|(\hat\pi_N-\pi^\star)(x)\|\leq\|(\hat\pi_N - \hat\pi_\tau)(x)\| + \|(\hat\pi_\tau - \hat\pi_T)(x)\| \\+ \|(\hat\pi_T-\hat\pi_{\Mcal})(x)\| + \|(\hat\pi_{\Mcal}-\pi^\star)(x)\| \\
\leq L(N,\tau,T,\Mcal)\|x\|,%
\end{multline}}{%
\begin{align}
\|(\hat\pi_N-\pi^\star)(x)\|&\leq\|(\hat\pi_N - \hat\pi_\tau)(x)\| + \|(\hat\pi_\tau - \hat\pi_T)(x)\| + \|(\hat\pi_T-\hat\pi_{\Mcal})(x)\| + \|(\hat\pi_{\Mcal}-\pi^\star)(x)\| \notag\\
&\leq \underbrace{( L_N + L_\tau + L_T + L_\Mcal)}_{=:L(N,\tau,T,\Mcal)}\|x\|,\label{eq:policydecomposition}
\end{align}}%
where $L(N,\tau,T,\Mcal):=( L_N + L_\tau + L_T + L_\Mcal)$.
In the above, we associate a sector bound with gain~$L_\bullet$ to each of the terms in the decomposition, each of which captures the effect of different design decisions. For example, $L_\tau$ captures the error introduced due to temporal discretization. We re-emphasize that, based on the \emph{order} of the decomposition, the gains later in the decomposition are inherently dependent on the former design parameters. For example, for the introduced order in~\eqref{eq:policydecomposition}, $L_\tau$ will also depend on the choices for~$T$ and~$\Mcal$. 
Finally, we distinguish between $\hat{L}(N,\tau,T,\Mcal)$, which holds on states generated by the system evolving under~$\hat\pi_N$, and $L^\star(N,\tau,T,\Mcal)$, which holds on states generated by the baseline system. 

Substituting the policy decomposition~\eqref{eq:policydecomposition} 
into the sector bound argument of Proposition~\ref{prop: small-gain} 
and the performance bound of Lemma~\ref{lem:delta_J} yields a unified 
condition that certifies both stability and performance of the deployed 
policy $\hat\pi_N$, as formalized in the following central result.
\begin{theorem}\label{thm:dmp}
    Fix $\rho\in(0,1)$, and suppose that $f^\star$ is $\delta$ISS with 
    $\delta$ISS-Lyapunov function $V_\delta$ satisfying~\eqref{eq: lyap 
    comp fcns} and~\eqref{eq: diss}, characterized by $\alpha_1, 
    \alpha_2, \sigma \in \Kcal_\infty$ and constant $c > 0$. Further 
    suppose that the cost function satisfies~\eqref{eq:cost_x} 
    and~\eqref{eq:cost_pi}, and let $\mu_\rho \in \Kcal_\infty$ be such 
    that~\eqref{eq:mu} holds. Suppose additionally that the policy 
    decomposition~\eqref{eq:policydecomposition} holds with gains $L^\star$ and $\hat L$, where 
    $L^{\star}(N,\tau,T,\Mcal) := L^{\star}_N + L^{\star}_\tau + 
    L^{\star}_T + L^{\star}_\Mcal$ holds on trajectories 
    of the baseline system $f^\star$, and similarly, $\hat L_\bullet$ holds on trajectories 
    of the deployed system $\hat f$.
   \vspace{-1pt}
    Then, any solution $(N, \tau, T, \Mcal)$ to\arxivver{%
    \begin{subequations}\label{eq:meta-problem-general}%
    \begin{align}
         \minimize{N,\tau,T,\Mcal} \  & \left(L^\star_N + 
         L^\star_\tau + L^\star_T + 
         L^\star_\Mcal\right)\mu_\rho(\|x_0\|) \label{eq:meta-obj-diss}  \\[-3pt]
         \subjto \ & \hat L_N + \hat L_\tau + \hat L_T + \hat L_\Mcal \leq
         {\textstyle\frac{\sigma^{-1}(\rho c\alpha_1(z))}{2z}}, \, \forall z>0
         \label{eq:meta-small-gain-diss} \\[-2pt]
                            & \Tau(N,\tau,T,\Mcal)\leq \tau,
                            \label{eq:meta-compute-diss} 
    \end{align}
    \end{subequations}}{%
    \begin{subequations}\label{eq:meta-problem-general}%
    \begin{align}
         \minimize{N,\tau,T,\Mcal} \quad  & \left(L^\star_N + 
         L^\star_\tau + L^\star_T + 
         L^\star_\Mcal\right)\mu_\rho(\|x_0\|) \label{eq:meta-obj-diss}  \\
         \subjto \quad & \hat L_N + \hat L_\tau + \hat L_T + \hat L_\Mcal \leq
         {\frac{\sigma^{-1}(\rho c\alpha_1(z))}{2z}}, \ \text{for all } z>0
         \label{eq:meta-small-gain-diss} \\
                            & \Tau(N,\tau,T,\Mcal)\leq \tau,
                            \label{eq:meta-compute-diss} 
    \end{align}
    \end{subequations}
    }%
    yields a deployed policy $\hat\pi_N$ that: \emph{(i)} satisfies the 
    feasibility constraints~\eqref{eq:meta-timing}--\eqref{eq:meta-compute} 
    of the DMP, and \emph{(ii)} achieves a performance gap bounded by
    \begin{equation}
        J(\hat\pi_N) - J(\pi^\star) \leq \left(L^\star_N + L^\star_\tau 
        + L^\star_T + L^\star_\Mcal\right)\mu_\rho(\|x_0\|),
    \end{equation}
    with $x_0$ %
    the initial condition of optimal control problem~\eqref{eq:base-problem}.
\end{theorem}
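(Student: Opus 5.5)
The plan is to reduce Theorem~\ref{thm:dmp} to Proposition~\ref{prop: small-gain} and Lemma~\ref{lem:delta_J}, applied to the nominal system $f^\star$ and the perturbation $\Delta:=\hat\pi_N-\pi^\star$. The deployed closed loop is $\hat f(\hat x)=f^\star(\hat x,\Delta(\hat x))$, so it has exactly the form of the perturbed system~\eqref{eq: pert sys} with $u\equiv 0$. Take any feasible (in particular any optimal) $(N,\tau,T,\Mcal)$ for~\eqref{eq:meta-problem-general}, and set $\hat L:=\hat L_N+\hat L_\tau+\hat L_T+\hat L_\Mcal$ and $L^\star:=L^\star_N+L^\star_\tau+L^\star_T+L^\star_\Mcal$. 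By the triangle-inequality decomposition~\eqref{eq:policydecomposition}, we get $\|\Delta(\hat x)\|\le \hat L\|\hat x\|$ on trajectories of $\hat f$ and $\|\Delta(x^\star)\|\le L^\star\|x^\star\|$ on trajectories of $f^\star$.

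First I translate constraint~\eqref{eq:meta-small-gain-diss} into the small-gain condition~\eqref{eq:smallgaincondition}. For $z>0$, the constraint reads $2\hat L z\le \sigma^{-1}(\rho c\alpha_1(z))$. Since $\sigma\in\Kcal_\infty$ is increasing, applying $\sigma$ gives $\sigma(2\hat L z)\le \rho c\alpha_1(z)$. At $z=0$ both sides vanish. Hence Proposition~\ref{prop: small-gain} applies with $L=\hat L$ and $x_0'=x_0$, and yields the bound~\eqref{eq:small-gain}. In particular, $\|\hat x_t-x^\star_t\|\to0$ exponentially in the comparison-function sense.

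Stability of $\hat\pi_N$ then follows, because $f^\star$ is $\delta$ISS and hence its trajectories converge to the equilibrium trajectory, so $\hat x_t$ does as well. I will take the origin as an equilibrium of $f^\star$; this is implicit in the sector-bound setting with $\Delta(0)=0$. This establishes~\eqref{eq:meta-opt}. The same argument shows that the intermediate policies can be read as stabilizing, giving the implicit timing and complexity constraints~\eqref{eq:meta-timing}--\eqref{eq:meta-stable}; this matches the paper's footnote that $T_{\min}$ and $\tau_{\max}$ are characterized implicitly by the small-gain constraint. Constraint~\eqref{eq:meta-compute} is literally~\eqref{eq:meta-compute-diss}. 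This gives part (i).

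For part (ii), I invoke Lemma~\ref{lem:delta_J} with $\hat\pi=\hat\pi_N$ and the two sector constants $\hat L$ and $L^\star$. The cost regularity~\eqref{eq:cost_x}--\eqref{eq:cost_pi} and the integral bound~\eqref{eq:mu} are hypotheses of the theorem, and $x_0^\star=x_0$. The result is $J(\hat\pi_N)-J(\pi^\star)\le (L_\mr{x}+L_\pi L^\star)\mu_\rho(\|x_0\|)$.

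The main obstacle is the mismatch between this and the stated bound $L^\star\mu_\rho(\|x_0\|)$: the constants $L_\mr{x}$ and $L_\pi$ have disappeared. I would handle this by absorbing them, either normalizing the cost so that $L_\mr{x}$ is folded into the gains and $L_\pi\le1$, or reading the objective~\eqref{eq:meta-obj-diss} as defined up to these fixed design-independent constants. Because these constants do not depend on $(N,\tau,T,\Mcal)$, the minimizer of~\eqref{eq:meta-problem-general} is unchanged. I would state this normalization explicitly as part of the proof.

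A second subtlety is that Lemma~\ref{lem:delta_J} states its hypothesis as "the assumptions of Prop.~\ref{prop: small-gain} with $L^\star$", whereas our small-gain check is done with $\hat L$. What the lemma's proof actually needs is the trajectory bound~\eqref{eq:small-gain} on $\|\hat x_t-x^\star_t\|$, which we have obtained from $\hat L$. I would point this out and apply the lemma's argument directly. That argument splits the integrand via $c(x^\star,\hat\pi(x^\star))$, uses~\eqref{eq:cost_x} together with~\eqref{eq:small-gain} and~\eqref{eq:mu} for the first term, and uses~\eqref{eq:cost_pi} with the $L^\star$ sector bound for the second.
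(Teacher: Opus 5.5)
Your proposal is correct and follows the same route as the paper. You rewrite \eqref{eq:meta-small-gain-diss}, via monotonicity of $\sigma$, as the small-gain condition with $\hat L$, invoke Prop.~\ref{prop: small-gain} for stability of $\hat\pi_N$ and the constraints \eqref{eq:meta-timing}--\eqref{eq:meta-opt}, and invoke Lemma~\ref{lem:delta_J} to get $(L_{\mathrm{x}}+L_\pi L^\star)\mu_\rho(\|x_0\|)$; the paper reaches the same expression and then silently reduces it to the stated bound by writing ``$\propto$''. Your explicit treatment of that last step is more careful than the paper's, but no rescaling of the cost can remove the additive $L_{\mathrm{x}}\mu_\rho(\|x_0\|)$ term, so only your second reading is actually available: the bound holds up to design-independent constants, which leaves the minimizer unchanged.
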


Theorem~\ref{thm:dmp} reduces the DMP to a tractable optimization over design-dependent sector bound gains $\hat L, L^\star$, with stability 
enforced via the small-gain constraint~\eqref{eq:meta-small-gain-diss} 
and computational feasibility via~\eqref{eq:meta-compute-diss}. While the 
resulting performance bound is explicit in the initial condition $\|x_0\|$, it enters as a constant scaling factor, and we include it to highlight the tradeoff between the objective~\eqref{eq:meta-obj-diss} and the small-gain constraint~\eqref{eq:meta-small-gain-diss}, controlled by $\rho$. We also note that the framework is intentionally general and hence conservative.  For example, the 
small-gain condition of Prop.~\ref{prop: small-gain} is sufficient but not necessary. We expect tighter bounds can be derived for
specific problem classes by exploiting additional structure. 

A key remaining challenge is deriving expressions for the gains $L^\star,\hat L$ that make the role of the design parameters $(N,\tau,T,\Mcal)$ explicit. In \S\ref{s:lqrcasestudy}, we show this can be done for a 
receding horizon LQR problem.%

\section{Compute Constraint Receding Horizon LQR} %
\label{s:lqrcasestudy}

We apply our proposed framework to a receding horizon LQR problem. While the LQR problem admits an exact solution to the optimal control problem~\eqref{eq:base-problem} that can be implemented online with minimal computational overhead, it nevertheless provides a tractable and analytically transparent setting in which to validate the framework, and naturally connects to existing work on suboptimal MPC. The results of this section can be seen as extending this prior work to jointly optimize for model reduction, temporal discretization, and horizon selection in addition to finite algorithm iterations.  These results additionally lay the groundwork for analyzing more complex schemes, including linear MPC with constraints and sequential linearization-based methods such as iterative LQR.

\subsection{Baseline LQR Controller}\label{ss:baselinelqr}
The baseline LQR problem is:
\arxivver{%
\begin{subequations}\label{eq:lqr-base}
\begin{align}
     \minimize{\pi} \quad & {\textstyle\int_0^\infty} (x_t^\top Qx_t + u_t^\top Ru_t)\mathrm{d}t  \\[-2pt]
     \subjto \quad &  \dot x_t = Ax_t+Bu_t \label{eq:lqr-base:model}\\[-2pt]
     & u_t = \pi(x_t), \ x_0 = \xi, %
\end{align}
\end{subequations}}{%
\begin{subequations}\label{eq:lqr-base}
\begin{align}
     \minimize{\pi} \quad & {\int_0^\infty} (x_t^\top Qx_t + u_t^\top Ru_t)\mathrm{d}t  \\
     \subjto \quad &  \dot x_t = Ax_t+Bu_t \label{eq:lqr-base:model}\\
     & u_t = \pi(x_t), \\
     & x_0 = \xi,
\end{align}
\end{subequations}}%
where $A\in\R^{\dx \times \dx}$, $B\in\R^{\dx \times \du}$, and $Q,R\succ 0$ are positive definite matrices defining a quadratic stage cost~$c(x,u).$ We assume that $(A,B)$ is {controllable}, $(A,Q^{1/2})$ is {observable}, and that $A$ is invertible.\footnote{We assume $Q\succ 0$ and $A$ being invertible for clarity of exposition, but these assumptions can easily be relaxed.\label{fn:invertibleA}}

We use the optimal solution to~\eqref{eq:lqr-base} as the baseline controller, i.e.,~$\pi^\star(x) = K_\star x$, where
\(%
     K_\star := -R^{-1}B^\top P_\star,
\) %
and~$P_\star$ is the unique positive definite solution to the \emph{continuous-time algebraic Riccati equation}~(CARE)
\begin{equation}\label{eq:care}
    A^\top P_\star + P_\star A -P_\star BR^{-1}B^\top P_\star + Q = 0.
\end{equation}

 It is well-known that for LTI systems, $\delta$ISS is equivalent to exponential stability~\cite{angeli2002lyapunov}. The following characterizes a function $\sigma$ and constant $c$ needed to define the small-gain condition~\eqref{eq:meta-small-gain-diss} for the optimal closed-loop LQR system. %

\begin{proposition}\label{prop:lqr-ediss}
    Denoting the optimal closed-loop matrix $A_{\textsc{lqr}}:=A+BK_\star$, the system $\dot x = A_{\textsc{lqr}}x + Bu$ is $\delta$ISS, and $V_\delta(x,x'):=(x-x')^\top  P_\star (x-x')$, where $P_\star$ is the solution to the CARE~\eqref{eq:care}, is a valid $\delta$ISS Lyapunov function with Lie derivative satisfying the exponential dissipative form~\eqref{eq: diss} with $c=\frac{\lmin(Q)}{\lmax(P_\star)}$ and $\sigma(z)=\lmax(R)z^2$.
\end{proposition}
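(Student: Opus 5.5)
The plan is a direct Lyapunov computation on the error dynamics, using the CARE~\eqref{eq:care} to absorb the closed-loop cross terms. Fix two solutions $x,x'$ of $\dot x = A_{\textsc{lqr}}x+Bu$ driven by inputs $u,u'$. Set $e:=x-x'$ and $w:=u-u'$. By linearity,
\[
\dot e = A_{\textsc{lqr}} e + Bw .
\]
Take $V_\delta(x,x') = e^\top P_\star e$. Since $P_\star\succ 0$, the sandwich bound~\eqref{eq: lyap comp fcns} holds with $\alpha_1(z)=\lmin(P_\star)z^2$ and $\alpha_2(z)=\lmax(P_\star)z^2$, both class $\Kcal_\infty$. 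Smoothness of $V_\delta$ is immediate.

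Differentiating along trajectories gives
\[
\dot V_\delta = e^\top(A_{\textsc{lqr}}^\top P_\star + P_\star A_{\textsc{lqr}})e + 2e^\top P_\star B w .
\]
Substitute $K_\star=-R^{-1}B^\top P_\star$ and rewrite the CARE~\eqref{eq:care} in closed-loop form:
\[
A_{\textsc{lqr}}^\top P_\star + P_\star A_{\textsc{lqr}} = -Q - P_\star B R^{-1}B^\top P_\star .
\]
Now introduce $v:=R^{-1}B^\top P_\star e$, so that $B^\top P_\star e = Rv$. Then
\[
\dot V_\delta = -e^\top Q e - v^\top R v + 2v^\top R w .
\]
Completing the square gives $-v^\top Rv + 2v^\top Rw = -(v-w)^\top R(v-w) + w^\top R w \le \lmax(R)\|w\|^2$.

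For the remaining term, use $e^\top Qe \ge \lmin(Q)\|e\|^2 \ge \frac{\lmin(Q)}{\lmax(P_\star)} V_\delta$. Combining the two bounds yields
\[
\dot V_\delta \le -\tfrac{\lmin(Q)}{\lmax(P_\star)}V_\delta + \lmax(R)\|u-u'\|^2 .
\]
This is exactly~\eqref{eq: diss} with $c=\lmin(Q)/\lmax(P_\star)>0$ and $\sigma(z)=\lmax(R)z^2\in\Kcal_\infty$.

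To conclude $\delta$ISS without invoking the compactness caveat of \citet[Thm.~2]{angeli2002lyapunov}, I would apply the comparison lemma directly to the dissipation inequality. This gives
\[
V_\delta(t)\le e^{-ct}V_\delta(0)+\tfrac{\lmax(R)}{c}\sup_{s\le t}\|w_s\|^2 .
\]
The sandwich bounds then yield Definition~\ref{def: delta ISS} with $\beta(r,t)=\sqrt{\kappa(P_\star)}\,e^{-ct/2}r$ and $\gamma(r)=\sqrt{\lmax(R)/(c\,\lmin(P_\star))}\,r$. Alternatively, one can note that $A_{\textsc{lqr}}$ is Hurwitz under controllability and observability, and use the stated equivalence of $\delta$ISS and exponential stability for LTI systems.

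The only step needing care is the cross term $2e^\top P_\star Bw$. The naive Young's inequality bound would introduce an extra free parameter and spoil the clean constants. The key observation is that the $-P_\star BR^{-1}B^\top P_\star$ term produced by the CARE exactly supplies the negative quadratic needed to complete the square in the $R$-weighted norm. This is what produces $\sigma(z)=\lmax(R)z^2$ while leaving the full $-\lmin(Q)\|e\|^2$ decay intact.
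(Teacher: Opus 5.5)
Your proposal is correct and follows essentially the same route as the paper. Both arguments use the error dynamics $\dot e = A_{\textsc{lqr}}e + B(u-u')$ and the quadratic $V_\delta = e^\top P_\star e$, rewrite the CARE in closed-loop form, and then apply the comparison lemma and the sandwich bounds to obtain explicit $\beta$ and $\gamma$. Your completion of the square in the $R$-weighted norm with $v=-K_\star e$ is equivalent to the paper's Cauchy--Schwarz plus unit-weight Young step. It therefore yields the same constants, $c=\lmin(Q)/\lmax(P_\star)$ and $\sigma(z)=\lmax(R)z^2$.
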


The cost function $c(x,K_\star x) = x^\top(Q + K_\star^\top RK_\star)x$ satisfies the regularity conditions required by Lemma~\ref{lem:delta_J}, so long as $x^\star_t,\hat x_t$ remain in a compact domain, which is ensured by stability of both the baseline and deployed policies. %
Further, since $\alpha_1$ and $\alpha_2$ are quadratic, $\mu_\rho$ 
in~\eqref{eq:mu} exists and can be computed analytically as $\mu_\rho(z) := \frac{2\kappa(P_\star)^{1/2}}{c(1-\rho)}\,z,$ where $c = \frac{\lmin(Q)}{\lmax(P_\star)}$ is as 
given in Proposition~\ref{prop:lqr-ediss}. Note again that $\mu_\rho$ is 
monotonically increasing in $\rho$, reflecting the 
robustness-performance tradeoff identified in \S\ref{ss:smallgain}.

\subsection{Design Approximations for Receding Horizon LQR}\label{ss:lqrdesignapprox}

Our strategy is to introduce these design approximations in the following order: model reduction, discretization, and finite horizon approximation (which subsumes solver iterations in this setting).  While related results exist in the literature, e.g., perturbation 
bounds for the algebraic Riccati equation~\cite{konstantinov1993perturbation} and convergence 
rates for the DARE~\cite{hager1976convergence}, these must be adapted to yield explicit analytic expressions for how each 
approximation contributes to the sector bound gains of Theorem~\ref{thm:dmp}.\footnote{All derivations are deferred to  \arxivver{the extended version~\cite[Appendix~B]{extended}.}{Appendix~\ref{appendix:lqr-details}.}} In what follows, $\hat L$ (and corresponding constants~$\hat C_i$) denotes sector bound gains evaluated on 
trajectories of the deployed system $\dot{\hat x}_t = A\hat x_t + 
B\hat\pi_N(\hat x_{\tau\lfloor t/\tau\rfloor})$, and~$L^\star$ (and corresponding constants $C^\star_i$) denotes 
gains evaluated on trajectories of the baseline system $\dot x^\star_t 
= A_{\textsc{lqr}}x^\star_t$. 

\begin{enumerate}[wide, leftmargin=0pt,label={\arxivver{\em}{\bf}\arabic*)},noitemsep,topsep=0pt]
    \item {\arxivver{\em}{\bf}Reduced Order Modeling:}
We consider a linear model and policy class composed of four models: $\Mcal:=\{M_0, M_1,M_2,M_3\}$, each with state-dimension~$n_{\mr{x},i}$. %
Model $M_0$ is the true system, and models $M_1,M_2,M_3$ are obtained using balanced truncation~\cite{antoulas2005approximation}.
We compute the optimal controller~$K_{\textsc{r}_i}$ for the reduced order model $(A_{\textsc{r}_i},B_{\textsc{r}_i})$ and $(Q_{\textsc{r}_i}, R)$ using~\eqref{eq:care}.  We denote the negative spectral abscissa of the closed-loop for model $M_i$ by~$\alpha_i:=-\max_j \Re\{\lambda_j(A_{\textsc{r}_i}+B_{\textsc{r}_i}K_{\textsc{r}_i})\}$. %
For model $M_i$, 
\( \|(\hat\pi_{M_i}-\pi^\star)(x_t)\|\leq \|R^{-1}B(U^\top P_{\textsc{r}_i}U-P_\star)\|\|x_t\|,\) 
where~$U$ is the projection matrix obtained from the reduction method and $P_{\textsc{r}_i}$ is the solution to the CARE with the reduced model, and thus $\hat L_\Mcal(M_i)=L^\star_\Mcal(M_i)=\|R^{-1}B(U^\top P_{\textsc{r}_i}U-P_\star)\|$. %
    
\item {\arxivver{\em}{\bf}Discretization:} 
We bound the errors induced by a {zero-order hold} (ZOH) discretization\footnote{More sophisticated discretization may lead to different scalings with~$\tau$.} with sampling time~$\tau$ by a decomposition of the term $\|(\hat \pi_\tau-\hat \pi_{M_i})(x)\|$. We denote a ZOH signal by~$\xdt$ ($=x_{\tau\lfloor t/\tau\rfloor}$). To this end, let $\hat\pi_\textsc{zoh}(\tau,x_t)=K_{\textsc{r}_i} \xdt$ denote a ZOH approximation to~$\hat \pi_{M_i}$, and $\hat\pi_\tau(x_t)=K_\tau \xdt$ denote the optimal discrete-time LQR controller~$K_\tau$, obtained through the \emph{discrete-time algebraic Riccati difference equation} (DARE), applied to the ZOH state $\xdt$ obtained via a first-order Euler discretization of the $M_i$ dynamics. Then, we have\arxivver{%
\vspace{-.75em}
\begin{multline*}
	\|(\hat\pi_\tau-\hat\pi_{M_i})(x_t)\|=\|(K_\tau-K_{\textsc{r}_i})\xdt + K_{\textsc{r}_i}(\xdt-x_t)\| \\ \qquad \leq \|K_\tau-K_{\textsc{r}_i}\|\|\xdt\| + \|K_{\textsc{r}_i}\|\|\xdt-x_t\|\\[-2em]
\end{multline*}}{%
\[\|(\hat\pi_\tau-\hat\pi_{M_i})(x_t)\|=\|(K_\tau-K_{\textsc{r}_i})\xdt + K_{\textsc{r}_i}(\xdt-x_t)\| \leq \|K_\tau-K_{\textsc{r}_i}\|\|\xdt\| + \|K_{\textsc{r}_i}\|\|\xdt-x_t\|\]
}%
The first term captures the mismatch between the continuous and discrete optimal gains, and the second captures the state quantization error due to ZOH sampling. Both scale as $O(\tau)$ for small enough $\tau$, reflecting the first-order accuracy of Euler discretization, and hence there exists system dependent constants~$C_1^\star$ such that $\|(\hat \pi_\tau-\hat \pi_{M_i})(x^\star_t)\|\leq C_1^\star\tau\|x^\star_t\|$ for all small enough~$\tau$, and similarly for $\hat C_1$ with respect to $\|\hat x_t\|$. %

\item {\arxivver{\em}{\bf}Finite Horizon Approximation:} 
For the receding horizon LQR controller, we set the discrete-time horizon to $N:=\floor{\tfrac{T}{\tau}}$.  The effects of finite horizon approximation and solver iterations can be combined, as computing the solution to the $N$-horizon LQR problem requires taking~$N$ recursions of the \emph{discrete-time algebraic Riccati recursion} (DARR). Let $\hat\pi_T(x_t)=K_{T,\tau}\xdt$ denote the policy computed from a finite horizon approximation of the DARE solution with a horizon set to $N$ and terminal cost $c_T(x_t)=\xdt^\top \tau Q\xdt$. Then, standard convergence rates for the DARE over~$N$ iterations~\cite{hager1976convergence}, along with our discretization bounds outlined above, show that for small enough~$\tau$, there exist system dependent constants $C^\star_2,C^\star_3$ such that $\|(\hat\pi_T-\hat\pi_\tau)(x^\star_t)\|\leq (C^\star_2 + C^\star_3 \tau) e^{-\alpha_i T}\|x^\star_t\|$, and similarly for $\hat C_2, \hat C_3$ with respect to $\|\hat x_t\|$. The exponential convergence rate of the DARR is governed by $\alpha_i$, the least-stable closed-loop pole of $M_i$, highlighting that a more stable closed-loop requires a shorter horizon to achieve the same accuracy. {We deploy~$\hat\pi_T(x_t)=K_{T,\tau}\xdt$ in a receding horizon fashion, recomputing $K_{T,\tau}$ with the DARR at each sampling time.}
    
\item {\arxivver{\em}{\bf}Computation Model:} 
Given the state dimension~$n_{\mr{x},i}$ of~$M_i$, we consider the following computation model. At each discrete time step we solve the DARR over~$N(\leq T/\tau)$ steps, where each step incurs the $4n_{\mr{x},i}^3+4n_{\mr{x},i}^2\du+2n_{\mr{x},i}\du^2+\frac{1}{3}\du^4=:\Phi(M_i)$ flops required to compute the matrix multiplication, addition, and inversions defining the DARR. Assigning a machine-dependent constant of~$\tau_g$ seconds/flop, we define the timing function to be~$\Tau(\tau,T,M_i)= %
\tau_g \tfrac{T}{\tau} \Phi(M_i)$. %

\end{enumerate}

\subsection{Compute Constrained Receding Horizon LQR DMP}

Combining the bounds derived in \S\ref{ss:lqrdesignapprox} via the 
policy decomposition~\eqref{eq:policydecomposition} yields the 
aggregate sector bound:\arxivver{\vspace{-5pt}}{}
\arxivver{%
\begin{multline}\label{eq:pidecomposition_lqr}
	\|(\hat\pi_T-\pi^\star)(x_t)\|\leq \\ \big(C_1\tau + (C_2 + C_3\tau)e^{-\alpha_i T} + L_\Mcal(M_i)\big)\|x_t\|,
\end{multline}}{%
\begin{equation}\label{eq:pidecomposition_lqr}
	\|(\hat\pi_T-\pi^\star)(x_t)\|\leq  \left(C_1\tau + (C_2 + C_3\tau)e^{-\alpha_i T} + L_\Mcal(M_i)\right)\|x_t\|,
\end{equation}
}%
which holds for all $\tau\leq\tau_\star$, with $\tau_\star$ sufficiently small for both $(L^\star_\Mcal,C_i^\star)$ and $(\hat L_\Mcal,\hat C_i)$.
A key observation is that, for fixed $M_i$, the timing  constraint~\eqref{eq:meta-compute-diss} is always active at optimality, as it is never beneficial to use less than the available compute budget. We can therefore set $T = T_\star(\tau, M_i) := 
\frac{\tau^2}{\tau_g \Phi(M_i)}$, which saturates the timing 
constraint~\eqref{eq:meta-compute-diss}, reducing the DMP to a scalar 
optimization over $\tau \in (0, \tau_\star]$ for each $M_i$.
\begin{theorem}\label{thm:dmp-lqr}
	Consider the design problem~\eqref{eq:lqr-base} subject to the design approximations presented in \S\ref{ss:lqrdesignapprox}. Fix a $\rho\in(0,1)$, a model $M_i$, $i\in\{0,1,2,3\},$ let $T_\star(\tau,M_i):=\frac{\tau^2}{\tau_g\Phi(M_i)}$, and $P_\star$ be the solution of~\eqref{eq:care}. Then, for small enough~$\tau_\star$, the $\delta$ISS DMP~\eqref{eq:meta-problem-general} with fixed model $M_i$ is given by:\arxivver{%

    \vspace{-1.5em}
    
	\begin{subequations}\label{eq:lqr-dmp-tau}
	\begin{align}
	    \minimize{0 < \tau \leq \tau_\star}  \  &  \tfrac{C^\star_1}{1-\rho}\tau + \tfrac{C_2^\star+C_3^\star\tau}{1-\rho}e^{-\alpha_i T_\star(\tau,M_i)}+\tfrac{L_\Mcal(M_i)}{1-\rho} \label{eq:lqr-dmp-tau-cost} \\[-3pt]
	     \subjto \ &  \hat C_1\tau + (\hat C_2 + \hat C_3\tau)e^{-\alpha_i T_\star(\tau,M_i)} \notag \\[-3pt] & \hspace{1.3cm} \leq \textstyle\frac{1}{2}\sqrt{\frac{\rho\lmin(Q)}{\cn(P_\star)\lmax(R)}} - L_\Mcal(M_i) \label{eq:lqr-dmp-tau-sgc}
	\end{align}
	\end{subequations}}{%
	\begin{subequations}\label{eq:lqr-dmp-tau}
	\begin{align}
	    \minimize{0 < \tau \leq \tau_\star}  \quad   &  \frac{C^\star_1}{1-\rho}\tau + \frac{C_2^\star+C_3^\star\tau}{1-\rho}e^{-\alpha_i T_\star(\tau,M_i)}+\frac{L_\Mcal(M_i)}{1-\rho} \label{eq:lqr-dmp-tau-cost} \\
	     \subjto \quad &  \hat C_1\tau + (\hat C_2 + \hat C_3\tau)e^{-\alpha_i T_\star(\tau,M_i)} \leq \textstyle\frac{1}{2}\sqrt{\frac{\rho\lmin(Q)}{\cn(P_\star)\lmax(R)}} - L_\Mcal(M_i) \label{eq:lqr-dmp-tau-sgc}
	\end{align}
	\end{subequations}}%
\end{theorem}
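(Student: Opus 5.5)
The plan is to instantiate Theorem~\ref{thm:dmp} with the LQR-specific quantities: Proposition~\ref{prop:lqr-ediss}, the closed-form $\mu_\rho$, the sector bound~\eqref{eq:pidecomposition_lqr}, and the timing model. Then I would eliminate $T$ using the active compute constraint. Concretely, I would substitute three groups of ingredients in order: (a) the small-gain right-hand side of~\eqref{eq:meta-small-gain-diss}; (b) the objective gains $L^\star_\bullet$ together with $\mu_\rho$; and (c) the compute constraint~\eqref{eq:meta-compute-diss}.

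\emph{Step 1: small-gain constraint.} Since $\alpha_1,\alpha_2$ come from $V_\delta(x,x')=(x-x')^\top P_\star(x-x')$, we may take $\alpha_1(z)=\lmin(P_\star)z^2$ and $\alpha_2(z)=\lmax(P_\star)z^2$. Proposition~\ref{prop:lqr-ediss} gives $\sigma(z)=\lmax(R)z^2$ and $c=\lmin(Q)/\lmax(P_\star)$. Then
\[
\frac{\sigma^{-1}(\rho c\alpha_1(z))}{2z}=\frac{1}{2}\sqrt{\frac{\rho\lmin(Q)\lmin(P_\star)}{\lmax(P_\star)\lmax(R)}}=\frac{1}{2}\sqrt{\frac{\rho\lmin(Q)}{\cn(P_\star)\lmax(R)}},
\]
which does not depend on $z$, so the ``for all $z>0$'' quantifier becomes trivial. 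The hatted gains come from~\eqref{eq:pidecomposition_lqr}: $\hat L=\hat C_1\tau+(\hat C_2+\hat C_3\tau)e^{-\alpha_iT}+\hat L_\Mcal(M_i)$, with $\hat L_\Mcal=L_\Mcal$. Moving $L_\Mcal(M_i)$ to the right-hand side gives~\eqref{eq:lqr-dmp-tau-sgc}. Here I rely on the bounds holding for $\tau\le\tau_\star$, with $\tau_\star$ small enough that the constants $C_i^\star,\hat C_i$ are valid.

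\emph{Step 2: objective.} Inserting $L^\star$ from~\eqref{eq:pidecomposition_lqr} and $\mu_\rho(z)=\frac{2\kappa(P_\star)^{1/2}}{c(1-\rho)}z$ into~\eqref{eq:meta-obj-diss} yields the objective multiplied by $\frac{2\kappa(P_\star)^{1/2}\|x_0\|}{c}$. This factor is a positive constant independent of the decision variable, so it can be dropped without changing the minimizer. The result is exactly the cost~\eqref{eq:lqr-dmp-tau-cost}, with its $\frac{1}{1-\rho}$ scalings kept.

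\emph{Step 3: compute constraint.} With $\Tau=\tau_g\frac{T}{\tau}\Phi(M_i)$, constraint~\eqref{eq:meta-compute-diss} reads $T\le\tau^2/(\tau_g\Phi(M_i))=T_\star(\tau,M_i)$. For fixed $\tau$ and $M_i$, both the objective and the left-hand side of the small-gain constraint depend on $T$ only through $e^{-\alpha_iT}$, with nonnegative coefficients. Both are therefore nonincreasing in $T$, using $\alpha_i>0$ since the reduced LQR closed loop is Hurwitz.

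It follows that if any $T$ is feasible for a given $\tau$, then $T_\star$ is feasible, and $T_\star$ attains the smallest objective. Choosing $T=T_\star(\tau,M_i)$ is therefore without loss of optimality, which reduces the problem to the scalar problem over $\tau\in(0,\tau_\star]$. The remaining variable $N=\lfloor T/\tau\rfloor$ is absorbed because the solver iterations coincide with the horizon recursions.

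\emph{Main obstacle.} The delicate point is Step 3's monotonicity and exact saturation. The equivalence needs $N=\lfloor T/\tau\rfloor$, so that the floor does not break the monotonicity of $e^{-\alpha_iT}$. I would handle this by noting that the bounds are stated in terms of $T$ (or by using $T\le N\tau$ rounding conservatively). A second point to check is that the constants $\hat C_i$ stay valid along the deployed trajectories: they require the deployed system to remain stable, which the small-gain constraint itself guarantees via Proposition~\ref{prop: small-gain}. I would argue this is not circular, since the sector bounds are linear in $\|x\|$ and hold globally for $\tau\le\tau_\star$.
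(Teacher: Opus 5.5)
Your proposal is correct and follows essentially the same route as the paper's proof. You compute the $z$-independent small-gain right-hand side from Proposition~\ref{prop:lqr-ediss}, substitute the sector bound~\eqref{eq:pidecomposition_lqr} and the closed-form $\mu_\rho$, and saturate the compute constraint to set $T=T_\star(\tau,M_i)$. Your monotonicity-in-$e^{-\alpha_i T}$ argument justifies the saturation step, which the paper only asserts. Dropping the positive factor $2\cn(P_\star)^{1/2}\|x_0\|/c$ is also a slightly cleaner version of the paper's ``absorbing'' it into the $C_i^\star$, which taken literally would leave the $L_\Mcal(M_i)$ term unscaled.
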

For each $M_i$, problem~\eqref{eq:lqr-dmp-tau} reduces to a 
scalar optimization over the compact interval $(0,\tau_\star]$, which 
can be solved by dense gridding or any scalar optimization routine. 
The overall DMP is then solved by comparing the optima across 
$M_0, M_1, M_2, M_3$, directly identifying the jointly optimal reduced order model, sampling rate, and finite horizon choice.

\subsection{Numerical Example}\label{ss:numerical}

We operationalize Theorem~\ref{thm:dmp-lqr} on a {large, randomly generated LTI system of the form~\eqref{eq:lqr-base:model} with $\dx = 97$ and $\du=1$.\footnote{All code available at \url{https://tinyurl.com/sxvjp8x4}.} The system is then normalized to have one unstable pole at $0.4$, and two slow stable poles at $-0.5$ and $-1.2$. The remaining stable poles are set to be very fast, with $\Re\{\lambda_i\}<-30$. For the control design problem,}  we choose $Q=R=I$, and set $\tau_g = 6.25\cdot10^{-8}$, corresponding to the standard 16MHz clock speed of an Arduino Leonardo. {The orders for the reduced order models $M_i$ are chosen as $(n_{\mr{x},1},n_{\mr{x},2},n_{\mr{x},3})=(6,3,1)$.}

Our analysis has established that $L^\star$ 
and $\hat L$ take the functional form~\eqref{eq:pidecomposition_lqr}, 
with constants $C^\star_i$ and $\hat C_i$ to be identified.  \arxivver{While~\cite[App.~II]{extended}}{Appendix~\ref{appendix:lqr-details}} provides analytical bounds on these constants, they can be overly conservative. We therefore instead
estimate these constants from 10 randomly selected values of $\tau$ by 
computing $\|K_{\tau, T_\star} - K_\star\|$, with appropriate corrections 
for ZOH effects. %
The resulting design-dependent sector bound gains
$L^\star(\tau)$ and $\hat L(\tau)$ are then evaluated on a dense grid  $\tau\in[10^{-4},0.04]$ to solve~\eqref{eq:lqr-dmp-tau} and identify the optimal $\tau^\star$ (and hence $T_i^\star=T_\star(\tau^\star,M_i)$).  We repeat this for each choice of model $M_i$, $i=0,1,2,3$, and compare the resulting designs to identify a jointly optimal design tuple $(\tau^\star, M_i^\star)$, with $T^\star=T_\star(\tau^\star,M^\star_i)$.

\begin{figure}
\centering
\subfloat[Reduced order model~$M_1$ with $n_{\mr{x},1}=6$]{%
	\arxivver{\includegraphics[scale=.95,trim={0 0.0mm 0 0.5mm},clip]{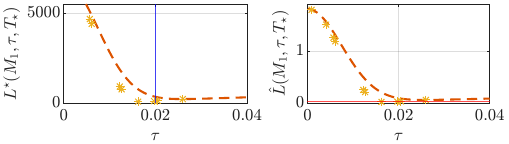}}{\includegraphics[width=0.9\linewidth]{figures/suboptM1}}
	\label{fig:suboptM1}}

\vspace{-.5pt}

\subfloat[Reduced order model~$M_2$ with $n_{\mr{x},2}=3$]{%
\arxivver{\includegraphics[scale=.95,trim={0 0.0mm 0 0.5mm},clip]{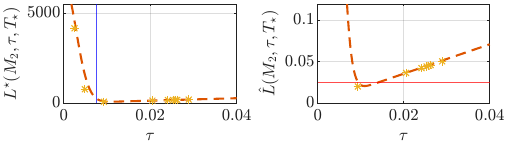}}{\includegraphics[width=0.9\linewidth]{figures/suboptM2}}
\label{fig:suboptM2}}

\vspace{0pt}

\subfloat[Reduced order model~$M_3$ with $n_{\mr{x},3}=1$]{%
\arxivver{\includegraphics[scale=.95,trim={0 0.0mm 0 0.5mm},clip]{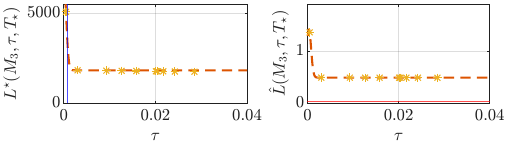}}{\includegraphics[width=0.9\linewidth]{figures/suboptM3}}
\label{fig:suboptM3}}
\caption{Computed values of the design-dependent sector bounds $L^\star$ and $\hat L$ for 10 randomly selected~$\tau$s (\textcolor{myellow}{$\ast$}) for the reduced order models~$M_1,M_2,M_3$, and the fitted function~\eqref{eq:pidecomposition_lqr} of the sector bound (\legendline{morange}\,\legendline{morange}). The right-hand side of the small-gain constraint~\mbox{(\legendline{red})} in~\eqref{eq:lqr-dmp-tau-sgc} indicates the feasible region for~$\tau$. The vertical blue line~\mbox{(\legendline{blue})} depicts the~$\tau$ that minimizes $\|A_\textsc{lqr} x_t-(Ax_t + BK_{T^\star\!\!,\tau}\xdt)\|$ for all~$x_t$.}\label{fig:subopt}

\vspace{-2em}

\end{figure}

We follow the procedure outlined above. {The interplay between compute resources and system dynamics lead to no feasible solution for the full order system $M_0$,  and required setting $\rho = 0.97$ to obtain feasible solutions to the DMP~\eqref{eq:lqr-dmp-tau}} for the remaining models $M_1,M_2,M_3$: the results are shown in Fig.~\ref{fig:subopt}. The left-hand plots show the performance gap sector bound gain $L^\star$, which defines the objective~\eqref{eq:lqr-dmp-tau-cost}, and the right-hand plots show the deployed policy sector bound gain $\hat L$, which must satisfy the small-gain condition~\eqref{eq:lqr-dmp-tau-sgc}. The right-hand side of small-gain condition~\eqref{eq:lqr-dmp-tau-sgc} is indicated by the red horizontal line. 
To provide a baseline, we use nonlinear optimization to compute the (locally) optimal sampling time that minimizes $\|A_\textsc{lqr} x_t-(Ax_t + BK_{T_\star,\tau}\xdt)\|$ for all~$x_t$, which is denoted by a blue vertical line.  One can view this as a DMP that directly minimizes the difference between the deployed and baseline closed-loop systems, rather than the difference between deployed and baseline policies. While appealing, this perspective heavily exploits the linear problem structure and does not obviously extend to the nonlinear setting addressed by Theorem~\ref{thm:dmp}. 

The minimizer of the fitted objective~\eqref{eq:lqr-dmp-tau-cost} closely coincides with the vertical blue line, suggesting that the empirical estimates of~\eqref{eq:lqr-dmp-tau-cost} and~\eqref{eq:lqr-dmp-tau-sgc} are accurate. This confirms that Theorem~\ref{thm:dmp-lqr} successfully identifies near-optimal design parameters, and validates the functional form of the sector bound~\eqref{eq:pidecomposition_lqr} derived \arxivver{in~\cite[App.~II]{extended}.}{in Appendix~\ref{appendix:lqr-details}.} Moreover, the plots in Fig.~\ref{fig:subopt} also demonstrate the tradeoffs across different axes, such as model order and compute resources. The smallest $(M_i,\tau)$-dependent sector bound gain (with respect to the fitted sector bound) is achieved for the optimal design tuple $(M^\star_i,\tau^\star)=(M_2,0.0178)$, with $T^\star=12.4$ seconds. %

We also explore the influence of $\rho$ on the solutions to~\eqref{eq:lqr-dmp-tau}. Fig.~\ref{fig:suboptrho} shows the cost function~\eqref{eq:lqr-dmp-tau-cost} (left plot) and the small-gain constraint~\eqref{eq:lqr-dmp-tau-sgc} (right plot, where green indicates satisfaction of the inequality~\eqref{eq:lqr-dmp-tau-sgc}) over a tight grid of $\tau\times\rho$. As~$\rho$ approaches 1, it becomes easier to satisfy the small-gain condition, indicated by a wider range of feasible~$\tau$s. However, the left plot shows that for larger~$\rho$, the performance gap grows exponentially, highlighting the robustness-performance tradeoff inherent to our framework. \arxivver{The extended version~\cite{extended} includes additional simulation results and studies the effect of increased compute.}{\par In Appendix~\ref{appendix:example}, we provide more background and analysis of this numerical study. In particular, we \emph{i})~elaborate on how we computed~$\|K_{\tau, T_\star} - K_\star\|$, i.e., the data points~(\textcolor{myellow}{$\ast$}) in Fig.~\ref{fig:subopt}, \emph{ii})~provide and compare numerical simulations of the deployed policies corresponding to the minima in Fig.~\ref{fig:subopt}, \emph{iii})~improve the functional form~\eqref{eq:pidecomposition_lqr} by including higher-order terms, and \emph{iv})~study the effect of increased computational resources on the feasibility and cost of the DMP.}

\begin{figure}
    \centering
    \arxivver{\includegraphics[scale=1,trim={0 4.9mm 0 0.2mm},clip]{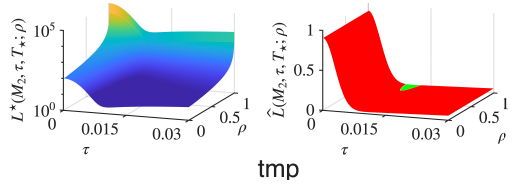}

    \vspace{-3pt}
    }{
    \includegraphics[width=0.9\linewidth,trim={0 4.9mm 0 0mm},clip]{figures/suboptrho}}
    \caption{Computed values of the design-dependent sector bound over a grid of $\tau\times\rho$ for the optimal and deployed systems for system~$M_2$. The region for which the small-gain condition~\eqref{eq:lqr-dmp-tau-sgc} is satisfied is indicated with green.}
    \label{fig:suboptrho}
    
    \vspace{-1.5em}
\end{figure}

\section{Conclusions and Outlook}\label{s:conclusion}\vspace{-0.2em}

We presented a framework for systematically navigating approximation choices in controller design under explicit computational constraints,  formalized as a Design Meta-Problem (DMP) that jointly optimizes controller performance subject to stability, real-time compute, and timing constraints. When the baseline system is $\delta$ISS, bounding closed-loop performance degradation reduces to verifying a design-dependent sector bound on the policy difference, with stability enforced via a small-gain condition. In the receding horizon LQR setting, this yields a scalar optimization over $\tau$ for each candidate model $M_i$, from which a jointly optimal $(\tau^\star, M_i^\star)$ is identified by direct comparison.

Looking ahead, we aim to extend the framework in three directions. 
First, for the LQR example, the sector bound gains $\hat L, L^\star$ were estimated by 
bounding errors in the feedback gains for various choices of $\tau$.  While this is tractable linear feedback policies, it may not be in general, and developing general 
data-driven or empirical methods for fitting these gains, e.g., via sampling-based estimation of $\|(\hat{\pi} - 
\pi^\star)(x)\|$, is an important direction for broader applicability. 
Second, the DMP naturally accommodates additional 
approximations and constraints, such as memory constraints for explicit MPC or sector bound terms capturing model uncertainty from 
data-driven identification. 
Third, and more ambitiously, 
Theorem~\ref{thm:dmp} provides a principled building block for the 
systematic design of layered control 
architectures~\cite{matni2024quantitative}: when a single-layer DMP 
is infeasible, this motivates decomposing the problem into layers, 
each with its own compute resources and DMP, linked via inter-layer 
constraints.

\bibliographystyle{abbrvnat}
\bibliography{refs_cdc26tradeoff}

\begin{thebibliography}{26}
\providecommand{\natexlab}[1]{#1}
\providecommand{\url}[1]{\texttt{#1}}
\expandafter\ifx\csname urlstyle\endcsname\relax
  \providecommand{\doi}[1]{doi: #1}\else
  \providecommand{\doi}{doi: \begingroup \urlstyle{rm}\Url}\fi

\bibitem[Alla and Simoncini(2019)]{alla2019order}
A.~Alla and V.~Simoncini.
\newblock Order reduction approaches for the algebraic {Riccati} equation and
  the {LQR} problem.
\newblock In \emph{Numerical Methods for Optimal Control Problems}, pages
  89--109. Springer, 2019.

\bibitem[Angeli(2002)]{angeli2002lyapunov}
D.~Angeli.
\newblock A {Lyapunov} approach to incremental stability properties.
\newblock \emph{IEEE Trans. Aut. Contr.}, 47\penalty0 (3):\penalty0 410--421,
  2002.

\bibitem[Antoulas(2005)]{antoulas2005approximation}
A.~C. Antoulas.
\newblock \emph{Approximation of large-scale dynamical systems}.
\newblock SIAM, 2005.

\bibitem[Bitmead and Gevers(1991)]{bitmead1991riccati}
R.~R. Bitmead and M.~Gevers.
\newblock Riccati difference and differential equations: Convergence,
  monotonicity and stability.
\newblock In \emph{The Riccati Equation}, pages 263--291. Springer, 1991.

\bibitem[Bonnans and Festa(2017)]{bonnans2017error}
J.~F. Bonnans and A.~Festa.
\newblock Error estimates for the {Euler} discretization of an optimal control
  problem with first-order state constraints.
\newblock \emph{SIAM J. Numerical Analysis}, 55\penalty0 (2):\penalty0
  445--471, 2017.

\bibitem[Bottou and Bousquet(2007)]{bottou2007tradeoffs}
L.~Bottou and O.~Bousquet.
\newblock The tradeoffs of large scale learning.
\newblock \emph{Adv. Neur. Inf. Proc. Sys.}, 20, 2007.

\bibitem[Chen and Francis(1995)]{chen1995optimal}
T.~Chen and B.~A. Francis.
\newblock \emph{Optimal sampled-data control systems}.
\newblock Springer-Verlag, 1995.

\bibitem[Chen et~al.(2025)Chen, Bullo, and Dall'Anese]{chen2025sampled}
Y.~Chen, F.~Bullo, and E.~Dall'Anese.
\newblock Sampled-data systems: Stability, contractivity and single-iteration
  suboptimal mpc.
\newblock \emph{arXiv preprint arXiv:2505.18336}, 2025.

\bibitem[Garg et~al.(2021)Garg, Cosner, Rosolia, Ames, and
  Panagou]{garg2021multi}
K.~Garg, R.~K. Cosner, U.~Rosolia, A.~D. Ames, and D.~Panagou.
\newblock Multi-rate control design under input constraints via fixed-time
  barrier functions.
\newblock \emph{IEEE Contr. Sys. Lett.}, 6:\penalty0 608--613, 2021.

\bibitem[Gr{\"u}ne and Rantzer(2008)]{grune2008infinite}
L.~Gr{\"u}ne and A.~Rantzer.
\newblock On the infinite horizon performance of receding horizon controllers.
\newblock \emph{IEEE Trans. Aut. Contr.}, 53\penalty0 (9):\penalty0 2100--2111,
  2008.

\bibitem[Hager and Horowitz(1976)]{hager1976convergence}
W.~W. Hager and L.~L. Horowitz.
\newblock Convergence and stability properties of the discrete {Riccati}
  operator equation and the associated optimal control and filtering problems.
\newblock \emph{SIAM J. Contr. \& Optim.}, 14\penalty0 (2):\penalty0 295--312,
  1976.

\bibitem[Heemels et~al.(2021)Heemels, Johansson, and Tabuada]{heemels2021event}
W.~P. M.~H. Heemels, K.~H. Johansson, and P.~Tabuada.
\newblock Event-triggered and self-triggered control.
\newblock In \emph{Encyclopedia of Systems \& Control}, pages 724--730.
  Springer, 2021.

\bibitem[Karapetyan et~al.(2025)Karapetyan, Balta, Iannelli, and
  Lygeros]{karapetyan2025closed}
A.~Karapetyan, E.~C. Balta, A.~Iannelli, and J.~Lygeros.
\newblock Closed-loop finite-time analysis of suboptimal online control.
\newblock \emph{IEEE Trans. Aut. Contr.}, 2025.

\bibitem[Kenney and Hewer(1990)]{kenney1990sensitivity}
C.~Kenney and G.~Hewer.
\newblock The sensitivity of the algebraic and differential riccati equations.
\newblock \emph{SIAM J. Contr. Optim.}, 28\penalty0 (1):\penalty0 50--69, 1990.

\bibitem[Khalil(2002)]{khalil2002nonlinear}
H.~K. Khalil.
\newblock \emph{Nonlinear Systems}, volume~3.
\newblock Prentice Hall, 2002.

\bibitem[Konstantinov et~al.(1993)Konstantinov, Petkov, and
  Christov]{konstantinov1993perturbation}
M.~M. Konstantinov, P.~H. Petkov, and N.~D. Christov.
\newblock Perturbation analysis of the discrete {Riccati} equation.
\newblock \emph{Kybernetika}, 29\penalty0 (1):\penalty0 18--29, 1993.

\bibitem[Lahijanian et~al.(2018)Lahijanian, Svorenova, Morye, Yeomans, Rao,
  Posner, Newman, Kress-Gazit, and Kwiatkowska]{lahijanian2018resource}
M.~Lahijanian, M.~Svorenova, A.~A. Morye, B.~Yeomans, D.~Rao, I.~Posner,
  P.~Newman, H.~Kress-Gazit, and M.~Kwiatkowska.
\newblock Resource-performance tradeoff analysis for mobile robots.
\newblock \emph{IEEE Robot. Autom. Lett.}, 3\penalty0 (3):\penalty0 1840--1847,
  2018.

\bibitem[Liao-McPherson et~al.(2020)Liao-McPherson, Nicotra, and
  Kolmanovsky]{liao2020time}
D.~Liao-McPherson, M.~M. Nicotra, and I.~Kolmanovsky.
\newblock Time-distributed optimization for real-time model predictive control:
  Stability, robustness, and constraint satisfaction.
\newblock \emph{Automatica}, 117:\penalty0 108973, 2020.

\bibitem[Matni et~al.(2024)Matni, Ames, and Doyle]{matni2024quantitative}
N.~Matni, A.~D. Ames, and J.~C. Doyle.
\newblock A quantitative framework for layered multirate control: Toward a
  theory of control architecture.
\newblock \emph{IEEE Contr. Sys. Mag.}, 44\penalty0 (3), 2024.

\bibitem[Richter et~al.(2011)Richter, Jones, and
  Morari]{richter2011computational}
S.~Richter, C.~N. Jones, and M.~Morari.
\newblock Computational complexity certification for real-time mpc with input
  constraints based on the fast gradient method.
\newblock \emph{IEEE Trans. Aut. Contr.}, 57\penalty0 (6), 2011.

\bibitem[Rosolia et~al.(2022)Rosolia, Singletary, and Ames]{rosolia2022unified}
U.~Rosolia, A.~Singletary, and A.~D. Ames.
\newblock Unified multirate control: From low-level actuation to high-level
  planning.
\newblock \emph{IEEE Trans. Aut. Contr.}, 67\penalty0 (12):\penalty0
  6627--6640, 2022.

\bibitem[Rubagotti et~al.(2014)Rubagotti, Patrinos, and
  Bemporad]{rubagotti2014stabilizing}
M.~Rubagotti, P.~Patrinos, and A.~Bemporad.
\newblock Stabilizing linear model predictive control under inexact numerical
  optimization.
\newblock \emph{IEEE Trans. Aut. Contr.}, 59\penalty0 (6):\penalty0 1660--1666,
  2014.

\bibitem[Shi et~al.(2026)Shi, Tsiamis, and
  de~Schutter]{shitasos2025suboptimality}
S.~Shi, A.~Tsiamis, and B.~de~Schutter.
\newblock Suboptimality analysis of receding horizon quadratic control with
  unknown linear systems and its applications in learning-based control.
\newblock \emph{IEEE Trans. Aut. Contr.}, 71\penalty0 (3):\penalty0 1422--1437,
  2026.

\bibitem[Srikanthan et~al.(2024)Srikanthan, Karapetyan, Kumar, and
  Matni]{srikanthan2024closed}
A.~Srikanthan, A.~Karapetyan, V.~Kumar, and N.~Matni.
\newblock Closed-loop analysis of admm-based suboptimal linear model predictive
  control.
\newblock \emph{IEEE Contr. Sys. Lett.}, 8:\penalty0 3195--3200, 2024.

\bibitem[Stamouli et~al.(2025)Stamouli, Tsiamis, Morari, and
  Pappas]{stamouli2025layered}
C.~Stamouli, A.~Tsiamis, M.~Morari, and G.~J. Pappas.
\newblock Layered multirate control of constrained linear systems.
\newblock In \emph{Proc. 64th IEEE Conf. Decis. Contr.}, pages 3027--3034,
  2025.

\bibitem[Zardini et~al.(2021)Zardini, Censi, and Frazzoli]{zardini2021codesign}
G.~Zardini, A.~Censi, and E.~Frazzoli.
\newblock Co-design of autonomous systems: From hardware selection to control
  synthesis.
\newblock In \emph{Proc. 2021 Europ. Contr. Conf.}, pages 682--689, 2021.

\end{thebibliography}

\appendix
\section{Proofs}\label{appendix:proofs}
\begin{proof}[Proof of Proposition~\ref{prop: small-gain}]
   We are interested in analyzing $x-x'$.
    By the assumption that $\dot x = f(x,u)$ is $\delta$ISS with $\delta$ISS-Lyapunov function~$V_\delta$ satisfying~\eqref{eq: lyap comp fcns} and inequality~\eqref{eq: diss}, we have that $\dot V_\delta$ satisfies
    \begin{align}
        \dot V_\delta(x,x') & \leq -cV_\delta(x,x') + \sigma(\|\Delta(x')\|) \notag \\
      & \leq -c V_\delta(x,x') + \sigma(2L\|x-x'\|) + \sigma(2L\|x\|)  \notag \\
      & \leq -c V_\delta(x,x') + \rho c \big(V_\delta(x,x')+V_\delta(x,0)  \big)\notag \\ 
      & \leq -c(1-\rho)V_\delta(x,x') + \rho c e^{-ct}V_\delta(x_0,0),\label{eq:VdotleqV+V0}
    \end{align}
    where the second inequality follows from $\|\Delta(x')\|\leq L\|x'\| \leq L\|x-x'\| + L\|x\|$ and $\sigma(a+b)\leq \sigma(2a) + \sigma(2b)$, 
    the third and fourth from the small-gain assumption $\sigma(2Lz)\leq \rho c \alpha_1(z)$ and~\eqref{eq: lyap comp fcns}, respectively. We get to~\eqref{eq:VdotleqV+V0} using a standard comparison lemma argument~\cite[Lem.~3.4]{khalil2002nonlinear} on~\eqref{eq: lyap comp fcns} for $x'=0$. %
    Reapplication of the comparison lemma on~\eqref{eq:VdotleqV+V0} gives
    \begin{align*}
        V_\delta(x_t,x_t') & \leq e^{-c(1-\rho)t}V_\delta(x_0,x_0')  + \rho c V_\delta(x_0,0)\left| {\int_0^t} e^{-c(1-\rho)(t-s)}e^{-cs}\mathrm{d}s \right| \\ 
        & =\rho c e^{-c(1-\rho)t}V_\delta(x_0,0) {\int_0^t} e^{-\rho c s}\mathrm{d}s  \\
        & =  e^{-c(1-\rho)t}\left( 1- e^{-\rho c t} \right)V_\delta(x_0,0) \\
        & \leq e^{-c(1-\rho)t}V_\delta(x_0,0),
    \end{align*}
    where we used that $V_\delta(x_0,x'_0)=0$.  Finally, we conclude:
    \[ \|x_t-x_t'\| \leq \alpha_1^{-1}\big(V_\delta(x_t,x_t')\big) %
    \leq \alpha_1^{-1}\big(e^{-c(1-\rho)t}\alpha_2(\|x_0\|)\big), \]
    which is the desired result.
\end{proof}
\begin{proof}[Proof of Lemma~\ref{lem:delta_J}]
    The following is immediate from~\eqref{eq:cost_x} and~\eqref{eq:cost_pi}, and Prop.~\ref{prop: small-gain}:
    \begin{align*}
        |c(\hat{x}_t,\hat{\pi}(\hat{x}_t))-c(x^\star_t,\pi^\star(x^\star_t))|
        &\leq |c(\hat{x}_t,\hat{\pi}(\hat{x}_t))-c(x^\star_t,\hat{\pi}(x^\star_t))| 
        + |c(x^\star_t,\hat{\pi}(x^\star_t))-c(x^\star_t,\pi^\star(x^\star_t))|\\ 
        &\leq L_\mr{x}\|\hat{x}_t-x^\star_t\| + L_\pi\|(\hat{\pi}-\pi^\star)(x^\star_t)\| 
        \leq L_\mr{x}\|\hat{x}_t-x^\star_t\| + L_\pi L^\star\|x^\star_t\| \\ %
        &\leq L_\mr{x}\alpha_1^{-1}\big(e^{-c(1-\rho)t}\alpha_2(\|x^\star_0\|)\big) + L_\pi L^\star \alpha_1^{-1}\big(e^{-ct}\alpha_2(\|x^\star_0\|)\big) \\
        &\leq {(L_\mr{x} + L_\pi L^\star)\alpha_1^{-1}\big(e^{-c(1-\rho)t}\alpha_2(\|x^\star_0\|)\big),}
    \end{align*}
    {where we used the fact that $\|x^\star_t\|\leq \alpha_1^{-1}(V_\delta(0,x^\star_t))\leq %
    \alpha_1^{-1}(e^{-ct}\alpha_2(\|x^\star_0\|))$.}
    Integrating and leveraging bound~\eqref{eq:mu} then yields the desired bound~\eqref{eq:delta_J_bound}.
\end{proof}

\begin{proof}[Proof of Theorem~\ref{thm:dmp}]
	Consider the decomposition~\eqref{eq:policydecomposition}.  Application of Prop.~\ref{prop: small-gain} and Lemma~\ref{lem:delta_J} allow us to conclude that for a fixed $\rho\in(0,1)$, 
	\begin{equation}\label{eq:bound-on-cost}
		J(\hat\pi_N)-J(\pi^\star)\leq(L_\mr{x} + L_\pi L^\star(N,\tau,T,\Mcal))\mu_\rho(\|x^\star_0\|)\propto L^\star(N,\tau,T,\Mcal)\mu_\rho(x_0),
	\end{equation}
	with $x_0$ as in~\eqref{eq:base-problem}, as long as $\sigma(2\hat{L}(N,\tau,T,\Mcal)z)\leq \rho c\alpha_1(z)$ for all $z\geq 0$. The bound~\eqref{eq:bound-on-cost} on~\eqref{eq:app-error} directly yields~\eqref{eq:meta-obj-diss}, while rewriting the small-gain condition for~$z\neq0$ yields the term in~\eqref{eq:meta-small-gain-diss}. This inequality constraint subsumes the stability and timing constraints~\eqref{eq:meta-timing}--\eqref{eq:meta-opt}, as it restricts~$(N,\tau,T,\Mcal)$ to values that ensure the computed policy~$\hat{\pi}$ is stabilizing. Adding~\eqref{eq:meta-compute} gives~\eqref{eq:meta-problem-general}.
\end{proof}

\begin{proof}[Proof of Proposition~\ref{prop:lqr-ediss}]
    Let $P_\star$ be the solution to~\eqref{eq:care}, and consider the candidate $\delta$ISS Lyapunov function $V(x,x')= e^\top P_\star e$, where we write $e = x-x'$.  Note that $\dot e = A_{\textsc{lqr}}e + Bd$, with $d=u-u'$. 
    It is immediate that
    \[ \lmin(P_\star)\|e\|^2 \leq V(x,x')\leq \lmax(P_\star )\|e\|^2,\]
    and we compute
    \begin{align*}
        \dot V & = \dot e^\top P_\star e + e^\top  P_\star \dot e \\
        &= e^\top (A_{\textsc{lqr}}^\top P_\star + P_\star A_{\textsc{lqr}})e + 2d^\top B^\top P_\star e\\
        &= e^\top (-Q-K_\star^\top RK_\star)e - 2d^\top RK_\star e \\
        & \leq -\|e\|_Q^2-\|K_\star e\|_R^2+2\|d\|_R\|K_\star e\|_R\\
        &\leq-\|e\|_Q^2-\|K_\star e\|_R^2 + \|K_\star e\|_R^2+\|d\|_R^2\\
        &\leq -\lmin(Q)\|e\|^2+\|d\|_R^2 \leq -\tfrac{\lmin(Q)}{\lmax(P_\star )}V+\|d\|_R^2.
    \end{align*}
In the above, the third equality follows from $P_\star $ being the solution to the CARE~\eqref{eq:care} and that $P_\star BR^{-1}B^\top P_\star =K_\star^\top  RK_\star$, the first inequality follows by Cauchy-Schwarz, the second by Young's inequality, and the final from $\|e\|_Q^2\geq\lmin(Q)\|e\|^2$ and $V(x,x')\leq\lmax(P_\star )\|e\|^2.$

Now, set $c=\frac{\lmin(Q)}{\lmax(P_\star )}$.  Then, via a standard application of the comparison lemma, we conclude that 
\[
V(x_t,x'_t)\leq e^{-ct}V(x_0,x'_0)+\tfrac{1}{c}\, {\textstyle\sup_{0\leq \tau \leq t}}\|d_\tau\|^2_R,
\]
and therefore, %
\[
\|e_t\|^2 \leq \cn(P_\star )\Big(e^{-ct}\|e_0\|^2 + \tfrac{\lmax(R)}{\lmin(Q)}{\textstyle\sup_{0\leq \tau \leq t}}\|d_\tau\|^2\Big).
\]
Taking square roots of both sides, and noting that $\sqrt{a+b}\leq\sqrt{a}+\sqrt{b}$ for $a,b>0$, we obtain the desired result.
\end{proof}

\begin{proof}[Proof of Theorem~\ref{thm:dmp-lqr}]
	The bound~\eqref{eq:pidecomposition_lqr} provides us with upper bounds for {$\hat{L}_\bullet^{\vphantom{\star}},L_\bullet^\star$} in~\eqref{eq:meta-problem-general} for all $0<\tau\leq\tau_\star$. With the given compute model, saturating the wall-clock inequality constraint~\eqref{eq:meta-compute-diss}, i.e.,  $\tfrac{1}{\tau}\tau_g T\Phi(M_i)\leq\tau$, is optimal for a fixed $M_i$, and hence we can set $T := T_\star(\tau,M_i)$. By Prop.~\ref{prop:lqr-ediss},~$\alpha_1$, $\alpha_2$, $c$, and~$\sigma$ are defined. Hence, the small-gain condition~\eqref{eq:meta-small-gain-diss} can be explicitly computed as
	\[ \text{right-hand side of~\eqref{eq:pidecomposition_lqr}} \leq \textstyle\frac{1}{2}\sqrt{\frac{\rho\lmin(Q)}{\cn(P_\star)\lmax(R)}}\|x_t\|. \]
	Dividing by $\|x_t\|$ and subtracting~$L_\Mcal(M_i)$  from both sides gives~\eqref{eq:lqr-dmp-tau-sgc}. Finally, we solve the left-hand side of~\eqref{eq:mu}, which gives~$\mu_\rho(z):=\frac{2\cn(P_\star)^{1/2}}{c(1-\rho)}z$. Absorbing~$\frac{2\cn(P_\star)^{1/2}}{c}$ in~$C^\star_i$ gives~\eqref{eq:lqr-dmp-tau-cost}.
\end{proof}

\section{Deriving the Design-Dependent Sector Bounds for the Receding Horizon LQR DMP}
\label{appendix:lqr-details}
This appendix discusses the detailed derivations for the receding horizon LQR design problem under compute constraints, as is presented in \S\ref{s:lqrcasestudy}. We follow the same order of discussion as in \S\ref{ss:lqrdesignapprox}, i.e., we first consider the derivation for the sector bound gain $L_\Mcal$ (\S\ref{appendix:modred}), followed by $L_\tau$ (\S\ref{appendix:discretization}), and $L_T$ (\S\ref{appendix:fhorizon}). Then in \S\ref{ass:costcond}, we show that the conditions on the cost in Lemma~\ref{lem:delta_J} hold. We conclude the appendix in~\S\ref{appendix:example} by providing additional details and extra plots for the numerical example presented in~\S\ref{ss:numerical}.

We first restate the assumptions associated with the baseline system and the LQR design problem at hand:
\begin{enumerate}
    \item The quadratic stage cost~$c(x_t,u_t)=x_t^\top Q x_t + u_t^\top R u_t$ is characterized by positive definite matrices $Q,R\succ 0$,\label{item:posdef}
    \item The true system~\eqref{eq:lqr-base:model} defined by $(A,B)$ is controllable, \label{item:contr}
    \item The $A$ matrix of the true system~\eqref{eq:lqr-base:model} is invertible. \label{item:inv}
\end{enumerate}
While Item~\ref{item:contr} is needed for unique, stabilizing solutions of~\eqref{eq:care}, we assume Item~\ref{item:posdef} for ease of notation. Under suitable regularity assumptions, we can relax this assumption and take the standard $Q\succeq 0$.  
As will be seen later in Prop.~\ref{prop:boundKtau-xi}, assuming Item~\ref{item:inv} significantly clarifies the exposition of the results. This assumption can also easily be lifted through the use of the Jordan form of $A$.

\subsection{Reduced Order Modeling}\label{appendix:modred}

The design-dependent sector bound corresponding to reduced order modeling is characterized by:
\[ \|(\hat \pi_{M_i}-\pi^\star)(x_t)\|\leq L_{\Mcal}(M_i)\|x_t\|  \]
for all $t\geq0$. Many techniques exist for obtaining reduced order models $M_i$, see~\citet{antoulas2005approximation}, most of which can be easily applied in our setting. We also want to highlight that there are specific methods available that consider state reduction with respect to the ARE~\citep{alla2019order}. Most balanced reduction methods (including~\citet{alla2019order}) can brought under the umbrella term of finding matrices $U,V\in\R^{\dx\times n_{\mr{x},i}}$ that satisfy a Petrov-Galerkin type condition,\footnote{In general, this conditions entails that $\mr{image}(V)$ must be $A$-invariant and $\mr{image}(B)\subseteq \mr{image}(V)$, see also~\citet[Chap.~7]{antoulas2005approximation}.} which project the state of the original system~$x_t$ to a state~$\tilde{x}_t$ with a lower dimension~$n_{\mr{x},i}$, where the state-space matrices of the reduced system are given by:
\[ A_\textsc{r} := U^\top A V, \quad B_\textsc{r} := U^\top B, \quad C_\textsc{r} := C V. \]
We assume that $U^\top V = I_{n_{\mr{x},i}}$, which is a required condition for most reduction methods for real systems, such as balanced truncation. Under this assumption, we have $x_t=V\tilde{x}_t$, with $\tilde{x}_t\in\R^{n_{\mr{x},i}}$, and $\tilde{x}_t = U^\top x_t$. In this setting, the reduced order LQR problem is
\begin{subequations}\label{eq:modredproblem}
\begin{align}
    \minimize{\hat\pi_{M_i}} \quad & \int_{0}^\infty \left(\tilde{x}_t^\top Q_\textsc{r}\tilde{x}_t + u^\top_t R u_t\right)\mr{d}t\\
         \subjto \quad & \tilde{x}_t = A_\textsc{r}\tilde{x}_t + B_\textsc{r}u_t, \label{eq:modredproblem:sys} \\ 
         & \tilde{x}_0 = \tilde{\xi} :=U^\top x_0 \\
         & u_t = \hat\pi_{M_i}(x_t) := K_{\textsc{r}_i} U^\top x_t
\end{align}
\end{subequations}
where $A_\textsc{r},B_\textsc{r}$ are defined as above, and $Q_\textsc{r}:= V^\top Q V$.  The resulting optimal control policy is given by $u_t=K_{\textsc{r}_i}\tilde{x}_t$, where $K_{\textsc{r}_i}=-R^{-1}B_\textsc{r}^\top P_{\textsc{r}_i}$, and $P_{\textsc{r}_i}$ is the solution to the CARE defined by $(A_\textsc{r},B_\textsc{r},Q_\textsc{r},R)$.
The following straight-forward result provides a way to compute~$L_{\Mcal}(M_i)$ for a given $U, V$:

\begin{proposition}
    For Petrov-Galerkin type matrices $U, V\in\R^{\dx\times n_{\mr{x},i}}$, let $P_{\textsc{r}_i}$ be the solution to the CARE that solves~\eqref{eq:modredproblem}, then
    \[ \|(\hat \pi_{M_i}-\pi^\star)(x_t)\|\leq \|R^{-1}B^\top(U P_{\textsc{r}_i} U^\top-P_\star)\|\|x_t\|, \]
    i.e., $L_{\Mcal}(M_i) := \|R^{-1}B^\top(U P_{\textsc{r}_i} U^\top-P_\star)\|$.
\end{proposition}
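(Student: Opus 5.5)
The plan is to write both policies as linear state feedbacks on the full state $x_t$ and subtract them. The reduced-order policy is $\hat\pi_{M_i}(x_t) = K_{\textsc{r}_i}U^\top x_t$, and $K_{\textsc{r}_i} = -R^{-1}B_\textsc{r}^\top P_{\textsc{r}_i}$ with $B_\textsc{r} = U^\top B$. Substituting gives
\[
\hat\pi_{M_i}(x_t) = -R^{-1}B^\top U P_{\textsc{r}_i} U^\top x_t .
\]
The baseline policy is $\pi^\star(x_t) = K_\star x_t = -R^{-1}B^\top P_\star x_t$. Their difference is therefore
\[
(\hat\pi_{M_i}-\pi^\star)(x_t) = -R^{-1}B^\top\big(U P_{\textsc{r}_i} U^\top - P_\star\big)x_t .
\]
I will check that the dimensions are consistent: $U\in\R^{\dx\times n_{\mr{x},i}}$ and $P_{\textsc{r}_i}\in\R^{n_{\mr{x},i}\times n_{\mr{x},i}}$, so $UP_{\textsc{r}_i}U^\top$ is $\dx\times\dx$.

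Next I apply submultiplicativity of the operator norm, $\|Mx\|\le\|M\|\|x\|$, to this difference. This gives the claimed sector bound with $L_\Mcal(M_i) = \|R^{-1}B^\top(UP_{\textsc{r}_i}U^\top - P_\star)\|$. The bound holds for every $x_t\in\R^\dx$, not only along trajectories. Hence it serves as both $\hat L_\Mcal$ and $L^\star_\Mcal$, consistent with the main text.

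The only point needing care is the convention for the reduced state. The deployed policy acts on $\tilde x_t = U^\top x_t$, which relies on the Petrov–Galerkin assumption $U^\top V = I$, so that $U^\top$ is the correct left inverse of the lifting $x = V\tilde x$. I also need existence of $P_{\textsc{r}_i}$. This requires stabilizability of $(A_\textsc{r},B_\textsc{r})$ and detectability of $(A_\textsc{r},Q_\textsc{r}^{1/2})$, which balanced truncation preserves under the standing controllability/observability assumptions; I will simply assume these so that the CARE has a unique stabilizing solution.

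With these in place there is no real obstacle; the proof is a short algebraic identity followed by a norm bound. I also note that the main text writes the bound as $\|R^{-1}B(U^\top P_{\textsc{r}_i}U - P_\star)\|$. That form reflects a transposed placement of $U$ and $B$; the appendix form above is the dimensionally correct one.
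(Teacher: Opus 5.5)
Your proposal is correct and follows the paper's proof essentially verbatim: substitute $B_\textsc{r}=U^\top B$ into $K_{\textsc{r}_i}U^\top x_t$ to get $-R^{-1}B^\top UP_{\textsc{r}_i}U^\top x_t$, subtract $K_\star x_t=-R^{-1}B^\top P_\star x_t$, and apply $\|Mx\|\leq\|M\|\|x\|$. You are also right that the main-text expression $\|R^{-1}B(U^\top P_{\textsc{r}_i}U-P_\star)\|$ is a transposition typo and the appendix form is the dimensionally consistent one; the identity itself does not actually use $U^\top V=I$, since the policy is defined directly through $U^\top x_t$.
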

\begin{proof}
    The optimal control policy that solves~\eqref{eq:modredproblem} is given by $u_t=K_{\textsc{r}_i}\tilde{x}_t$, i.e.,
    \[ u_t=K_{\textsc{r}_i}\tilde{x}_t = -R^{-1}B_\textsc{r}^\top P_{\textsc{r}_i} \tilde{x}_t = -R^{-1}B^\top U P_{\textsc{r}_i} \tilde{x}_t = -R^{-1}B^\top U P_{\textsc{r}_i} U^\top x_t.\]
    Hence,
    \begin{multline*}
        \|(\hat\pi_{M_i}-\pi^\star)(x_t)\| = \|(-R^{-1}B^\top U P_{\textsc{r}_i} U^\top - K_\star)x_t \| = \|-R^{-1}B^\top (U P_{\textsc{r}_i} U^\top-P_\star)x_t\| \\\leq \|R^{-1}B^\top (U P_{\textsc{r}_i} U^\top-P_\star)\|\|x_t\|,
    \end{multline*}
    showing the claim.
\end{proof}
\begin{remark}
    As we mentioned before, there are many different ways to obtain the matrices~$U, V$. One of the simplest examples is to consider reduced order models that only consider the unstable modes of the original system. Under the assumption that~$A$ is diagonizable,  consider the eigenvalue decomposition of $A$:
    \[ A\Lambda = \Lambda\begin{bsmallmatrix} \lambda_1 & & \\ & \ddots & \\ & & \lambda_{\dx} \end{bsmallmatrix},\]
    and let $\Re\{\lambda_j\}\geq0$ for $j = 1, \dots, n_{\mr{x},i}$. Then, the reduced order model that only contains the unstable modes can be obtained with $U:= \Lambda^{-1}\begin{bmatrix} I_{n_{\mr{x},i}} & 0 \end{bmatrix}$ and $V := \Lambda\begin{bmatrix} I_{n_{\mr{x},i}} & 0 \end{bmatrix}$. In the numerical example in \S\ref{ss:numerical}, we have used a standard implementation of the balanced reduction method in Matlab to obtain $U,V$.\footnote{See also \url{https://tinyurl.com/sxvjp8x4}.}
\end{remark}

\newcommand{\Krom}{K_\textsc{r}}
\newcommand{\Prom}{P_\textsc{r}}
\subsection{Discretization}\label{appendix:discretization}
We bound the difference in policy due to ZOH discretization by characterizing the design-dependent sector bound:
\begin{equation}\label{eq:appdt:bound}
    \|(\hat\pi_{\tau}-\hat\pi_\Mcal)(x_t)\|\leq L_{\tau}\|x_t\| 
\end{equation}
for all $t\geq0$ and all reduced order models in $M_i\in\Mcal$. To avoid notational clutter, we denote the optimal controller $K_{\textsc{r}_i}$ for reduced order model $M_i$ by $\Krom$. Similarly, we denote $P_{\textsc{r}_i}$ by $\Prom$. We briefly revisit the approach discussed in \S\ref{ss:lqrdesignapprox}-2: we bound the errors induced by a ZOH discretization with sampling time~$\tau$ by a decomposition of the term $\|(\hat \pi_\tau-\hat \pi_{M_i})(x)\|$. We introduced $\hat\pi_\textsc{zoh}(\tau,x_t)=\Krom \xdt$ as the ZOH approximation to~$\hat \pi_{M_i}$ (recall the notation~$\xdt:=x_{\tau\floor{t/\tau}}$), and $\hat\pi_\tau(x_t)=K_\tau \xdt$ as the optimal discrete-time LQR controller~$K_\tau$, obtained through the DARE, applied to the ZOH state $\xdt$ obtained via a first-order Euler discretization of the $M_i$ dynamics. This gave for a given $M_i$:
\[
	\|(\hat\pi_\tau-\hat\pi_{M_i})(x_t)\|\leq\|\hat\pi_\tau(x_t)-\hat\pi_\textsc{zoh}(\tau,x_t)\|+\|\hat\pi_\textsc{zoh}(\tau,x_t)-\hat\pi_{M_i}(x_t)\|\leq \|K_\tau-\Krom\|\|\xdt\| + \|\Krom\|\|\xdt-x_t\|.
\]
Our goal is to identify upper bounds on $\|K_\tau-\Krom\|$ that scale with $\tau$, as well as upper bounds on $\|x_t-\xdt\|$ and $\|\xdt\|$ as a function of $\tau$ and $\|x_t\|$. We show that this is possible under suitable regularity assumptions and for small enough $\tau$. Most of the individual results in this section are coming from perturbation analysis techniques~\cite{konstantinov1993perturbation}, but have never been collectively used to achieve bounds of the form~\eqref{eq:appdt:bound}. We start by identifying upper bounds on $\|K_\tau-\Krom\|$ as a function of~$\tau$.

\subsubsection{Bounding {\normalfont$\|K_\tau-\Krom\|$}}

We first bound $\|K_\tau-\Krom\|$ as a function of $\|P_\tau - \Prom\|$ and $\tau$, i.e.,
\[ \|K_\tau-\Krom\|\leq O(\|P_\tau - \Prom\|) + O(\tau), \]
assuming both are sufficiently small. We then provide a bound on $\|P_\tau-\Prom\|$ as a function of~$\tau$. The following results that show this are standard and can be found in various forms in the literature~\cite{kenney1990sensitivity,konstantinov1993perturbation, bonnans2017error}. We included self-contained proofs of the results for convenience of the reader.

Define $K_\tau$ as the solution to the discrete LQR problem defined over first order Euler discretization of the dynamics and cost objectives in~\eqref{eq:modredproblem:sys}. We drop the $\textsc{r}$-subscript in the system and cost function matrices to reduce notational clutter. The first order Euler discretizations of the reduced order model $M_i$ and corresponding cost objective are parametrized by
\begin{equation}
    A_\tau = I + \tau A, \ B_\tau=\tau B, \ Q_\tau = \tau Q, \ R_\tau = \tau R.
\end{equation}
This means that $K_\tau=-(R_\tau + B_\tau^\top  P_\tau B_\tau)^{-1}B_\tau^\top P_\tau A_\tau$, where $P_\tau$ is the solution to the corresponding DARE.

First, we introduce the following mathematical tools that follow directly from the Von Neumann series:
\begin{lemma}\label{lem:neumannstuff}
    Let $A\in\mathbb{R}^{n\times n}$, with $\|A\|\leq\bar{a}<1$. Then, the following holds:
    \begin{enumerate}[label={(\roman*)}]
        \item $(I+A)^{-1}$ exists, \label{lem:item:1}
        \item $(I+A)^{-1}$ admits the expansion $I-A(I+A)^{-1}$, \label{lem:item:2}
        \item $\|(I+A)^{-1}\|\leq O(1)$. \label{lem:item:3}
    \end{enumerate}
\end{lemma}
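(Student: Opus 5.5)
The plan is to treat this as a direct consequence of the Neumann series $(I+A)^{-1}=\sum_{k\ge0}(-A)^k$, where submultiplicativity of the operator norm and $\|A\|\le\bar a<1$ give absolute convergence. I would prove the three items in order, since each uses the previous one.

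For \ref{lem:item:1}, I will show that the partial sums $S_m:=\sum_{k=0}^m(-A)^k$ form a Cauchy sequence. For $m>\ell$ we have $\|S_m-S_\ell\|\le\sum_{k=\ell+1}^m\bar a^k\le \bar a^{\ell+1}/(1-\bar a)$, which tends to zero. Hence $S_m$ converges in the complete space $\R^{n\times n}$ to a limit $S$. The telescoping identities $(I+A)S_m=S_m(I+A)=I-(-A)^{m+1}$ hold for every $m$, and $\|(-A)^{m+1}\|\le\bar a^{m+1}\to0$. Passing to the limit gives $(I+A)S=S(I+A)=I$, so $(I+A)^{-1}=S$ exists.

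Item~\ref{lem:item:2} is then purely algebraic. Starting from $(I+A)(I+A)^{-1}=I$, I rearrange to $(I+A)^{-1}=I-A(I+A)^{-1}$. The same manipulation applied to $(I+A)^{-1}(I+A)=I$ gives the symmetric form $(I+A)^{-1}=I-(I+A)^{-1}A$.

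For \ref{lem:item:3}, I bound the series termwise: $\|(I+A)^{-1}\|\le\sum_k\|A\|^k\le 1/(1-\bar a)$. Alternatively, from \ref{lem:item:2} one gets $\|(I+A)^{-1}\|\le 1+\bar a\|(I+A)^{-1}\|$, which rearranges to the same estimate. The constant $1/(1-\bar a)$ depends only on $\bar a$ and not on the particular $A$, and this is the precise sense of $O(1)$ I will state. In the later use, $A$ plays the role of a term of order $\tau$, so $\bar a\to0$ as $\tau\to0$ and the bound is uniform for all small enough $\tau$.

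There is no real obstacle here. The only point needing care is making the informal $O(1)$ precise as a uniform constant $1/(1-\bar a)$, so that it can be absorbed into the system-dependent constants when the lemma is applied to $R_\tau+B_\tau^\top P_\tau B_\tau$ and similar expressions.
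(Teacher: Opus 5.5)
Your proposal is correct and matches the paper's proof: existence via the Neumann series, item (ii) by the same rearrangement of $(I+A)(I+A)^{-1}=I$, and item (iii) via $\|(I+A)^{-1}\|\le 1/(1-\|A\|)\le 1/(1-\bar a)$. You just spell out the Cauchy-sequence and telescoping details that the paper compresses into ``the Von Neumann series argument,'' and your reading of $O(1)$ as the uniform constant $1/(1-\bar a)$ is exactly how the paper uses it.
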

\begin{proof}
    Item~\ref{lem:item:1} follows directly from the Von Neumann series argument. Item~\ref{lem:item:2} can be shown as follows:
    \[ (I+A)^{-1} = (I + A - A)(I+A)^{-1} = (I+A)(I+A)^{-1} - A(I+A)^{-1} = I-A(I+A)^{-1}. \]
    Finally for Item~\ref{lem:item:3}, we have that for $\|A\|\leq\bar{a}<1$, $\|(I+A)^{-1}\|\leq \frac{1}{1-\|A\|} \leq \frac{1}{1-\bar{a}}\leq O(1)$.
\end{proof}

With this result, we can derive a bound on $\|K_\tau-\Krom\|$ as a function of $\|P_\tau - \Prom\|$ and $\tau$:

\begin{proposition}\label{prop:k-bounds}
    Define $\Phi_\tau:= R^{-1/2}B^\top P_\tau BR^{-1/2}$. If, for the positive constants $\bar\tau$ and $\bar P$, we have that $\|\tau\Phi_\tau\|\leq\frac{1}{2}$ for all~$\tau\leq\bar\tau$ and for all $\|P_\tau\|\leq \bar P$, then
    it holds that
\begin{equation}\label{eq:K-bound}
\|K_\tau-\Krom\|\leq O(\|P_\tau-\Prom\|) + O(\tau).
\end{equation}
\end{proposition}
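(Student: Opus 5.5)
Substituting the Euler parametrization $A_\tau = I+\tau A$, $B_\tau=\tau B$, $R_\tau=\tau R$ into the discrete gain formula, the factors of $\tau$ separate. Indeed,
\[
K_\tau = -(\tau R + \tau^2 B^\top P_\tau B)^{-1}\tau B^\top P_\tau (I+\tau A) = -(R+\tau B^\top P_\tau B)^{-1}B^\top P_\tau(I+\tau A).
\]
I then factor $R+\tau B^\top P_\tau B = R^{1/2}(I+\tau\Phi_\tau)R^{1/2}$, so that
\[
(R+\tau B^\top P_\tau B)^{-1} = R^{-1/2}(I+\tau\Phi_\tau)^{-1}R^{-1/2}.
\]
Since $\|\tau\Phi_\tau\|\le \tfrac12<1$ for $\tau\le\bar\tau$ and $\|P_\tau\|\le\bar P$, Lemma~\ref{lem:neumannstuff} applies with $\bar a=\tfrac12$. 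By item~(ii), $(I+\tau\Phi_\tau)^{-1} = I - \tau\Phi_\tau(I+\tau\Phi_\tau)^{-1}$. By item~(iii), $\|(I+\tau\Phi_\tau)^{-1}\|\le 2$, and hence $\|\tau\Phi_\tau(I+\tau\Phi_\tau)^{-1}\|\le 2\tau\|\Phi_\tau\|\le 2\tau\|R^{-1}\|\|B\|^2\bar P = O(\tau)$.

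This gives the expansion
\[
K_\tau = -R^{-1}B^\top P_\tau - \tau R^{-1}B^\top P_\tau A + E_\tau,
\qquad
E_\tau := R^{-1/2}\tau\Phi_\tau(I+\tau\Phi_\tau)^{-1}R^{-1/2}B^\top P_\tau(I+\tau A).
\]
Every factor of $E_\tau$ other than $\tau\Phi_\tau(I+\tau\Phi_\tau)^{-1}$ is bounded uniformly using $\|P_\tau\|\le\bar P$ and $\tau\le\bar\tau$, so $\|E_\tau\| = O(\tau)$. The middle term obeys $\|\tau R^{-1}B^\top P_\tau A\|\le \tau\|R^{-1}\|\|B\|\bar P\|A\| = O(\tau)$. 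Finally, because $\Krom = -R^{-1}B^\top\Prom$, the triangle inequality yields
\[
\|K_\tau-\Krom\| \le \|R^{-1}B^\top\|\,\|P_\tau-\Prom\| + O(\tau),
\]
which is exactly~\eqref{eq:K-bound}.

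The only delicate step is ensuring the implicit constants in $O(\cdot)$ are uniform. Their dependence on $P_\tau$ must enter only through $\bar P$; this is precisely why the hypothesis is stated for all $\|P_\tau\|\le\bar P$, so I would write the constants out explicitly in terms of $\bar P$, $\bar\tau$, $\|A\|$, $\|B\|$ and $\|R^{-1}\|$. The remaining manipulations are routine.
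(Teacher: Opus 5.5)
Your proposal is correct and follows essentially the same route as the paper. You cancel the common factor of $\tau$, write $R+\tau B^\top P_\tau B = R^{1/2}(I+\tau\Phi_\tau)R^{1/2}$, apply the expansion $(I+\tau\Phi_\tau)^{-1}=I-\tau\Phi_\tau(I+\tau\Phi_\tau)^{-1}$ from Lemma~\ref{lem:neumannstuff}, and bound the remainders uniformly through $\bar P$ and $\bar\tau$. The only cosmetic difference is that you keep the factor $(I+\tau A)$ intact and expand once, whereas the paper first splits off the $\tau B^\top P_\tau A$ term and bounds it directly using $\|(I+\tau\Phi_\tau)^{-1}\|\le O(1)$.
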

\begin{proof}
    We first observe that
    \begin{equation*}
        R_\tau + B_\tau^\top P_\tau B_\tau = \tau(R+\tau B^\top P_\tau B) = \tau R^{1/2}(I + \tau R^{-1/2}B^\top P_\tau BR^{-1/2})R^{1/2}=:\tau R^{1/2}(I + \tau \Phi_\tau)R^{1/2}.
    \end{equation*}
    For small enough~$\bar\tau$, $\|\tau\Phi_\tau\|\leq\frac{1}{2}<1$, Lemma~\ref{lem:neumannstuff} provides that $(I + \tau \Phi_\tau)^{-1}$ exists. This yields
    \begin{align*}
        K_\tau & = -\tfrac{1}{\tau}R^{-1/2}(I+\tau\Phi_\tau)^{-1}R^{-1/2}B_\tau^\top P_\tau A_\tau \\
        & = -R^{-1/2}(I+\tau\Phi_\tau)^{-1}R^{-1/2}B^\top P_\tau - \tau R^{-1/2}(I+\tau\Phi_\tau)^{-1}R^{-1/2}B^\top P_\tau A. 
    \end{align*}
    Substituting the expansion in Item~\ref{lem:item:2} of Lemma~\ref{lem:neumannstuff} and subtracting $\Krom = -R^{-1}B^\top \Prom$ gives:
    \begin{multline*}
        K_\tau-\Krom = -R^{-1}B^\top P_\tau + \\R^{-1/2} \tau\Phi_\tau (I+\tau\Phi_\tau)^{-1}R^{-1/2}B^\top P_\tau  + R^{-1}B^\top \Prom - \tau R^{-1/2} (I+\tau\Phi_\tau)^{-1}R^{-1/2}B^\top P_\tau A.
    \end{multline*}
    Taking the norm of $K_\tau-\Krom$ and repeatedly applying the triangle inequality and sub-multiplicative property, together with realizing that~$\|\Phi_\tau\|\leq\lmin(R)\|B\|^2\bar{P}\leq O(1)$ yields:
    \[ 
        \|K_\tau-\Krom\| \leq \underbrace{\|R^{-1}B^\top\|}_{\leq\lmin(R)\|B\|}\Big(\|P_\tau -\Prom\| + \tau \big(\underbrace{\|P_\tau\|}_{\leq\bar P}\underbrace{\|\Phi_\tau\|}_{\leq O(1)}\underbrace{\|(I+\tau\Phi_\tau)^{-1}\|}_{\leq O(1)} + \underbrace{\|P_\tau\|}_{\leq\bar P}\underbrace{\|(I+\tau\Phi_\tau)^{-1}\|}_{\leq O(1)}\underbrace{\|A\|}_{\leq O(1)}\big)\Big) 
    \]
    where we employed Item~\ref{lem:item:3} of Lemma~\ref{lem:neumannstuff}. This gives $\|K_\tau-\Krom\|\leq O(\|P_\tau -\Prom\|) + O(\tau)$, which is the desired bound.
\end{proof}
We will now only need to formulate a bound on $\|P_\tau - \Prom\|$ as a function of $\tau$. We achieve this by analyzing the solution to the DARE as~$\tau$ gets close to zero. To that end, define the scaled DARE residual operator as
\begin{equation}\label{eq:DAREres}
     F(\tau,X)\;:=\; \frac{1}{\tau}\left(A_\tau^{\top} X A_\tau + Q_\tau - A_\tau^{\top} X B_\tau
      \bigl(R_\tau + B_\tau^{\top} X B_\tau\bigr)^{-1} B_\tau^{\top} X A_\tau - X \right)
\end{equation}
for all $\tau> 0$. 
We note that $F(\tau,X)$ is jointly smooth in $(\tau,X)$, and satisfies $F(\tau,P_\tau)=0$ for all $\tau > 0$.  We further recall that by a standard limiting argument we have that for $P_0=\Prom$, the residual~\eqref{eq:DAREres} gives: $F(0,P_0):=\lim_{\tau\downarrow 0}F(\tau,P_\tau)=F(0,\Prom)=0$. This follows directly from noting that the zeroth order terms of $F(\tau,\Prom)$ are precisely the CARE for which~$\Prom$ is the unique stabilizing solution. The following result uses this observation to analyze the dependence of~$F$ on~$\tau$.

\begin{lemma}\label{lem:ftaup}
    Let $\Prom$ be the unique, positive definite solution to the CARE, corresponding to~\eqref{eq:modredproblem}. For all $0\leq\tau\leq\bar\tau$ and~$\|P_\tau\|\leq \bar P$, we have that
    \[
    F(\tau,\Prom) = \tau\Gamma_1(\Prom)+\tau^2\Gamma_2(\Prom),
    \]
    where $\|\Gamma_i(\Prom)\|\leq O(1)$, $i=1,2.$
\end{lemma}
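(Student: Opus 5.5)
The plan is to expand $F(\tau,\Prom)$ directly by substituting the Euler-discretized matrices $A_\tau = I+\tau A$, $B_\tau=\tau B$, $Q_\tau=\tau Q$, $R_\tau=\tau R$ into \eqref{eq:DAREres} with $X=\Prom$. We then collect powers of $\tau$: the zeroth-order term will vanish by the CARE, and everything else is either explicitly polynomial in $\tau$ or an $O(1)$ operator times a power of $\tau$.

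\emph{Expanding the quadratic terms.} We have
\[
A_\tau^\top \Prom A_\tau - \Prom = \tau(A^\top\Prom + \Prom A) + \tau^2 A^\top \Prom A.
\]
Dividing by $\tau$ and adding $Q_\tau/\tau = Q$, this contributes $A^\top\Prom+\Prom A+Q$ at order one and $\tau A^\top\Prom A$ at order $\tau$.

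\emph{Expanding the inverse.} As in the proof of Prop.~\ref{prop:k-bounds}, write
\[
R_\tau + B_\tau^\top\Prom B_\tau = \tau R^{1/2}(I+\tau\Phi)R^{1/2}, \qquad \Phi:=R^{-1/2}B^\top\Prom BR^{-1/2}.
\]
For $\tau\le\bar\tau$ we have $\|\tau\Phi\|\le \tfrac12$, because $\|\Prom\|$ is a fixed constant that can be absorbed into $\bar P$. Lemma~\ref{lem:neumannstuff} then gives
\[
(I+\tau\Phi)^{-1} = I - \tau\Phi(I+\tau\Phi)^{-1}.
\]
Hence the correction term, divided by $\tau$, equals
\[
\tfrac{1}{\tau}A_\tau^\top\Prom B_\tau\,\tfrac{1}{\tau}R^{-1/2}(I+\tau\Phi)^{-1}R^{-1/2}\,B_\tau^\top\Prom A_\tau
= A_\tau^\top \Prom B R^{-1/2}(I+\tau\Phi)^{-1}R^{-1/2}B^\top \Prom A_\tau .
\]
Substituting the expansion of the inverse and $A_\tau=I+\tau A$ yields the leading term $\Prom BR^{-1}B^\top\Prom$. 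The remainder consists of terms carrying at least one factor of $\tau$, each of the form $\tau$ or $\tau^2$ multiplied by products of $A$, $\Prom$, $B$, $R^{-1/2}$, $\Phi$ and $(I+\tau\Phi)^{-1}$.

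\emph{Grouping by order.} The zeroth-order part of $F(\tau,\Prom)$ is $A^\top\Prom+\Prom A+Q-\Prom BR^{-1}B^\top\Prom=0$ by the CARE. All remaining terms are then grouped into $\tau\Gamma_1(\Prom)+\tau^2\Gamma_2(\Prom)$. We define $\Gamma_1$ to collect the terms with a single explicit $\tau$, namely $A^\top\Prom A$, $-(A^\top\Prom BR^{-1}B^\top\Prom+\Prom BR^{-1}B^\top\Prom A)$, and the term $\Prom BR^{-1/2}\Phi(I+\tau\Phi)^{-1}R^{-1/2}B^\top\Prom$. $\Gamma_2$ collects everything else. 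Any residual $\tau^3$ terms are absorbed into $\Gamma_2$ by factoring out $\tau^2$ and using $\tau\le\bar\tau$. Note that $\Gamma_i$ may retain a mild dependence on $\tau$ through $(I+\tau\Phi)^{-1}$; this is acceptable, since only uniform $O(1)$ bounds are claimed.

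\emph{Uniform bounds.} Finally, each $\|\Gamma_i\|$ is bounded by submultiplicativity, using $\|(I+\tau\Phi)^{-1}\|\le 2$ from Item~\ref{lem:item:3} of Lemma~\ref{lem:neumannstuff}, $\|\Prom\|\le\bar P$, and the fixed norms of $A$, $B$ and $R^{-1}$. This gives bounds uniform in $\tau\in[0,\bar\tau]$.

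The main obstacle I expect is bookkeeping rather than substance. The identity must hold exactly, not merely asymptotically, which forces the $\tau$-dependent inverse to be kept inside the $\Gamma_i$. The case $\tau=0$ must also be handled consistently with the limiting definition of $F(0,\cdot)$; it is immediate, since both sides vanish there.
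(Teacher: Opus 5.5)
Your proposal is correct and follows essentially the same route as the paper. Both expand $F(\tau,P_{\textsc{r}})$ with the Euler matrices, rewrite the inverse via Item~\ref{lem:item:2} of Lemma~\ref{lem:neumannstuff}, cancel the zeroth-order part using the CARE, and bound the (still $\tau$-dependent) $\Gamma_i$ uniformly on $[0,\bar\tau]$. The only difference is bookkeeping: you expand the inverse in every term and fold the resulting $\tau^3$ term into $\Gamma_2$, whereas the paper keeps $(R+\tau B^\top P_{\textsc{r}} B)^{-1}$ unexpanded in the cross and $\tau^2$ terms (your plus sign on the $\Phi(I+\tau\Phi)^{-1}$ term is correct; the paper's minus sign there is a harmless slip).
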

\begin{proof}
Algebraic expansion of $F(\tau,\Prom)$ gives
\begin{multline*}
    F(\tau,\Prom ) = A^\top \Prom + \Prom A + Q + \tau A^\top \Prom  A  \\ - \Prom B(R+\tau B^\top \Prom  B)^{-1}B^\top \Prom  - \tau A^\top \Prom  B (R+\tau B^\top \Prom B)^{-1}B^\top \Prom  \\ - \tau \Prom  B (R+\tau B^\top \Prom  B)^{-1}B^\top \Prom A - \tau^2 A^\top \Prom B (R+\tau B^\top \Prom B)^{-1}B^\top \Prom A.
\end{multline*}
Using Item~\ref{lem:item:2} in Lemma~\ref{lem:neumannstuff}, we can write
\[
    F(\tau,\Prom ) = A^\top \Prom  + \Prom A - \Prom BR^{-1}B^\top \Prom  + Q + \tau\Gamma_1(\Prom ) + \tau^2\Gamma_2(P),
\]
where
\begin{align*}
    \Gamma_1(\Prom ) & = A^\top \Prom  A - A^\top \Prom B (R+\tau B^\top \Prom B)^{-1}B^\top \Prom - \Prom B (R+\tau B^\top \Prom B)^{-1}B^\top\Prom A \\ & \qquad - \Prom BR^{-1}B^\top \Prom B (R+\tau B^\top\Prom B)^{-1}B^\top \Prom  \\
    \Gamma_2(\Prom ) & = -A^\top\Prom B (R+\tau B^\top\Prom B)^{-1}B^\top\Prom A
\end{align*}
We recall that the $O(1)$ term vanishes as it is precisely the CARE that $\Prom$ satisfies. Finally, Lemma~\ref{lem:neumannstuff} and the results of Prop.~\ref{prop:k-bounds} allow us to conclude that for small enough $\bar\tau$, $(R+\tau B^\top\Prom B)^{-1}\leq O(1)$ for all $0\leq\tau\leq\bar\tau$. Hence, we conclude~$\|\Gamma_i(\Prom)\|\leq O(1)$, $i=1,2$.
\end{proof}
The results of this lemma can now be used to formulate a bound on the directional derivative of~\eqref{eq:DAREres}. This bound, which we derive in the lemma below, allows us to finally formulate the bound on~$\|P_\tau-\Prom\|$.

Let $A_\textsc{cl}:=A_\textsc{r} + B_\textsc{r}\Krom$ and recall that $\alpha_i:=-\max_j\Re\{\lambda_j(A_\textsc{cl})\}$ for the reduced order model~$M_i$.
Then, a direct computation shows that the partial Fr{\'e}chet derivative of $F(\tau,X)$ with respect to $X$ in direction $H$ is given by
 \begin{multline}\label{eq:frechet}
 \DXF{\tau}{X}{H}
=\;
\frac{1}{\tau}\left(A_\tau^{\top} H A_\tau - H - \bigl(A_\tau^{\top} H B_\tau\bigr)
     \Sig^{-1}
     \bigl(B_\tau^{\top} X A_\tau \bigr) - \bigl(A_\tau^{\top} X B_\tau \bigr)
     \Sig^{-1}
     \bigl(B_\tau^{\top} H A_\tau\bigr)\right.\\ 
     \left. + \bigl(A_\tau^{\top} X B_\tau \bigr)
     \Sig^{-1}
     \bigl(B_\tau^{\top} H B_\tau\bigr)
     \Sig^{-1}
     \bigl(B_\tau^{\top} X A_\tau \bigr)\right),  
 \end{multline}
 where $\Sig := R_\tau + B_\tau^{\top} X B_\tau $.
\begin{lemma}\label{lem:lyap-inv}
 $\DXF{0}{\Prom}{H}$ is the Lyapunov operator induced by the closed-loop LQR system $A_{\textsc{cl}}$, i.e.,
 \begin{equation}\label{eq:lyap-op}
 \DXF{0}{\Prom}{H} = A_{\textsc{cl}}^\top H + HA_{\textsc{cl}}=:\Lcal(H).
 \end{equation}
 Furthermore, $\Lcal$ is invertible and satisfies $\|\Lcal^{-1}\|\leq O(\frac{1}{\alpha_i}).$%
\end{lemma}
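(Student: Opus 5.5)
The plan is to substitute the Euler data $A_\tau = I+\tau A$, $B_\tau=\tau B$, $R_\tau=\tau R$, with $X=\Prom$, into the Fréchet derivative~\eqref{eq:frechet}. I will then expand in powers of $\tau$ and let $\tau\downarrow 0$.

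First, write $\Sig = \tau R^{1/2}(I+\tau\Phi)R^{1/2}$ with $\Phi:=R^{-1/2}B^\top \Prom BR^{-1/2}$, exactly as in the proof of Prop.~\ref{prop:k-bounds}. Lemma~\ref{lem:neumannstuff} then gives $\Sig^{-1} = \tfrac{1}{\tau}R^{-1} + O(1)$ for small $\tau$. The terms of~\eqref{eq:frechet} behave as follows.
\begin{itemize}
\item The first two give $\tfrac1\tau(A_\tau^\top HA_\tau - H) = A^\top H + HA + \tau A^\top HA$.
\item Each cross term carries the factor $\tfrac1\tau\cdot\tau\cdot\tau\cdot\Sig^{-1}$, which is $O(1)$. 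Keeping only the leading part, $-\tfrac1\tau (A_\tau^\top H B_\tau)\Sig^{-1}(B_\tau^\top \Prom A_\tau)$ becomes $-HBR^{-1}B^\top\Prom + O(\tau) = HB\Krom + O(\tau)$. Symmetrically, the other cross term gives $\Krom^\top B^\top H + O(\tau)$.
\item The last term has prefactor $\tfrac1\tau\cdot\tau^4\cdot\Sig^{-2}=O(\tau)$, so it vanishes in the limit.
\end{itemize}
Collecting the surviving terms, $D_XF(0,\Prom)[H] = (A+B\Krom)^\top H + H(A+B\Krom) = A_\textsc{cl}^\top H + HA_\textsc{cl}$. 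Since $F$ is jointly smooth and $F(0,\cdot)$ is defined by the limit, this limit is the partial derivative at $\tau=0$. This is also consistent with the zeroth-order terms of $F(\tau,\Prom)$ being the CARE (cf.\ Lemma~\ref{lem:ftaup}), whose derivative in $X$ is the Lyapunov operator of the closed loop.

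For invertibility, $\Prom$ is the stabilizing CARE solution, so $A_\textsc{cl}$ is Hurwitz with spectral abscissa $-\alpha_i<0$. The eigenvalues of $\Lcal$ are $\lambda_j+\lambda_k$, none of which is zero, so $\Lcal$ is invertible. Moreover, its inverse is given by the integral representation $\Lcal^{-1}(G) = -\int_0^\infty e^{A_\textsc{cl}^\top s}Ge^{A_\textsc{cl}s}\,\mathrm{d}s$.

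For the norm bound, I would use $\|e^{A_\textsc{cl}s}\|\le M e^{-\alpha_i s}$. This gives $\|\Lcal^{-1}(G)\|\le M^2\|G\|\int_0^\infty e^{-2\alpha_i s}\mathrm{d}s = \tfrac{M^2}{2\alpha_i}\|G\|$, i.e., $O(1/\alpha_i)$.

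The main obstacle is the constant $M$. If $A_\textsc{cl}$ is diagonalizable, $M=\cn(W)$ for the eigenvector matrix $W$, and the bound holds with an $\alpha_i$-independent constant. In the non-diagonalizable case one only has $M e^{-(\alpha_i-\epsilon)s}$, or polynomial prefactors. I would therefore state the bound with the system-dependent constant $M$ absorbed into $O(\cdot)$, in the same spirit as the other $O(1)$ constants here. Alternatively, I would use a Lyapunov-based constant: taking $P_\textsc{cl}$ with $A_\textsc{cl}^\top P_\textsc{cl}+P_\textsc{cl}A_\textsc{cl}\preceq -2\alpha' P_\textsc{cl}$ yields $M=\cn(P_\textsc{cl})^{1/2}$.
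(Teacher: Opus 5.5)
Your proposal is correct and follows the paper's proof. You take $\tau\searrow 0$ in the Fréchet derivative using $K_{\textsc{r}}=-R^{-1}B^\top P_{\textsc{r}}$, get invertibility of $\Lcal$ from stability of $A_{\textsc{cl}}$ via the integral representation, and bound $\|e^{A_{\textsc{cl}}s}\|$ exponentially to obtain $\|\Lcal^{-1}\|\le C^2/(2\alpha_i)$. Your version is more careful in three places: it writes out the order-by-order expansion, it includes the minus sign in the integral formula for $\Lcal^{-1}$ that the paper omits, and it correctly notes that $\|e^{A_{\textsc{cl}}s}\|\le C e^{-\alpha_i s}$ (which the paper attributes to Gelfand's formula) in general only holds with $\alpha_i-\epsilon$ when $A_{\textsc{cl}}$ has a nontrivial Jordan block at its spectral abscissa.
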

\begin{proof}
Equation~\eqref{eq:lyap-op} follows from evaluating~\eqref{eq:frechet} as $\tau\searrow 0$ and recalling that $\Krom = -R^{-1}B^\top \Prom$.
As $A_{\textsc{cl}}$ is stable, $\Lcal$ is invertible, with 
\[
 \Lcal^{-1}(S)=\int_{0}^\infty e^{A_{\textsc{cl}}^\top s}Se^{A_{\textsc{cl}}s}\mathrm{d}s.
\]
   We recall that by Gelfand's formula, there exists some $C_{\textsc{cl}}>0$ such that $||e^{A_{\textsc{cl}}s}||\leq C_{\textsc{cl}}e^{-\alpha_i s}$.  Hence,
\[  \| \Lcal^{-1}(S)\|\leq \|S\|C_{\textsc{cl}}^2\int_0^\infty e^{-2\alpha_i s}\mathrm{d}s=\|S\|\frac{C_{\textsc{cl}}^2}{2\alpha_i},
\]
 and $\|\Lcal^{-1}\|\leq \frac{C_{\textsc{cl}}^2}{2\alpha_i}\leq O(\tfrac{1}{\alpha_i}).$
\end{proof}
These results allow us to guarantee the following bound on $\|P_\tau - \Prom\|$.

\begin{proposition}\label{prop:p-bounds}
    There exists $\tau'>0$ such that for all $0\leq \tau\leq \tau'$, $\|P_\tau - \Prom\|\leq O(\tau).$
\end{proposition}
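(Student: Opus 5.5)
The natural route is the implicit function theorem applied to the scaled DARE residual $F(\tau,X)$ in~\eqref{eq:DAREres} at the point $(0,\Prom)$. We know $F(0,\Prom)=0$. $F$ is jointly smooth in $(\tau,X)$ near $(0,\Prom)$, since for small $\tau$ the inverse $(R+\tau B^\top X B)^{-1}$ exists and is uniformly bounded by Lemma~\ref{lem:neumannstuff}. By Lemma~\ref{lem:lyap-inv}, $D_XF(0,\Prom)=\Lcal$ is invertible with $\|\Lcal^{-1}\|\leq O(1/\alpha_i)$. The IFT then gives a neighbourhood $[0,\tau')\times\mathcal{B}(\Prom,r)$ and a unique smooth branch $\tau\mapsto X(\tau)$ with $F(\tau,X(\tau))=0$ and $X(0)=\Prom$.

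The first step is to identify this branch with the stabilizing DARE solution $P_\tau$. The stabilizing solution is the unique one making $A_\tau+B_\tau K$ Schur. Along the IFT branch, $K(\tau)\to\Krom$ and $A_\tau+B_\tau K(\tau)=I+\tau(A_\textsc{cl}+O(\tau))$. Its eigenvalues are $1+\tau\lambda_j(A_\textsc{cl})+O(\tau^2)$, which lie inside the unit disc for small $\tau$ because $\Re\lambda_j\le-\alpha_i<0$. Hence $X(\tau)=P_\tau$. This also justifies the standing assumption $\|P_\tau\|\le\bar P$ used in Lemma~\ref{lem:ftaup} and Proposition~\ref{prop:k-bounds}.

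To obtain an explicit $O(\tau)$ estimate rather than mere smoothness, I would avoid differentiating the branch and instead run a quantitative Newton/contraction argument. Write $P_\tau=\Prom+E$ and expand
\[
0=F(\tau,\Prom+E)=F(\tau,\Prom)+D_XF(0,\Prom)[E]+\big(D_XF(\tau,\Prom)-D_XF(0,\Prom)\big)[E]+\mathcal{R}(\tau,E),
\]
with $\|\mathcal{R}\|\le O(\|E\|^2)$ and $\|D_XF(\tau,\Prom)-D_XF(0,\Prom)\|\le O(\tau)$; both follow by smoothness and by expanding~\eqref{eq:frechet} with Lemma~\ref{lem:neumannstuff}. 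Applying $\Lcal^{-1}$ turns this into the fixed-point equation $E=\mathcal{G}(E)$. Lemma~\ref{lem:ftaup} gives $\|F(\tau,\Prom)\|\le O(\tau)$, so $\|\mathcal{G}(E)\|\le O(\tfrac{1}{\alpha_i})\big(O(\tau)+O(\tau)\|E\|+O(\|E\|^2)\big)$. On the ball $\|E\|\le c\tau$, with $c\propto 1/\alpha_i$ and $\tau$ small, $\mathcal{G}$ is a contraction mapping the ball into itself. Its unique fixed point is the IFT branch, so $\|P_\tau-\Prom\|\le O(\tau)$ with constant of order $\|\Gamma_1\|/\alpha_i$.

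The main obstacle is the uniform control of the second-order remainder and of $D_XF(\tau,\cdot)-D_XF(0,\cdot)$ near $\tau=0$. The $1/\tau$ prefactor in~\eqref{eq:DAREres} and~\eqref{eq:frechet} must be shown to cancel exactly, which is done by expanding $A_\tau=I+\tau A$, $B_\tau=\tau B$ and pulling a factor $\tau$ out of $\Sig$ as in Proposition~\ref{prop:k-bounds}. After this rescaling every term is a polynomial in $\tau$ times bounded inverses, which gives the required Lipschitz constants.
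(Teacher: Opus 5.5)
Your proposal is correct and shares the paper's skeleton: the implicit function theorem at $(0,P_\textsc{r})$, with invertibility of $D_XF(0,P_\textsc{r})=\mathcal{L}$ supplied by Lemma~\ref{lem:lyap-inv}. The paper then Taylor-expands the branch, $P_\tau=P_\textsc{r}+\varphi'(0)\tau+O(\tau^2)$, with $\varphi'(0)$ written through $\mathcal{L}^{-1}(\Gamma_1(P_\textsc{r}))$. (Strictly, $\varphi'(0)=-[D_XF(0,P_\textsc{r})]^{-1}\partial_\tau F(0,P_\textsc{r})$, a sign that is harmless for the norm bound.) It also simply asserts that this branch is $P_\tau$.

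You close the argument differently in two ways. First, you identify the branch with the stabilizing DARE solution by showing $A_\tau+B_\tau K(\tau)$ is Schur and using uniqueness of stabilizing solutions. This fills a step the paper leaves implicit: that $P_\tau$ actually lies in the IFT neighbourhood rather than being some other DARE solution. Second, you replace the Taylor step with a simplified-Newton contraction on the ball $\|E\|\le c\tau$. That gives an explicit constant and an explicit threshold $\tau'$ in terms of $\|\mathcal{L}^{-1}\|$, $\|\Gamma_1\|$ and the Lipschitz constants of $D_XF$, whereas the paper's remainder and $\tau'$ are left implicit.

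The price is some redundancy. Once the branch is $C^1$ on $[0,\tau']$ and equals $P_\tau$, the mean value theorem already gives $O(\tau)$, so smoothness is enough for the statement as posed, just with a non-explicit constant. Conversely, your contraction plus the stabilizing argument, run on symmetric matrices, makes the IFT unnecessary. Either half alone proves the proposition.

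One small imprecision: the eigenvalues of $I+\tau(A_\textsc{cl}+O(\tau))$ are $1+\tau\lambda_j+O(\tau^2)$ only when $A_\textsc{cl}$ is nondefective; for Jordan blocks of size $m$ the error is $O(\tau^{1+1/m})$. You only need continuity of eigenvalues, though. That gives $\mathrm{Re}\,\mu_j(\tau)\le-\alpha_i/2$, hence $|1+\tau\mu_j(\tau)|<1$ for small $\tau$, so nothing breaks.
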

\begin{proof}
    Recall that $F(\tau,P_\tau) = 0$. Using the definition of $F$ and Lemma~\ref{lem:lyap-inv}, we can now invoke the implicit function theorem to define a unique function $\varphi$ with $\varphi(\tau)=P_\tau$ and $F(\tau,\varphi(\tau)) = 0$ for all $0\leq\tau\leq\tau'$. Taking the Taylor approximation of $\varphi$ allows us to conclude that there exists a $\tau'>0$ such that $P_\tau = \Prom + [D_X F(0,\Prom)]^{-1}\partial_\tau F(0,\Prom)\tau + O(\tau^2)=\Prom + \Lcal^{-1}(\Gamma_1(\Prom))\tau+O(\tau^2)$ for all $0\leq \tau \leq \tau',$ i.e., such that $\|P_\tau-\Prom\|\leq \|\Lcal^{-1}\|\|\Gamma_1(\Prom)\|\tau + O(\tau^2)\leq O(\tau)$.
\end{proof}
We can now finally combine Prop.~\ref{prop:k-bounds} and Prop.~\ref{prop:p-bounds} to obtain the bound on $\|K_\tau - \Krom\|$ as a function of $\tau$.

\begin{proposition}\label{prop:final-kbounds}
    There exists a $\tau_{\max}>0$ such that for all $0\leq\tau\leq \tau_{\max}$ we have that
    $$
    \|K_\tau-\Krom\|\leq O(\tau).
    $$
\end{proposition}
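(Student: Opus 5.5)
The plan is to chain Prop.~\ref{prop:k-bounds} and Prop.~\ref{prop:p-bounds}, taking care that both propositions' hypotheses hold on a common interval $[0,\tau_{\max}]$.

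First, I fix the constants. Prop.~\ref{prop:p-bounds} gives a $\tau'>0$ and a constant $C_P$ such that $\|P_\tau-\Prom\|\leq C_P\tau$ for all $0\leq\tau\leq\tau'$. This immediately yields a uniform bound
\[
\|P_\tau\|\leq \|\Prom\|+C_P\tau' =: \bar P
\]
on that interval. With this $\bar P$ I then need a $\bar\tau$ such that $\|\tau\Phi_\tau\|\leq\tfrac12$ for all $\tau\leq\bar\tau$. Since
\[
\|\Phi_\tau\|\leq \|R^{-1}\|\,\|B\|^2\,\bar P,
\]
it suffices to take $\bar\tau:=\big(2\|R^{-1}\|\|B\|^2\bar P\big)^{-1}$. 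I set $\tau_{\max}:=\min\{\tau',\bar\tau\}$.

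On $[0,\tau_{\max}]$ both hypotheses of Prop.~\ref{prop:k-bounds} hold, so
\[
\|K_\tau-\Krom\|\leq C_K\|P_\tau-\Prom\|+C_K'\tau \leq (C_KC_P+C_K')\tau,
\]
which is $O(\tau)$. Here the constants $C_K,C_K'$ depend only on $R$, $B$, $A$ and $\bar P$, as in the proof of Prop.~\ref{prop:k-bounds}. At $\tau=0$ I interpret $K_0=\Krom$ by continuity, so the bound holds trivially there.

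The main obstacle is not the algebra but avoiding circularity. Prop.~\ref{prop:k-bounds} presumes the bound $\|P_\tau\|\leq\bar P$, and that bound is supplied by Prop.~\ref{prop:p-bounds}. I therefore have to make sure that Prop.~\ref{prop:p-bounds} itself does not rely on Prop.~\ref{prop:k-bounds}'s hypothesis beyond the bound $(R+\tau B^\top\Prom B)^{-1}=O(1)$, which holds for small $\tau$ independently of $P_\tau$. Once that is checked, the ordering above (first $P$, then $\bar P$, then $\bar\tau$) is consistent.
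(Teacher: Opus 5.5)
Your proposal is correct and follows the paper's proof. It chains Prop.~\ref{prop:p-bounds}, which gives $\|P_\tau-P_{\textsc{r}}\|\le O(\tau)$ on $(0,\tau']$ and hence a uniform bound $\|P_\tau\|\le\bar P$, with Prop.~\ref{prop:k-bounds} on $(0,\min\{\tau',\bar\tau\}]$. Two points you make explicit that the paper leaves implicit: the ordering of constants (fix $\bar P$ from $\tau'$, then choose $\bar\tau$), and the check that Prop.~\ref{prop:p-bounds} needs only $(R+\tau B^\top P_{\textsc{r}}B)^{-1}=O(1)$ rather than any bound on $P_\tau$.
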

\begin{proof}
    By Prop.~\ref{prop:p-bounds}, there exists some $\tau'>0$ such that $\|P_\tau-\Prom\|\leq O(\tau)$ for all $0<\tau\leq\tau'$.  It follows that for sufficiently small $\tau_{\max}:=\min\{\tau', \bar{\tau}\}$, we can guarantee that $\|P_\tau\|=\|\Prom + (P_\tau-\Prom)\|\leq \bar P$ for all $\tau\in(0,\tau_{\max}]$ and hence the assumptions of Prop.~\ref{prop:k-bounds} hold, from which the result follows immediately.
\end{proof}

\subsubsection{Bounding {\normalfont$\|\Krom\|\|x_t-\xdt\|$ and $\|\xdt\|$}}\label{sss:boundingZOH}
Our goal is to bound $\|\hat\pi_\textsc{zoh}(\tau,x_t)-\hat{\pi}_{M_i}(x_t)\|\leq\|\Krom\|\|x_t-\xdt\|$ and $\|\xdt\|$ as a function of~$\tau$ and~$\|x_t\|$. 
Note that $\|\Krom\|\leq O(1)$. %
Hence, we will only need to bound $\|x_t-\xdt\|$ and $\|\xdt\|$. 

As discussed in \S\ref{ss:lqrdesignapprox}, we will need to analyze these quantities with respect to the states of the optimal and perturbed system. In the following we show that for both cases, $\|x_t-\xdt\|\leq O(\tau)$ and $\|\xdt\|\leq O(1)$. %

\paragraph{Bound under the states of the optimal system}
The state evolution of the optimal system is governed by $u_t=K_\star x^\star_t$, i.e., $\dot{x}^\star_t = A_\textsc{lqr}x^\star_t$. Recall that $\alpha_0:=-\max_j\Re\{\lambda_j(A_\textsc{lqr})\}$ denotes the negative spectral abscissa of the optimal closed-loop. The following result shows that for this case, we have $\|x_t^\star-\xdt^\star\|\leq O(\tau)$ and $\|\xdt^\star\|\leq O(1)$: 
\begin{proposition}\label{prop:boundxstar}
    Let $\dot x^\star_t=A_{\textsc{lqr}}x^\star_t$. Then, for small enough $\tau \leq \tau_\textsc{zoh}$,
    \begin{equation}\label{eq:ZOH-xstar}
        \|x^\star_t-\xdt^\star\|\leq O(\tau) \|x_t^\star\|,
    \end{equation}
    and
    \begin{equation}\label{eq:zoh-xstar-1}
        \|\xdt^\star\|\leq O(1)\|x^\star_t\|.
    \end{equation}
\end{proposition}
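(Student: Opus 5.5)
The key observation is that both quantities can be written through the flow of the linear closed-loop dynamics over at most one sampling interval. Fix $t\geq0$ and set $s:=t-\tau\lfloor t/\tau\rfloor\in[0,\tau)$, so that $x^\star_t = e^{A_\textsc{lqr}s}\xdt^\star$. By the invertibility of the matrix exponential, running the flow backwards gives
\begin{equation*}
\xdt^\star = e^{-A_\textsc{lqr}s}x^\star_t.
\end{equation*}
This is the crucial step: it expresses the sampled state in terms of the \emph{current} state $x^\star_t$. That is the direction the bounds require, since they are stated with respect to $\|x^\star_t\|$ rather than $\|\xdt^\star\|$.

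\textbf{Bound \eqref{eq:zoh-xstar-1}.} Since $s\leq\tau$, we have
\begin{equation*}
\|e^{-A_\textsc{lqr}s}\|\leq e^{\|A_\textsc{lqr}\|s}\leq e^{\|A_\textsc{lqr}\|\tau}.
\end{equation*}
This gives $\|\xdt^\star\|\leq e^{\|A_\textsc{lqr}\|\tau_\textsc{zoh}}\|x^\star_t\|$. For any fixed $\tau_\textsc{zoh}$, for instance $\tau_\textsc{zoh}\leq 1/\|A_\textsc{lqr}\|$, this constant is $O(1)$ (at most $e$).

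\textbf{Bound \eqref{eq:ZOH-xstar}.} Write
\begin{equation*}
x^\star_t-\xdt^\star = \big(I-e^{-A_\textsc{lqr}s}\big)x^\star_t.
\end{equation*}
I would bound the matrix factor by its series or integral representation:
\begin{equation*}
\|I-e^{-A_\textsc{lqr}s}\| = \Big\|\int_0^s A_\textsc{lqr}e^{-A_\textsc{lqr}r}\,\mathrm{d}r\Big\| \leq s\|A_\textsc{lqr}\|e^{\|A_\textsc{lqr}\|s}\leq \tau\|A_\textsc{lqr}\|e^{\|A_\textsc{lqr}\|\tau_\textsc{zoh}}.
\end{equation*}
This is $O(\tau)$, with constant depending only on $\|A_\textsc{lqr}\|$ and $\tau_\textsc{zoh}$, which yields \eqref{eq:ZOH-xstar}.

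\textbf{Where the difficulty lies.} The only subtle point is avoiding the more natural forward-time estimate $\|x^\star_t-\xdt^\star\|\leq O(\tau)\|\xdt^\star\|$. That estimate cannot be converted into a bound in terms of $\|x^\star_t\|$ without a lower bound on $\|x^\star_t\|$ relative to $\|\xdt^\star\|$. The backward flow supplies exactly this, and it is available because the baseline dynamics form an autonomous linear ODE, whose flow is invertible on each interval. No stability property (such as $\alpha_0>0$) is needed here. Stability is used only elsewhere, to keep trajectories bounded. The analogous claim for the deployed system, which comes next in the paper, is harder, because $\hat x_t$ evolves under a sampled feedback.
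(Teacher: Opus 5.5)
Your proof is correct and follows the same route as the paper. Both arguments use the backward flow to write $\xdt^\star = e^{-A_{\textsc{lqr}}s}x^\star_t$ and then bound $\|I-e^{-A_{\textsc{lqr}}s}\|$ and $\|e^{-A_{\textsc{lqr}}s}\|$. The only difference is the exponential estimate. The paper factors $I-e^{-A_{\textsc{lqr}}s}=(I+(e^{A_{\textsc{lqr}}s}-I))^{-1}(e^{A_{\textsc{lqr}}s}-I)$, bounds $\|e^{A_{\textsc{lqr}}s}-I\|\leq \|A_{\textsc{lqr}}\|C_{\textsc{lqr}}\tau$ with a Gelfand-type decay bound (which uses stability), and then applies a Neumann series for small $\tau$. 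Your direct bound $\|e^{-A_{\textsc{lqr}}s}\|\leq e^{\|A_{\textsc{lqr}}\|s}$ is more elementary and, as you note, needs neither stability nor a smallness condition on $\tau_{\textsc{zoh}}$.
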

\begin{proof}
    For $x^\star_t$ generated by $\dot{x}^\star = A_{\textsc{lqr}}x^\star$, we have that $x^\star_t=e^{A_{\textsc{lqr}}(t-\dt)}\xdt^\star$ (recall that $\dt:=\tau\floor{\frac{t}{\tau}}$) on a $\tau$-long sampling interval. Hence, $\xdt^\star=e^{-A_{\textsc{lqr}}(t-\dt)}x^\star_t$ and $x^\star_t-\xdt^\star=(I-e^{-A_{\textsc{lqr}}(t-\dt)})x^\star_t$. It therefore suffices to bound $\|I-e^{-A_{\textsc{lqr}}s}\|$ for all $0\leq s\leq\tau\leq\tau_\textsc{zoh}$. We write $I-e^{-A_{\textsc{lqr}}s}=e^{-A_{\textsc{lqr}}s}(e^{A_{\textsc{lqr}}s}-I)=(I+(e^{A_{\textsc{lqr}}s}-I))^{-1}(e^{A_{\textsc{lqr}}s}-I)$, and first bound $\|e^{A_{\textsc{lqr}}s}-I\|.$  To that end, note that
\[ e^{A_{\textsc{lqr}}s}-I=\int_{0}^s A_{\textsc{lqr}}e^{A_{\textsc{lqr}}\theta}\mathrm{d}\theta, \]
so that for all $0\leq s\leq\tau\leq\tau_\textsc{zoh}$, we have
\begin{multline*}
\|e^{A_{\textsc{lqr}}s}-I\|=\norm{\int_{0}^s A_{\textsc{lqr}}e^{A_{\textsc{lqr}}\theta}\mathrm{d}\theta}\leq\norm{A_{\textsc{lqr}}}C_{\textsc{lqr}}\int_{0}^se^{-\alpha_0\theta}\mathrm{d}\theta=\norm{A_{\textsc{lqr}}}C_{\textsc{lqr}}\frac{1-e^{-\alpha_0 s}}{\alpha_0}\\\leq \norm{A_{\textsc{lqr}}}C_{\textsc{lqr}}\frac{1-e^{-\alpha_0 \tau}}{\alpha_0} \leq (\norm{A_{\textsc{lqr}}}C_{\textsc{lqr}})\tau,
\end{multline*}
where we used Gelfand's formula.
It therefore follows immediately that for small enough $\tau_\textsc{zoh}\leq\tau_{\max}$, any $0\leq \tau\leq\tau_\textsc{zoh}$ will ensure that $(\norm{A_{\textsc{lqr}}}C_{\textsc{lqr}})\tau\leq\tfrac{1}{2}$, and thus $(I+(e^{A_{\textsc{lqr}}s}-I))^{-1}\leq O(1)$ and $\|e^{-A_{\textsc{lqr}}s}\|\leq 2$ for all $s\leq \tau$. This provides us with $\|\xdt^\star\|\leq O(1)\|x_t^\star\|$ and $\|x^\star_t-\xdt^\star\|\leq O(\tau) \|x_t^\star\|$. 
\end{proof}

\paragraph{Bound under the states of the perturbed system}
Considering the order of approximations in \S\ref{ss:lqrdesignapprox}, the state evolution of the perturbed system is governed by a state-feedback policy with a ZOH input, i.e., $u_t={K}\hat{x}_{\dt}$. Note that for this case, it does not matter what~${K}$ is. In the proposition that follows, we use the fact that under the policy~$u_t={K}\hat{x}_{\dt}$ the input is constant over the time-interval $\mc{I}:=[\dt, \dt+\tau]$. As highlighted in \S\ref{ss:baselinelqr}, we assume here that~$A$ is invertible. As will be evident from the proof, this assumption can easily be lifted using the Jordan form of~$A$. We are now able to state the bounds on $\|\hat x_t-\hat{x}_{\dt}\|$ and $\|\hat{x}_{\dt}\|$ under policy~$u_t={K}\hat{x}_{\dt}$.

\begin{proposition}\label{prop:boundKtau-xi}
    For all $\tau\leq \hat\tau$, we have
    \[
    \|\hat x_t-\hat{x}_{\dt}\|\leq O(\tau)\|\hat x_t\|
    \]
    and 
    \[
    \|\hat{x}_{\dt}\|\leq O(1)\|\hat x_t\|
    \]
    for $\hat x_t$ generated by $u_t=K \hat{x}_{\dt}$.
\end{proposition}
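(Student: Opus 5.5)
On a single sampling interval $\mc{I}=[\dt,\dt+\tau]$ the input $u_t=K\hat x_{\dt}$ is constant, so the closed-loop state is an affine function of the sampled state. Writing $s:=t-\dt\in[0,\tau)$, the variation-of-constants formula gives
\[
\hat x_t = e^{As}\hat x_{\dt} + \int_0^s e^{A\theta}\,\mathrm{d}\theta\, BK\hat x_{\dt} = \big(e^{As} + A^{-1}(e^{As}-I)BK\big)\hat x_{\dt} =: M(s)\hat x_{\dt}.
\]
Invertibility of $A$ (Item~\ref{item:inv}) gives the closed form of the integral. Without it, I would use the series $\int_0^s e^{A\theta}\mathrm{d}\theta=\sum_k A^k s^{k+1}/(k+1)!$, which is all that matters below. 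The plan mirrors the proof of Prop.~\ref{prop:boundxstar}: write $M(s)=I+E(s)$ with $\|E(s)\|\leq O(s)$, invert $M(s)$ by Lemma~\ref{lem:neumannstuff} to express $\hat x_{\dt}$ through $\hat x_t$, and then read off both bounds.

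The first step is to bound $E(s)=(e^{As}-I)+\big(\int_0^s e^{A\theta}\mathrm{d}\theta\big)BK$. Since $\|e^{A\theta}\|\leq e^{\|A\|\theta}$, for $s\leq\tau\leq 1$ we have
\[
\|e^{As}-I\|\leq \|A\|e^{\|A\|}s,\qquad \Big\|\int_0^s e^{A\theta}\mathrm{d}\theta\Big\|\leq e^{\|A\|}s .
\]
Hence $\|E(s)\|\leq c_E\,s\leq c_E\tau$ with $c_E:=e^{\|A\|}(\|A\|+\|B\|\|K\|)$. Stability of $A$ is not used (the plant may be unstable), and this is harmless because the horizon is only one interval of length at most $\tau$.

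Next, choose $\hat\tau$ so that $c_E\hat\tau\leq\tfrac12$. By Lemma~\ref{lem:neumannstuff}, $M(s)$ is invertible with $\|M(s)^{-1}\|\leq 2$, so
\[
\|\hat x_{\dt}\|=\|M(s)^{-1}\hat x_t\|\leq 2\|\hat x_t\|,
\]
which is the second claim. For the first, Item~\ref{lem:item:2} gives $\hat x_t-\hat x_{\dt}=E(s)\hat x_{\dt}=E(s)M(s)^{-1}\hat x_t$, so
\[
\|\hat x_t-\hat x_{\dt}\|\leq 2c_E\tau\|\hat x_t\|=O(\tau)\|\hat x_t\|.
\]

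The only real subtlety is that the constants depend on $\|K\|$, so I would state that the $O(\cdot)$ terms are uniform over gains with $\|K\|\leq\bar K$. For $K=K_{T,\tau}$ or $K_\tau$ this bound holds for small $\tau$ by Prop.~\ref{prop:final-kbounds} together with boundedness of $\Krom$, which keeps $\hat\tau$ independent of $\tau$. I expect the remaining steps to be routine.
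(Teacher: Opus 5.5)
Your proof is correct and follows essentially the same route as the paper: write $\hat x_t=G(t-\dt)\hat x_{\dt}$ on the sampling interval, show $\|G(s)-I\|\leq O(\tau)$, invert $G(s)$ via the Neumann series, and read off both bounds. The differences are cosmetic. You use the crude estimate $\|e^{A\theta}\|\leq e^{\|A\|\theta}$, so invertibility of $A$ is never needed, whereas the paper uses $C_Ae^{\alpha_A\theta}$ and the factorization $G(s)-I=\int_0^s e^{A\theta}\mathrm{d}\theta\,(A+BK)$; and your remark that the constants must be uniform over $\|K\|\leq\bar K$ makes explicit what the paper glosses over (though for $K=K_{T,\tau}$ that uniform bound also relies on the finite-horizon convergence estimate, not only Prop.~\ref{prop:final-kbounds}).
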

\begin{proof}
Note that during interval $\mc{I}$, we have that $\dot{\hat{x}}_t = A\hat x_t+B K \hat{x}_{\dt}$.  It follows immediately that for $t\in \mc{I}$, we have $\hat x_t=G(t-\dt)\hat{x}_{\dt}$ where
\[ 
G(s)=e^{As}+\int_0^se^{A(s-\theta)}\mathrm{d}\theta BK=e^{As} +(e^{As}-I)A^{-1}BK.
\]
Assuming that $G^{-1}(t-\dt)$ exists for $t\in \mc{I}$, we may then write $\hat{x}_{\dt} = G^{-1}(t-\dt)\hat x_t$ to conclude that
\[
\hat x_t-\hat{x}_{\dt} = \big(I-G^{-1}(t-\dt)\big)\hat x_t=\big(G(t-\dt)-I\big)G^{-1}(t-\dt)\hat x_t.
\]
Hence,
\[\|\hat x_t-\hat{x}_{\dt}\|\leq\|G(t-\dt)-I\|\|G^{-1}(t-\dt)\|\|\hat x_t\|,\] 
and it suffices to bound the two operator norms to reach our desired result. Let $C_A>0$, $\alpha_A\in \R$ be such that $\|e^{As}\|\leq C_A e^{\alpha_A s}$. To that end, first observe that 
\[
    G(s)-I=(e^{As}-I)(I + A^{-1}BK)=\int_0^s e^{A\theta}\mr{d}\theta A (I + A^{-1}BK)=\int_0^s e^{A\theta}\mr{d}\theta(A+B K),
\]
and we then have for all $s\in[0,\tau]$
\[
    \|G(s)-I\|\leq C_A\int_{0}^se^{\alpha_A \theta}\mr{d}\theta \|A+BK\|=C_A\|A+BK\|\frac{|e^{\alpha_A s}-1|}{|\alpha_A|}\leq C_A\|A+BK\|\frac{e^{|\alpha_A \tau|}-1}{|\alpha_A|}.
\]
Finally, for small enough $\hat\tau$, we have $|\alpha_A\tau|\leq \log 2$, so that by the mean value theorem $e^{|\alpha_A\tau|}-1\leq 2|\alpha_A|\tau$ and we conclude that
\begin{equation}\label{eq:gs-I-bound}
    \|G(s)-I\|\leq O(\tau)
\end{equation}
for all $s\in [0,\tau]$.
To compute $\|G^{-1}(s)\|$, we observe that $G^{-1}(s) = (I - (I-G(s))^{-1}$, and thus, so long as $\|G(s)-I\|<1$, which we can always guarantee for sufficiently small $\tau$, we have that
\[ \|G^{-1}(s)\|\leq O(1). \]
Thus, under the assumptions made above, %
we have that for all $\tau\leq\hat\tau$, $\|\hat x_t-\hat{x}_{\dt}\|\leq O(\tau) \|\hat x_t\|$ and $\|\hat{x}_{\dt}\|\leq O(1)\|\hat x_t\|$ for all $\hat x_t$ generated by $u_t = K \hat{x}_{\dt}$.
\end{proof}

\subsubsection{Bounding {\normalfont$\|\Krom\|\|x_t-\xdt\|+\|K_\tau-\Krom\|\|\xdt\|$}}
Let $\tau_\textsc{dt} := \min\{\tau_\textsc{zoh},\tau_{\max},\hat\tau\}$. Combining the results above yields the following bound:
\begin{equation*}
    \|(\hat\pi_\tau-\hat{\pi}_{M_i})\|= \|\Krom\|\|x_t-\xdt\|+\|K_\tau-\Krom\|\|\xdt\|\leq O(\tau)\|x_t\|.
\end{equation*}
for states~$x_t$ generated by either the continuous policy $u_t=K_\star x_t$ or the ZOH policy $u_t = K\xdt$ for all $0\leq\tau\leq\tau_\textsc{dt}$.

\subsection{Finite Horizon Approximation}\label{appendix:fhorizon}
For this approximation, we want to characterize the following design-dependent sector bound, which we can write as
\[ \|(\hat\pi_T-\hat\pi_\tau)(x_t)\| = \|(K_{T,\tau}-K_\tau)\xdt\| \leq \|K_{T,\tau}-K_\tau\|\|\xdt\|.\]
From Propositions~\ref{prop:boundxstar} and~\ref{prop:boundKtau-xi}, we have that $\|\xdt\|\leq O(1)\|x_t\|$ for small enough $\tau\leq\tau_\textsc{dt}$, for states generated by either the optimal or perturbed system. Thus, we only need to characterize $\|K_{T,\tau}-K_\tau\|$ in terms of $T,\tau$. Recall that we defined $N:=\floor{\tfrac{T}{\tau}}$.

For a fixed sampling time $\tau$, denote by $P^{(N)}_\tau$ the value computed by running the DARR for $N$ iterations, i.e., by computing $P^{(N)}_\tau:=P_{\tau,0}$, where  $P_{\tau,N}=Q_N$ is the terminal cost, and
$$
P_{\tau,k} = A_\tau^{\!\top} P_{\tau,k+1} A_\tau + Q_\tau
-
A_\tau^{\!\top} P_{\tau,k+1} B_\tau
      \bigl(R_\tau + B_\tau^{\!\top} P_{\tau,k+1} B_\tau\bigr)^{-1}
      B_\tau^{\!\top} P_{\tau,k+1} A_\tau 
$$
for $k=0,\dots,N-1.$  Further, define the map $K(X,\tau):=-(R_\tau+B_\tau^\top X B_\tau)^{-1}B_\tau^\top X A_\tau$.  Our goal in this section is to bound $\|K(P^{(N)}_\tau,\tau)-K(P_\tau,\tau)\|$ as a function of $N$ for all $\tau\leq\tau^\star$. We first bound this quantity as a function of $\|P^{(N)}_\tau-P_\tau\|$:
\begin{lemma}\label{lem:boundKTtau}
    Let $U$ be a compact neighborhood of $P_\tau$ contained within the positive semi-definite cone.  Then, for all $X\in U$ and all $\tau\in(0,\breve\tau]$, we have that
    \[
    \|K(X,\tau)-K(P_\tau,\tau)\|\leq O(\|X-P_\tau\|).
    \]
\end{lemma}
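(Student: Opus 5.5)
The plan is to reuse the rescaling from the proof of Prop.~\ref{prop:k-bounds}, now treating $X$ as the variable. For any $X\succeq 0$,
\[
R_\tau+B_\tau^\top XB_\tau=\tau R^{1/2}(I+\tau\Phi(X))R^{1/2},\qquad \Phi(X):=R^{-1/2}B^\top XBR^{-1/2}.
\]
This gives
\[
K(X,\tau)=-R^{-1/2}(I+\tau\Phi(X))^{-1}R^{-1/2}B^\top X(I+\tau A).
\]
The factor $1/\tau$ from the inverse cancels the factor $\tau$ in $B_\tau$. So $K(\cdot,\tau)$ is a composition of a linear map in $X$, the right factor $A_\tau$ (bounded by $1+\breve\tau\|A\|$), and the inverse $(I+\tau\Phi(X))^{-1}$. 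Crucially, none of these pieces blows up as $\tau\downarrow0$.

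Next I control the inverse uniformly. Since $U$ is compact and lies in the positive semidefinite cone, $\Phi(X)\succeq0$ for $X\in U$. Hence $I+\tau\Phi(X)\succeq I$ and $\|(I+\tau\Phi(X))^{-1}\|\le 1$ for every $\tau\ge0$. This bound is even cleaner than Lemma~\ref{lem:neumannstuff}, and it needs no smallness of $\tau$. Compactness of $U$ also gives $\|X\|\le\bar X$ on $U$, so $\|\Phi(X)\|\le\|R^{-1}\|\|B\|^2\bar X=O(1)$.

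Then I split the difference into two pieces. Write $M(X):=(I+\tau\Phi(X))^{-1}$ and use the identity
\[
M(X)-M(P_\tau)=-\tau M(X)\,\Phi(X-P_\tau)\,M(P_\tau).
\]
This gives
\begin{multline*}
K(X,\tau)-K(P_\tau,\tau)=-R^{-1/2}\big[M(X)R^{-1/2}B^\top(X-P_\tau)\\
+(M(X)-M(P_\tau))R^{-1/2}B^\top P_\tau\big]A_\tau .
\end{multline*}
Taking norms and using $\|M\|\le1$, $\|\Phi(X-P_\tau)\|\le\|R^{-1}\|\|B\|^2\|X-P_\tau\|$, $\|P_\tau\|\le\bar X$ (as $P_\tau\in U$) and $\tau\le\breve\tau$, I get
\[
\|K(X,\tau)-K(P_\tau,\tau)\|\le\|R^{-1}\|\|B\|(1+\breve\tau\|A\|)\big(1+\breve\tau\|R^{-1}\|\|B\|^2\bar X\big)\|X-P_\tau\|.
\]
This is the claimed $O(\|X-P_\tau\|)$ bound.

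The main obstacle is uniformity in $\tau$. The neighborhood $U$ depends on $P_\tau$, so I need a single bound $\bar X$ valid for all $\tau\in(0,\breve\tau]$. I will handle this with Prop.~\ref{prop:p-bounds}, which gives $\|P_\tau-\Prom\|\le O(\tau)$. Choosing $U$ inside a fixed ball around $\Prom$, for example $\{X\succeq0:\|X-\Prom\|\le r\}$, and shrinking $\breve\tau$ so that $P_\tau$ lies in this ball, makes all the constants independent of $\tau$.
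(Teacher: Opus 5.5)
Your proof is correct, but it takes a different route from the paper. The paper argues abstractly. It notes that $K(\cdot,\tau)$ is continuously differentiable on the positive semidefinite cone and applies a mean-value bound, $\|K(X,\tau)-K(P_\tau,\tau)\|\le\sup_{\gamma\in[0,1]}\|D_XK(\cdot,\tau)\|\,\|X-P_\tau\|$, whose supremum is finite on the compact set $U$. It then simply asserts that the supremum over $\tau\in(0,\breve\tau]$ is finite ``through routine algebra.'' That last step is where the real difficulty lies: $(R_\tau+B_\tau^\top XB_\tau)^{-1}$ is of order $1/\tau$, and $D_XK$ stays bounded only after this is cancelled against $B_\tau=\tau B$. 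Your rescaling $K(X,\tau)=-R^{-1/2}(I+\tau\Phi(X))^{-1}R^{-1/2}B^\top X(I+\tau A)$ carries out that cancellation explicitly. Combined with the observation $I+\tau\Phi(X)\succeq I$ and the resolvent identity, it gives an explicit Lipschitz constant that is uniform in $\tau$: $\|R^{-1}\|\|B\|(1+\breve\tau\|A\|)(1+\breve\tau\|R^{-1}\|\|B\|^2\sup_{\tau\le\breve\tau}\|P_\tau\|)$. Your route fills in the omitted algebra and bypasses the mean-value step, which the paper also writes along $X+\gamma P_\tau$ rather than the segment $P_\tau+\gamma(X-P_\tau)$. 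It also shows that the bound holds for every $X\succeq 0$, so compactness of $U$ plays no role. The only thing needed is a uniform bound on $\|P_\tau\|$, which you correctly obtain from Prop.~\ref{prop:p-bounds} by taking $\breve\tau\le\tau'$, the same device the paper uses in Prop.~\ref{prop:final-kbounds}. In particular, you do not need to ``choose'' $U$ inside a fixed ball; $U$ is given in the statement, and your constant does not depend on it. The paper's argument is shorter and generic, applying to any smooth gain map, but as written it leaves the $\tau$-uniformity, the actual content of the lemma, unverified.
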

\begin{proof}
    We first observe that for a fixed $\tau$, the map $K(X,\tau)$ is locally Lipschitz in $X$ in any compact neighborhood $U$ around $P_\tau$ contained within the positive semi-definite cone.  This follows from
\[ \|K(X,\tau)-K(P_\tau,\tau)\|\leq \sup_{\gamma\in[0,1]} \|D_X K(X + \gamma P_\tau,\tau)\|\|X-P_\tau\|, \]
and $D_X K(X,\tau)$ being continuous in $X\succeq 0$, and hence the supreme is finite for all $X\in U$.  We denote by $L(\tau):= \sup_{\gamma\in[0,1]} \|D_X K(X + \gamma P_\tau,\tau)\|$ and note that $L_P:=\sup_{\tau\in(0,\breve\tau]} L(\tau)$ can be verified to be finite through routine algebra. Hence, $ \|K(X,\tau)-K(P_\tau,\tau)\|\leq L_P \|X-P_\tau\| \equiv O(\|X-P_\tau\|).$
\end{proof}
Next, we turn to quantifying the rate at which the sequence $\{P_\tau^{N}\}$ converges towards $P_\tau$ as a function of the number of iterations $N$.  Standard convergence rate analysis of the Riccati difference equation~\cite{bitmead1991riccati,hager1976convergence} shows that under stabilizability and detectability of $(A_\tau,B_\tau,Q_\tau^{1/2})$, the sequence $\{P_\tau^{N}\}$ converges geometrically in the operator norm  to $P_\tau$, which is summarized in the following proposition:
\begin{proposition}
	Let $N:=\floor{T/\tau}$. Then, for all $\tau\leq \tau_\textsc{fh}$,  sufficiently large $N_0$, and sufficiently small~$\epsilon$, such that $N\geq N_0$ and $\|Q_N-P_\tau\|\leq\epsilon$, there exists a polynomial $C_T(\tau)$ such that
	\begin{equation}\label{eq:boundexpconverg}
		\|K(P_\tau^{(N)})-K(P_\tau)\|\leq C_T(\tau) e^{-\alpha_i T}.
	\end{equation}
\end{proposition}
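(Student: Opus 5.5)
The plan is to reduce the claim to a geometric contraction of the Riccati recursion near its fixed point $P_\tau$, and then to convert the per-step contraction rate into continuous time. By Lemma~\ref{lem:boundKTtau}, it suffices to show $\|P^{(N)}_\tau - P_\tau\|\leq \tilde C(\tau) e^{-\alpha_i T}$ with $\tilde C$ polynomial in $\tau$. Lemma~\ref{lem:boundKTtau} applies once all iterates lie in a compact neighborhood $U$ of $P_\tau$ inside the PSD cone; the smallness assumption $\|Q_N-P_\tau\|\leq\epsilon$ gives this, provided we show invariance of $U$ under the recursion. Then $L_P$ is absorbed into $C_T(\tau)$.

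Write the DARR as $P_{\tau,k}=\mathcal{R}_\tau(P_{\tau,k+1})$. The key step is the standard error identity
\[
P_{\tau,k}-P_\tau = (A_\tau+B_\tau K(P_{\tau,k+1},\tau))^\top (P_{\tau,k+1}-P_\tau)(A_\tau+B_\tau K(P_\tau,\tau)),
\]
valid for PSD arguments; see \cite{hager1976convergence,bitmead1991riccati}. Set $A_{\textsc{cl},\tau}:=A_\tau+B_\tau K_\tau = I+\tau(A+BK_\tau)$. By Prop.~\ref{prop:final-kbounds}, $A+BK_\tau = A_{\textsc{cl}} + O(\tau)$. Hence $A_{\textsc{cl},\tau}=e^{\tau A_{\textsc{cl}}}+O(\tau^2)$, whose spectral radius is at most $e^{-\alpha_i\tau}(1+O(\tau^2))$; for small $\tau$ this is $\leq e^{-\alpha_i'\tau}$ with $\alpha_i'$ arbitrarily close to $\alpha_i$, or equal to $\alpha_i$ after absorbing an $e^{O(\tau)T}$ factor. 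The other factor, $A_\tau+B_\tau K(P_{\tau,k+1},\tau)$, differs from $A_{\textsc{cl},\tau}$ by $O(\tau\|P_{\tau,k+1}-P_\tau\|)$ by Lemma~\ref{lem:boundKTtau}.

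Unrolling gives $P^{(N)}_\tau-P_\tau=\prod_k(\cdot)^\top (Q_N-P_\tau)\,A_{\textsc{cl},\tau}^N$. For the pure power we bound $\|A_{\textsc{cl},\tau}^N\|\leq C e^{-\alpha_i N\tau}$. To get this, use the Lyapunov/Gelfand argument from Lemma~\ref{lem:lyap-inv}: take $W$ solving the Lyapunov equation for $A_{\textsc{cl}}$, so that $A_{\textsc{cl},\tau}^\top W A_{\textsc{cl},\tau}\preceq (1-2\alpha\tau+O(\tau^2))W$ in the weighted norm. The perturbed product is handled by induction. Smallness of $\epsilon$ keeps $\|P_{\tau,k}-P_\tau\|$ decreasing, which gives invariance of $U$. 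It also keeps the perturbation products summable, contributing $\prod_k(1+O(\tau\epsilon e^{-\alpha k\tau}))=O(1)$ uniformly in $\tau$. Finally, $N\tau\geq T-\tau$ turns $e^{-\alpha_i N\tau}$ into $e^{\alpha_i\tau}e^{-\alpha_i T}$; the factor $e^{\alpha_i\tau}\leq 2$ is absorbed, and $N\geq N_0$ ensures the transient constants are dominated.

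The main obstacle is obtaining a per-step contraction of exactly $e^{-\alpha_i\tau}$ that holds uniformly as $\tau\to0$. A naive Gelfand bound gives $\|A_{\textsc{cl},\tau}^N\|\leq C(\tau)\rho^N$ with a $\tau$-dependent constant that may blow up. I would try first the weighted-norm approach via $\Lcal^{-1}$, which gives uniform constants at the cost of replacing $\alpha_i$ by any slightly smaller rate. The residual $O(\tau)$ rate loss can then be tracked as a factor polynomial in $\tau$ (or constant) inside $C_T(\tau)$, consistent with the form $(C_2+C_3\tau)e^{-\alpha_iT}$ used in the main text.
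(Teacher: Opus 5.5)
Your reduction through Lemma~\ref{lem:boundKTtau}, the exact two-sided identity $P_{\tau,k}-P_\tau=(A_\tau+B_\tau K(P_{\tau,k+1},\tau))^\top(P_{\tau,k+1}-P_\tau)A_{\textsc{cl},\tau}$, and the final $N\tau\ge T-\tau$ step are all correct. The identity is a cleaner starting point than the paper's linearization $E_k=A_\textsc{dlqr}^\top E_{k+1}A_\textsc{dlqr}+\mathcal{R}(E_{k+1})$, which carries a quadratic remainder. The gap is in how you extract the rate. You ask the single factor $A_{\textsc{cl},\tau}^N$ to deliver the full $e^{-\alpha_i N\tau}$, treat the perturbed product as $O(1)$, and then propose to absorb the shortfall into $C_T(\tau)$. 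That step fails. If a dominant eigenvalue of $A_{\textsc{cl}}$ is defective, $\|A_{\textsc{cl},\tau}^N\|$ carries an $N^{m-1}$ factor, so a $\tau$-uniform constant forces some rate $\alpha'<\alpha_i$. Even in the diagonalizable case, forward Euler gives $|1+\tau\mu|=1-\alpha_i\tau+\tfrac12(\Im\mu)^2\tau^2+O(\tau^3)$ for a dominant eigenvalue $\mu$. This exceeds $e^{-\alpha_i\tau}$ by a factor $1+\Theta(\tau^2)$ whenever $|\Im\mu|>\alpha_i$, and the $O(\tau)$ gain error $B(K_\tau-K_{\textsc{r}})$ adds a further $O(\tau^2)$ shift of either sign. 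The resulting factors $e^{(\alpha_i-\alpha')T}$ and $e^{\Theta(\tau)T}$ grow without bound in $T$ for fixed $\tau$. They are therefore not a polynomial in $\tau$, and a bound with a $T$-independent $C_T(\tau)$ does not follow from your argument.

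The missing idea is that your identity is two-sided, so the Riccati error contracts at roughly \emph{twice} the closed-loop rate. Each factor therefore only needs to contract at rate $\alpha_i/2$. Pick $\alpha'\in(\alpha_i/2,\alpha_i)$ and $W\succ0$ with $A_{\textsc{cl}}^\top W+WA_{\textsc{cl}}\preceq-2\alpha'W$, and let $\|\cdot\|_W$ be the induced weighted norm. Then $\|A_{\textsc{cl},\tau}\|_W$, and $\|A_\tau+B_\tau K(X,\tau)\|_W$ for all $\|X-P_\tau\|\le\kappa(W)\epsilon$, are both at most $1-\alpha'\tau+O(\tau^2)+O(\tau\epsilon)\le e^{-\alpha_i\tau/2}$ once $\tau$ and $\epsilon$ are small. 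Induction keeps the iterates in that ball, so $U$ exists and Lemma~\ref{lem:boundKTtau} applies. It also gives $\|P^{(N)}_\tau-P_\tau\|\le\kappa(W)e^{-\alpha_iN\tau}\|Q_N-P_\tau\|$ with constants uniform in $\tau$, and no summability bootstrapping is needed. This is exactly the slack the paper exploits. It uses $\rho(\mathcal{L}_{A_\textsc{dlqr}})=\rho(A_\textsc{dlqr})^2$ together with the deliberately lossy bound $\rho(A_\textsc{dlqr})\le1-\alpha_i\tau/2$, so that $(1-\alpha_i\tau/2)^{2N}\le e^{-\alpha_iN\tau}$.
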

\begin{proof}
Let $E_k:= P_{\tau,k}-P_\tau$ and $A_\textsc{dlqr}:= A_\tau + B_\tau K_\tau$. As $A_\textsc{dlqr}$ is stable, its spectral radius satisfies $\rho(A_\textsc{dlqr})<1$, and thus for the linear (Lyapunov) operator $\Lcal_{A_\textsc{dlqr}}(E_{k+1}):=A_\textsc{dlqr}^\top E_{k+1}A_\textsc{dlqr}$ we have that: $\rho(\Lcal_{A_\textsc{dlqr}})=\rho(A_\textsc{dlqr})^2<1$. Writing out $E_k$ gives
\begin{equation}\label{eq:app:conv}
	E_k = A_\textsc{dlqr}^\top E_{k+1} A_\textsc{dlqr} + \mc{R}(E_{k+1}),
\end{equation}
where $\mc{R}(E_{k+1}):= -A_\textsc{dlqr}^\top E_{k+1} B_\tau(R_\tau+B^\top P_{\tau,k+1}B_\tau)^{-1} B_\tau^\top E_{k+1}A_\textsc{dlqr}$. Note that if $P_{\tau,k+1}\succ0$, $\mc{R}(E_{k+1})\preceq0$ for all $E_{k+1}$. Moreover, as $\mc{R}(E_{k+1})$ is quadratic in $E_{k+1}$, $\|\mc{R}(E_{k+1})\|\leq O(\|E_{k+1}\|^2)$, i.e., for small enough $\|E_{k+1}\|$, $\|\mc{R}(E_{k+1})\|\leq c\|E_{k+1}\|^2$. Let $\rho(A_\textsc{dlqr})^2\leq\gamma<1$ such that $\|\Lcal_{A_\textsc{dlqr}}(E_{k+1})\|\leq\gamma\|E_{k+1}\|$. Then, for $\|E_{k+1}\|\leq \epsilon$, where $c\epsilon<1-\gamma$, we have 
\[ \|E_k\|\leq \|\Lcal_{A_\textsc{dlqr}}(E_{k+1})\|+c\|E_{k+1}\|^2\leq (\gamma+c\epsilon)\|E_{k+1}\|.\]
Note that $\gamma$ can be chosen arbitrarily close to $\rho(A_\textsc{dlqr})^2$ for a given norm, and that $(\gamma+c\epsilon)<1$ by definition. Hence, due to backwards recursion, $\|P_\tau^{(N)}-P_\tau\|\leq (\gamma+c\epsilon)^N\|Q_N-P_\tau\|$, for the ball around $P_\tau$ defined by $\{X:\|X-P_\tau\|\leq\epsilon\}$. Thus, for large enough $N_0$ and small enough $\epsilon$, there exists a $C_0>0$ such that
\[\|P_\tau^{(N)}-P_\tau\|\leq C_0\rho(A_\textsc{dlqr})^{2N}\|Q_N-P_\tau\|.\] 
Next, for the reduced order model $M_i$, let $A_{\textsc{lqr}_i}:=A_{\textsc{r}}+B_{\textsc{r}}\Krom$, and recall that we defined $\alpha_i=-\max_j\Re\{\lambda_j(A_{\textsc{lqr}_i})\}$. Then, as $A_\textsc{dlqr}=I + \tau A_{\textsc{lqr}_i} + \tau B(K_\tau-\Krom) = I + \tau A_{\textsc{lqr}_i} +  O(\tau^2)$, cf. Prop.~\ref{prop:final-kbounds}. Using standard perturbation techniques, we conclude that the spectral radius of $A_\tau+B_\tau K_\tau$ satisfies $\rho(A_\tau + B_\tau K_\tau)=1-\alpha_i\tau+O(\tau^2)\leq 1-\frac{\alpha_i\tau}{2}$ for all $0\leq\tau\leq\tau_\textsc{fh}$ for small enough~$\tau_\textsc{fh}$. Therefore, $\rho(A_\textsc{dlqr})^{2N}\leq(1-\frac{\alpha_i\tau}{2})^{2N}\leq e^{-\alpha_i\tau N}\leq e^{\alpha_i\tau}e^{-\alpha_i T}$, where we use that $\tau N\leq T \leq\tau (N+1)$ and $e^{-\alpha_i\tau (N+1)}\leq e^{-\alpha_iT}$. Taking the Taylor series of~$e^{\alpha_i\tau}$ with respect to~$\tau$ gives a polynomial in~$\tau$, which, combined with $\epsilon C_0$, allows us to define the polynomial $C_T(\tau)$. Together with the result in Lemma~\ref{lem:boundKTtau}, we therefore conclude that~\eqref{eq:boundexpconverg} holds for all $\tau\leq \tau_\textsc{fh}$ with $\floor{T/\tau}$ sufficiently large and $\|Q_N-P_\tau\|$ sufficiently small.
\end{proof}

Combining all the above results, and defining $\tau_\star:=\min\{\tau_\textsc{dt}, \breve\tau, \tau_\textsc{fh}\}$ yields the design-dependent sector bound~\eqref{eq:pidecomposition_lqr}.

\subsection{Lipschitz Bounds on the Cost Function}\label{ass:costcond}

We now show that the conditions~\eqref{eq:cost_x} and~\eqref{eq:cost_pi} hold for the quadratic cost in~\eqref{eq:lqr-base}, and hence that we can construct the compute constraint receding horizon LQR DMP presented in Theorem~\ref{thm:dmp-lqr}. From the above derivations, we observe that the deployed policy~$\hat{\pi}(x)=K_{\tau,T_\star(\tau,M_i)}\xdt$ is \emph{linear} in~$x_t$: $\hat{\pi}(x_t)=\hat{K}(t-\dt,\tau,M_i)x_t$, where we lumped the effects of reduced-order modeling, ZOH discretization, and finite horizon approximation in the gain matrix function~$\hat{K}$. Moreover, under the assumption that~$\|(\hat\pi-\pi^\star)(x^\star)\|\leq L^\star(\tau,M_i)\|x^\star\|$, we have that
\[ \|\hat{K}(t-\dt,\tau,M_i)\|\leq \underbrace{\|\hat{K}(t-\dt,\tau,M_i)-K_\star\|}_{L^\star(\tau,M_i)}+\|K_\star\|\leq O(\|K_\star\|).\]
Hence, there exists a $B_K>0$ such that~$\|\hat{K}(t-\dt,\tau,M_i)\|\leq B_K\|K_\star\|$, which we will use in the following result.
\begin{proposition}
	Consider the quadratic cost $c(x,u) = x^\top Q x + u^\top R u$, and the stabilizing policies~$\hat\pi(x) = \hat K x$ and $\pi^\star(x) = K_\star x$. Define 
	\( \mc{X}:= \{x\in\R^\dx : \|x\|\leq B_X \}, \)
    and assume that $x_0\in\mc{X}$. Furthermore, assume that for $\|(\hat\pi-\pi^\star)(x^\star)\|\leq \bar{L}\|x^\star\|$, with $\bar L = \sup_{0<\tau\leq \tau_\star, i=0,\dots,3} L^\star(\tau,M_i)$, the small-gain condition~\eqref{eq:smallgaincondition} holds. Then,~\eqref{eq:cost_x} and~\eqref{eq:cost_pi} hold with
    \begin{equation}\label{eq:LxLpi}
        L_\mathrm{x}:= 3\sqrt{\cn(P_\star)}B_X\big(\|Q\| + 2B_K\|R\|\big), \quad L_\pi:= \sqrt{\cn(P_\star)}B_X\|R\|\big(2\|K_\star\|+\bar{L}\big).
    \end{equation}
\end{proposition}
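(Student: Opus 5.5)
The plan is to first confine all relevant states to a ball whose radius is a multiple of $B_X$. Once that is done, both Lipschitz conditions follow from the difference-of-squares identity $a^\top M a - b^\top M b = (a-b)^\top M(a+b)$ for symmetric $M$.

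\emph{Step 1: a priori state bounds.} For the baseline trajectory, $V(0,x^\star)=x^{\star\top}P_\star x^\star$ is nonincreasing along $\dot x^\star=A_{\textsc{lqr}}x^\star$ by Proposition~\ref{prop:lqr-ediss} (take $d=0$). Hence
\[
\|x^\star_t\|\leq\sqrt{\cn(P_\star)}\,\|x_0\|\leq\sqrt{\cn(P_\star)}\,B_X .
\]
For the deployed trajectory, I would use the assumed small-gain condition with sector gain $\bar L$. Proposition~\ref{prop: small-gain} with the quadratic $\alpha_1,\alpha_2$ gives $\|\hat x_t-x^\star_t\|\leq\sqrt{\cn(P_\star)}\|x_0\|$. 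The triangle inequality then yields $\|\hat x_t\|\leq 2\sqrt{\cn(P_\star)}B_X$. So both $x^\star_t$ and $\hat x_t$ lie in a ball of radius $2\sqrt{\cn(P_\star)}B_X$, which makes the quadratic cost effectively Lipschitz on the relevant set. The constants in \eqref{eq:LxLpi} (a factor $3$ in $L_\mathrm{x}$, and $2\|K_\star\|+\bar L$ in $L_\pi$) match sums of these radii.

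\emph{Step 2: bound \eqref{eq:cost_x}.} Write $c(\hat x,\hat K\hat x)-c(x^\star,\hat Kx^\star)=(\hat x-x^\star)^\top(Q+\hat K^\top R\hat K)(\hat x+x^\star)$. Using $\|\hat K\|\leq B_K\|K_\star\|$, this is bounded by $\|\hat x-x^\star\|\,(\|Q\|+\|\hat K\|^2\|R\|)\,\|\hat x+x^\star\|$. Step 1 gives $\|\hat x+x^\star\|\leq 3\sqrt{\cn(P_\star)}B_X$. The remaining task is to absorb $\|\hat K\|^2$ into the stated form $2B_K\|R\|$ via the normalization implicit in $B_K$. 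This is the messiest bookkeeping point: I would state the constant generously and note that it is $O(1)$, since the paper only ever uses $L_\mathrm{x}$ as a finite constant.

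\emph{Step 3: bound \eqref{eq:cost_pi}.} Here the state is the same and only the input differs:
\[
c(x^\star,\hat Kx^\star)-c(x^\star,K_\star x^\star)=\big((\hat K-K_\star)x^\star\big)^\top R\,(\hat K+K_\star)x^\star .
\]
This is bounded by $\|(\hat\pi-\pi^\star)(x^\star)\|\,\|R\|\,\|(\hat K+K_\star)x^\star\|$. Writing $\hat K+K_\star=2K_\star+(\hat K-K_\star)$ and using the sector bound $\bar L$ gives $\|(\hat K+K_\star)x^\star\|\leq(2\|K_\star\|+\bar L)\|x^\star\|$. Step 1 bounds $\|x^\star\|\leq\sqrt{\cn(P_\star)}B_X$, which produces exactly $L_\pi$.

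The main obstacle is Step 1 for the deployed system. The deployed gain $\hat K(t-\dt,\tau,M_i)$ is time-varying because of the ZOH, so I must check that Proposition~\ref{prop: small-gain} applies with the sector bound holding along $\hat x$ trajectories. I would verify this by noting that the sector bound is precisely the hypothesis of Proposition~\ref{prop: small-gain}, with $\bar L$ taken as the supremum over designs.
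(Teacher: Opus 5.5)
Your argument follows the paper's proof step for step. It bounds $\|x^\star_t\|\le\sqrt{\kappa(P_\star)}B_X$ by the Lyapunov argument, and bounds $\|\hat x_t\|\le 2\sqrt{\kappa(P_\star)}B_X$ via Proposition~\ref{prop: small-gain} and the triangle inequality. It then uses a difference-of-quadratic-forms bound for \eqref{eq:cost_x} and the same-state/different-input split for \eqref{eq:cost_pi}. Your Step~3 reproduces $L_\pi$ exactly. It is slightly cleaner than the paper's version: the paper passes through $\|\hat K\|\le\|K_\star\|+\bar L$ as an operator-norm bound, whereas you only use the sector bound at $x^\star$.

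The only loose end is the constant in Step~2, and the fault lies with the stated $L_{\mathrm x}$, not with your algebra. Your factor $\|Q\|+\|\hat K\|^2\|R\|$ is the correct bound on $\|Q+\hat K^\top R\hat K\|$. The paper's proof instead asserts $\|Q\|+2\|\hat K\|\|R\|$, which already fails for a scalar gain with $|\hat K|>2$. It then writes $2B_K\|R\|$, silently dropping the $\|K_\star\|$ from $\|\hat K\|\le B_K\|K_\star\|$. So do not appeal to an unspecified ``normalization implicit in $B_K$''. You have two honest options:
\begin{itemize}
\item State \eqref{eq:cost_x} with $L_{\mathrm x}=3\sqrt{\kappa(P_\star)}B_X\bigl(\|Q\|+B_K^2\|K_\star\|^2\|R\|\bigr)$.
\item Observe that $B_K$ is defined only through an upper bound, so it may be enlarged until also $2B_K\ge\sup\|\hat K\|^2$. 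This makes the displayed $L_{\mathrm x}$ valid.
\end{itemize}

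Separately, your resolution of the ``main obstacle'' is slightly off. Proposition~\ref{prop: small-gain} needs the sector bound along the deployed trajectory $\hat x$, i.e.\ $\|(\hat K-K_\star)\hat x_t\|\le\bar L\|\hat x_t\|$ with $\hat K$ the time-varying lumped gain. The hypothesis as stated gives it along $x^\star$, so it is not ``precisely'' what that proposition requires. The paper makes the same identification, so you should read the hypothesis as including the bound along $\hat x$. Once that bound is granted, the ZOH time variation is harmless, because the Lyapunov computation in Proposition~\ref{prop: small-gain} only uses it pointwise in $t$.
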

\begin{proof}
    We first show that~\eqref{eq:cost_x} holds. By Cauchy-Schwarz, we have
\begin{multline*}
	|c(\hat x, \hat\pi(\hat x))-c(x^\star, \hat\pi(x^\star))| =   \big|\hat x^\top (Q + \hat K^\top R \hat K) \hat x - (x^\star)^\top (Q + \hat K^\top R \hat K) x^\star \big| \\ = \big|(\hat x-x^\star)^\top (Q + \hat K^\top R \hat K) \hat x - (x^\star)^\top (Q + \hat K^\top R \hat K) (\hat x-x^\star) \big|\\ \leq (\|Q\| + 2\|\hat K\|\|R\|)(\|\hat x\| + \|x^\star\|)\|\hat x-x^\star\|,
\end{multline*}
where we suppressed the $t$ subscript. 
As $x_0\in\mc{X}$, and the baseline closed-loop is stable with Lyapunov function $(x_t^\star)^\top P_\star x_t^\star$, we can use a simple Lyapunov argument to conclude that
\[ \|x_t^\star\|^2\leq\cn(P_\star)\|x_0^\star\|^2 = \cn(P_\star)\|x_0\|^2\leq\cn(P_\star)B_X^2, \] 
i.e., $\|x_t^\star\|\leq\sqrt{\cn(P_\star)}B_X$. Next, we note that $\|\hat x_t\|\leq \|\hat x_t-x_t^\star\| + \|x_t^\star\|$. Hence, under the small-gain and $\delta$ISS assumption, Props.~\ref{prop: small-gain},~\ref{prop:lqr-ediss}  give
\[ \|\hat x_t-x_t^\star\| \leq \sqrt{\cn(P_\star)}\|x_0^\star\|\leq \sqrt{\cn(P_\star)}B_X \implies  \|\hat x_t\|\leq 2\sqrt{\cn(P_\star)}B_X. \]
Defining~$L_\mathrm{x}$ as in \eqref{eq:LxLpi} gives the desired bound. Showing that~\eqref{eq:cost_pi} holds with $L_\pi$ as defined in~\eqref{eq:LxLpi} follows similar arguments. We have by Cauchy-Schwarz and the above reasoning that
\begin{multline*}
	|c(x^\star, \hat\pi(x^\star))-c(x^\star, \pi^\star(x^\star))| = \big|(x^\star)^\top (Q - Q + \hat K^\top R \hat K - K_\star^\top R  K_\star) x^\star\big|\\ = \big|((\hat K-K_\star)x^\star)^\top R (\hat K x^\star) + (K_\star x^\star)^\top R ((\hat K-K_\star)x^\star)\big|\leq \|x^\star\|\|R\|(\|K_\star\|+\|\hat K\|)\|(\hat\pi-\pi^\star)(x)\|,
\end{multline*}
where we again suppressed the $t$ subscript. 
Using the same arguments, we define~$L_\pi$ as in~\eqref{eq:LxLpi} to obtain the desired bound and complete the proof. %
\end{proof}

\subsection{Extended discussion on the numerical example}\label{appendix:example}
To improve reproducibility of our results, we provide some additional details on the implementation of the numerical example in~\S\ref{ss:numerical}. We will also provide extra plots, such as time-series simulations of the different deployed policies, and additional insights when hyperparameters in the DMP change.
\subsubsection{Implementation aspects}\label{asss:implement}
As mentioned in~\S\ref{ss:numerical}, we estimate the values for $C_i^\star,\hat{C}_i$ by fitting the functional form~\eqref{eq:pidecomposition_lqr} with 10 randomly selected points of~$\|(\hat\pi_T-\pi^\star)(x)\|$ for different values of~$\tau$. Using the above derivations, we can elaborate on how we exactly computed these values for a given~$\tau$. 

The optimal policy is~$\pi^\star(x_t) = K_\star x_t$, while the deployed policy is $\hat\pi_T(x_t) = K_{\tau,T_\star}\xdt$. We must evaluate the policies on trajectories of the baseline system, $x^\star_t$, and the deployed system, $\hat{x}_t$, where the difficulty is in evaluation of $\xdt$. Using Prop.~\ref{prop:boundKtau-xi}, we can write evaluation of $\xdt$ on the baseline or deployed system as:
\begin{align*}
	x^\star_{\dt}&=\left(e^{A(t-\dt)} +(e^{A(t-\dt)}-I)A^{-1}BK_\star\right)^{-1}{x}^\star_t=:G^{-1}_\star(t-\dt)x^\star_t \\ 
	\hat{x}_{\dt}&=\left(e^{A(t-\dt)} +(e^{A(t-\dt)}-I)A^{-1}BK_{\tau, T_\star}\right)^{-1}\hat{x}_t=:\hat G^{-1}(t-\dt)\hat x_t,
\end{align*}
under the assumption that $\hat G,G_\star$ are invertible, which holds for small enough~$\tau$. Hence, we can write for a given sampling time~$\tau$
\begin{align*}
	\|(\hat\pi_T-\pi^\star)(x^\star_t)\|=\|(K_{\tau, T_\star}G^{-1}_\star(t-\dt)-K_\star)(x^\star_t)\|&\leq \|K_{\tau, T_\star}G^{-1}_\star(\tau)-K_\star\|\|x_t^\star\|, \\
	\|(\hat\pi_T-\pi^\star)(\hat x_t)\|=\|(K_{\tau, T_\star}\hat G^{-1}(t-\dt)-K_\star)(\hat x_t)\|&\leq \|K_{\tau, T_\star}\hat G^{-1}(\tau)-K_\star\|\|\hat x_t\|,
\end{align*}
where we used the fact that $\max_{0\leq s\leq \tau}G^{-1}_\star=G_\star^{-1}(\tau)$ and $\max_{0\leq s\leq \tau}\hat G^{-1}=\hat G^{-1}(\tau)$. Using these explicit forms of the sector bounds, we can easily compute the norms for a given~$\tau$. In our implementation, we used the standard spectral matrix norm.

We used a similar approach to compute the (locally) optimal sampling time that minimizes $\|A_\textsc{lqr} x_t-(Ax_t + BK_{T_\star,\tau}\xdt)\|$ for all~$x_t$ for a given $M_i$. Minimizing this quantity is equivalent to minimizing the difference in response on the interval $[t,t+\tau]$. The closed-loop response of the baseline system on this interval is $x_{t+\tau}=e^{A_\textsc{lqr}\tau}x_t$, while the closed-loop response of the deployed system on this interval is $x_{t+\tau}=\hat{G}(\tau)x_t$. Hence, to compute the (locally) optimal sampling time we can simply compute $\min_{0\leq\tau}\|e^{A_\textsc{lqr}\tau}-\hat{G}(\tau)\|$, which can be obtained using nonlinear optimization.

\subsubsection{Time-series simulation}
\begin{figure}
\centering
\subfloat[$M_0$]{\includegraphics[width=0.49\linewidth]{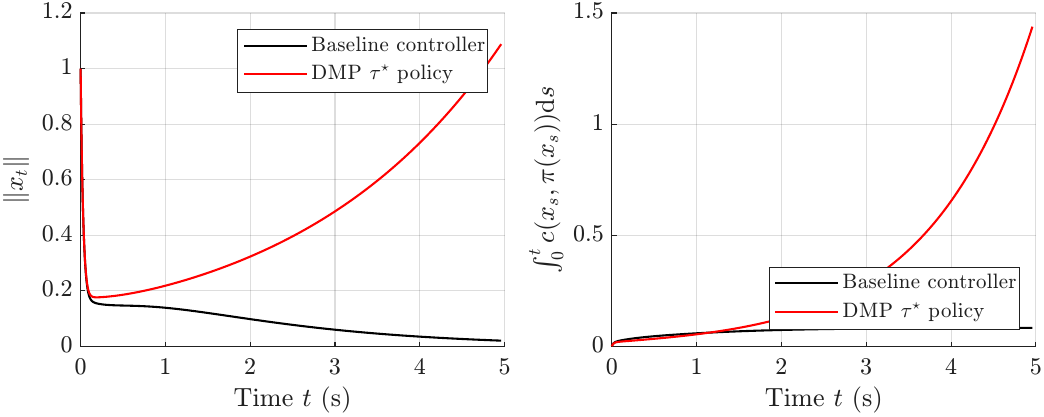}
	\label{fig:simulation_M0}}~\subfloat[$M_1$]{\includegraphics[width=0.49\linewidth]{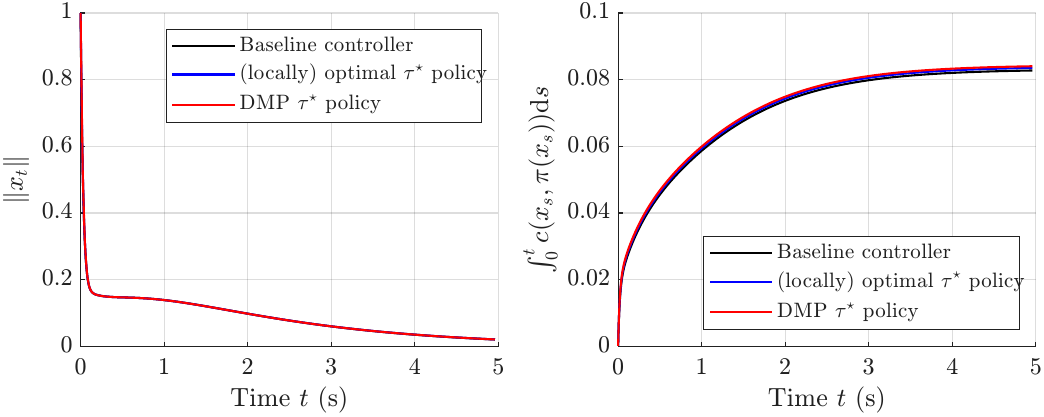}
	\label{fig:simulation_M1}}\\
\subfloat[$M_2$]{\includegraphics[width=0.49\linewidth]{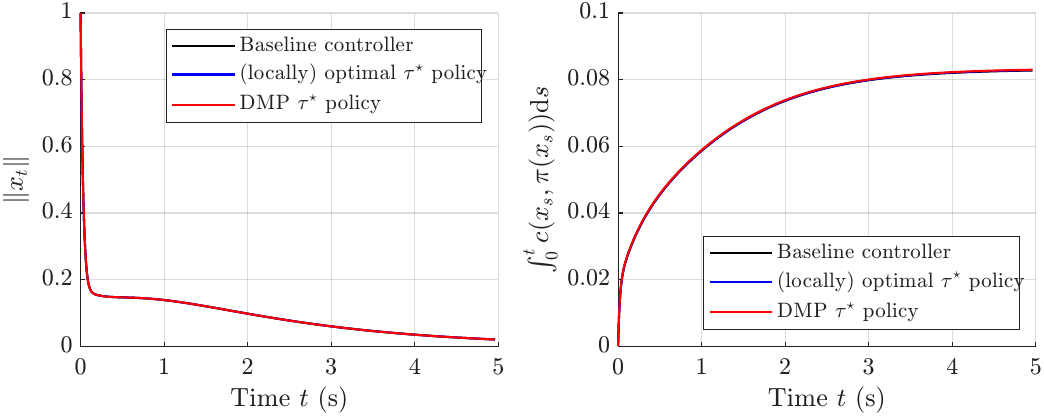}
	\label{fig:simulation_M2}}~\subfloat[$M_3$]{\includegraphics[width=0.49\linewidth]{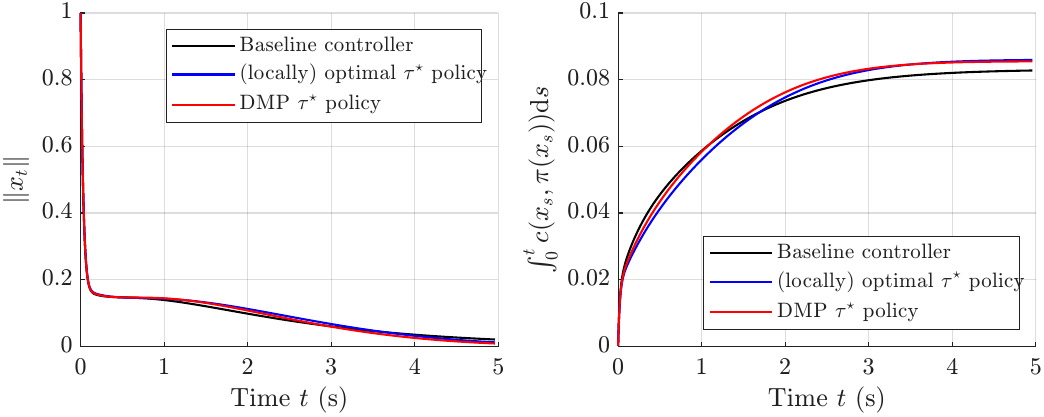}
	\label{fig:simulation_M3}}
\caption{Simulation of the baseline (\legendline{black}) and deployed closed-loop systems, where the selected sampling time~$\tau$ corresponds to either the locally optimal~$\tau$ (\legendline{blue}), obtained using nonlinear optimization, or the~$\tau$ that minimizes~\eqref{eq:lqr-dmp-tau-cost} (\legendline{red}). The left-hand plots show~$\|x_t\|_2$, with $x_t\in\R^{97}$. The right-hand plots shows the approximation of the closed-loop cumulative cost, i.e., $\int_0^t c(x_s,u_s)\mr{d}s$.}\label{fig:simulations}
\end{figure}
We simulate the baseline system together with the deployed system for the sampling times~$\tau$ that correspond to the minima of the fits in the left-hand plots in Fig.~\ref{fig:subopt}, as well as the (locally) optimal sampling time. The obtained values are listed in Table~\ref{tab:taus}.%
\begin{table}[t]
    \centering
    \caption{Values for the optimal sampling time obtained from the numerical example.}
    \label{tab:taus}
    \begin{tabular}{cccc}
    \hline Model & (locally) optimal $\tau^\star$ & $\tau^\star$ from DMP & Note\\\hline
         $M_0$ & No solution found & $1.0\cdot10^{-4}$ &  \eqref{eq:lqr-dmp-tau-sgc} not satisfied \\
$M_1$ & $2.0\cdot10^{-2}$ & $2.6\cdot10^{-2}$ &  \eqref{eq:lqr-dmp-tau-sgc} not satisfied \\
$M_2$ & $7.6\cdot10^{-3}$ & $1.1\cdot10^{-2}$ &   \\
$M_3$ & $9.3\cdot10^{-4}$ & $4.2\cdot10^{-3}$ &  \eqref{eq:lqr-dmp-tau-sgc} not satisfied \\\hline
    \end{tabular}
\end{table}
\begin{figure}[!hb]
	\centering
	\includegraphics[width=0.5\linewidth]{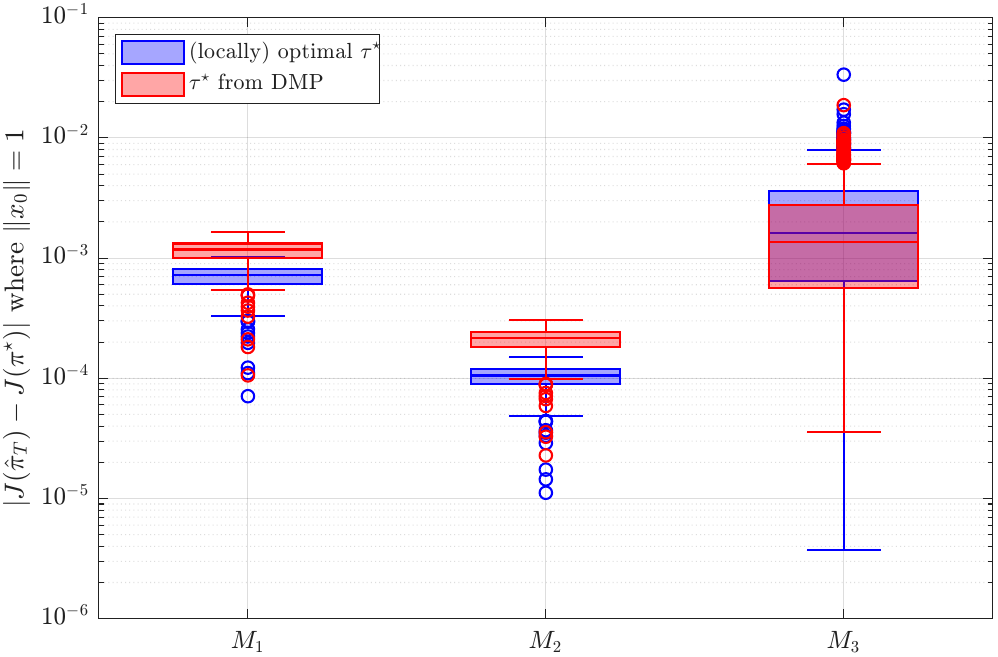}
	\caption{Box plot showing the distribution of the suboptimality gap of the deployed policies corresponding to $M_1,M_2,M_3$, based on a thousand 10-second simulations with initial conditions on the unit ball, i.e.,~$\|x_0\|=1$. The deployed policies are designed with a sampling time $\tau^\star$, obtained using nonlinear optimization or the DMP~\eqref{eq:lqr-dmp-tau}. Outliers in the distribution are indicated by $\circ$.}
	\label{fig:boxplot}
\end{figure}
For these sampling times, we have simulated the deployed closed-loop for $10$~seconds with a randomly generated $x_0$, with~$\|x_0\|_2=1$. Additionally, we simulate the baseline closed-loop system with the same initial condition for comparison. Note that the deployed closed-loop is the true system~\eqref{eq:lqr-base:model} controlled by the deployed policy~$\hat\pi_T(x_t)$, while the baseline closed-loop is the true system controlled with the baseline policy~$\pi^\star(x_t)$. Due to the large state dimension of the true system, we only plot~$\|x_t\|_2$. 
For the obtained response, we have also computed an approximation of the closed-loop cumulative cost, i.e., $\int_0^t c(x_s,\pi(x_s))\mr{d}s$, using a standard Riemann sum. The simulation results for the reduced-order models $M_0,\dots,M_3$ are shown in Fig.~\ref{fig:simulations}. As already highlighted in~\S\ref{ss:numerical}, we could not find a deployable policy for~$M_0$, which is confirmed by the simulation of the resulting policy obtained after making the discussed approximations: unstable closed-loop behavior. Carefully inspecting the difference between Fig.~\ref{fig:simulation_M1} and Fig.~\ref{fig:simulation_M2} allows us to conclude that the deployed policy obtained with~$M_2$ shows the closest resemblance to the baseline simulation. The simulation in Fig.~\ref{fig:simulation_M3} shows a significant difference in behavior compared to the baseline closed-loop system, i.e., the reduced-order model~$M_3$ is a too crude approximation of the true system. Finally, we also want to remark that these simulations demonstrate the conservativeness of~\eqref{eq:lqr-dmp-tau-sgc}: the sampling times that minimized~\eqref{eq:lqr-dmp-tau-cost} for $M_1$ and $M_3$ did not satisfy~\eqref{eq:lqr-dmp-tau-sgc}, while the resulting policies \emph{did} result in stable behavior.

To better show and compare the resulting suboptimality of the deployed controllers, we have simulated the aforementioned closed-loop systems for 1000 random initial conditions on the unit ball, i.e., $\|x_0\|_2=1$,  and then computed 
\[ \left|\int_0^{10} c(\hat{x}_t,\hat{\pi}_T(\hat{x}_t))\mr{d}t-\int_0^{10} c(x^\star_t,{\pi}^\star(x^\star_t))\mr{d}t \right|. \]
These simulations lead to a distribution of values of the suboptimality of the deployed policy. We have depicted this distribution for the different reduced order models in Fig.~\ref{fig:boxplot}.
This box plot clearly underlines the conclusion drawn in~\S\ref{ss:numerical}: the sampling time that minimizes the DMP for~$M_2$ provides the optimal tradeoff between the design parameters subject to the compute constraints.

\subsubsection{Improving functional form}
\begin{figure}[!b]
\centering
\subfloat[$M_1$]{\includegraphics[width=0.5\linewidth]{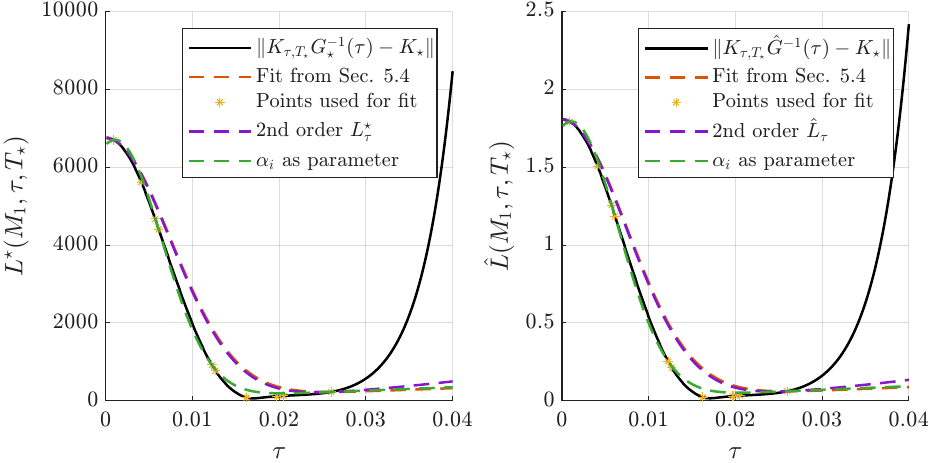}\label{fig:fitres_M1}}~
\subfloat[$M_2$]{\includegraphics[width=0.5\linewidth]{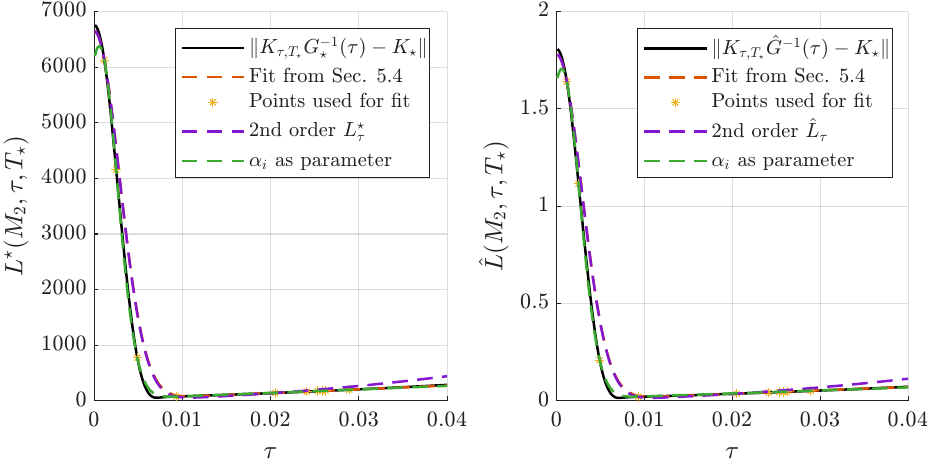}\label{fig:fitres_M2}}\\
\subfloat[$M_3$]{\includegraphics[width=0.5\linewidth]{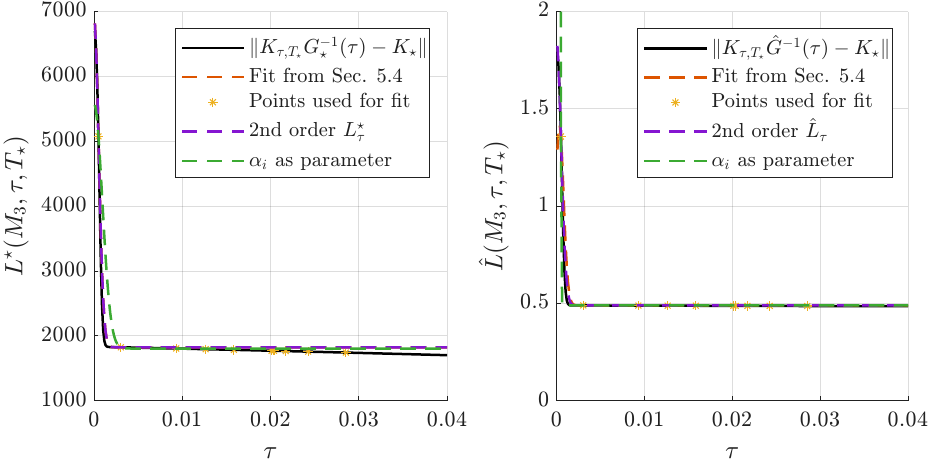}\label{fig:fitres_M3}}
\caption{Plots demonstrating the effect of choosing linear dependence for~$L_\tau$ on~$\tau$ and restricting the exponential rate to~$\alpha_i$ on the functional form~\eqref{eq:pidecomposition_lqr} (\legendline{morange}~\legendline{morange}). For the reduced-order models $M_1,M_2,M_3$, we fitted here the obtained functional form again, now with quadratic dependence for~$L_\tau$ on~$\tau$ (\legendline{mpurple}~\legendline{mpurple}), and~$\alpha_i$ as a fittable parameter (\legendline{mgreen}~\legendline{mgreen}). We used the same 10 randomly selected points (\textcolor{myellow}{$\ast$}) from the computed ``true'' suboptimality gap of the deployed policy (\legendline{black}).}\label{fig:fitres}
\end{figure}
To generate the results in~\S\ref{ss:numerical}, we chose the form
\[ \|(\hat\pi_T-\pi^\star)(x_t)\|\leq  \left(C_1\tau + (C_2 + C_3\tau)e^{-\alpha_i T} + L_\Mcal(M_i)\right)\|x_t\|,  \]
i.e., we chose to restrict $L_\tau$ in~\eqref{eq:appdt:bound} to be linearly dependent on~$\tau$, and the exponential rate to be expressed in terms of~$\alpha_i$. This inevitably affects the quality of the fit of the above functional form, which is the upper bound on $\|(\hat\pi_T-\pi^\star)(x_t)\|$ for $0\leq\tau\leq\tau_\star$. In this section, we study the effect of these choices by considering instead 2\textsuperscript{nd} order dependence on~$\tau$ for~$L_\tau$ and considering~$\alpha_i$ as a fittable parameter in the functional form.

Fig.~\ref{fig:fitres} shows the resulting fits with these modifications in the DMPs for $M_1,M_2,M_3$. Specifically, for a given $M_i$, it shows the ``true'' suboptimality gap of the deployed policy in black, which we computed as discussed in~\S\ref{asss:implement}. The fits presented in Fig.~\ref{fig:subopt} are plotted in Fig.~\ref{fig:fitres} with same color for convenience. Using the same randomly selected points as in~\S\ref{ss:numerical} (indicated by yellow asterisks), we fitted the functional form
\( (C_{1,1}\tau + C_{1,2}\tau^2) + (C_{2} + C_{3}\tau)e^{-\alpha_i T_\star(\tau,M_i)} + L_\Mcal(M_i), \)
with respect to $\|x^\star_t\|$ and $\|\hat{x}_t\|$, indicated by the purple dashed lines. The green dashed lines indicate the fit of the functional form
\( C_{1}\tau + (C_{2} + C_{3}\tau)e^{-\alpha_{\textsc{fit},i} T_\star(\tau,M_i)} + L_\Mcal(M_i),\)
where $\alpha_{\textsc{fit},i}$ is a value to be estimated. The results in Fig.~\ref{fig:fitres} show that the quality of the fit is most affected by the decay rate~$\alpha_i$. We have also increased the order for~$C_T(\tau)$ (not shown), which did not noticeably improve the fit.

\subsubsection{Improved computational resources}
The results in~\S\ref{ss:numerical} consider the implementation of the deployed policy on a simple Arduino Leonardo. Upgrading the hardware of the system design results in better computational resources, i.e., a smaller~$\tau_g$. The impact of the larger computational budget is visualized in Fig.~\ref{fig:subopt_multhz} on Page~\pageref{fig:subopt_multhz}, where for every following subplot, we doubled the number of flops per second, i.e., halved~$\tau_g$.

We can draw an important conclusion from the consecutive subplots in Fig.~\ref{fig:subopt_multhz}. It is easy to see that more computational resources directly enlarge the feasibility region of the small gain condition~\eqref{eq:lqr-dmp-tau-sgc}. Moreover, performing the same analysis for $M_1$ (not shown) gives that, despite not feasible with an Arduino Leonardo, the DMP \emph{is} feasible when, e.g., $1/\tau_g=128$ MHz. Hence, increasing the computational budget will result in a larger range of feasible solutions across different design dimensions (e.g., state dimension of the reduced order model, sampling time, and related horizon length). 

\begin{figure}[t]
\centering
\subfloat[$1/\tau_g=16$ MHz (i.e., Fig.~\ref{fig:suboptrho})]{\includegraphics[width=0.49\linewidth]{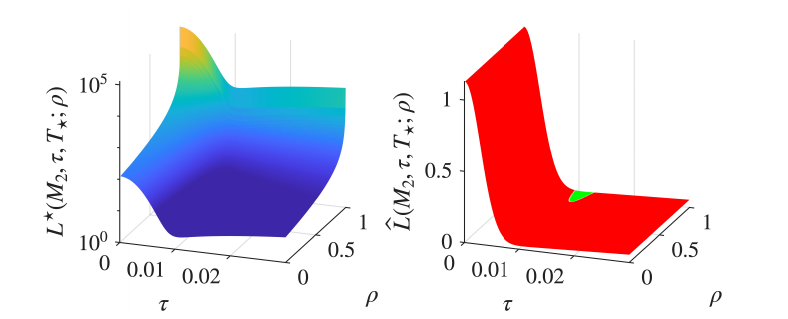}
	\label{fig:suboptrho_16mhz}}
\subfloat[$1/\tau_g=32$ MHz]{\includegraphics[width=0.49\linewidth]{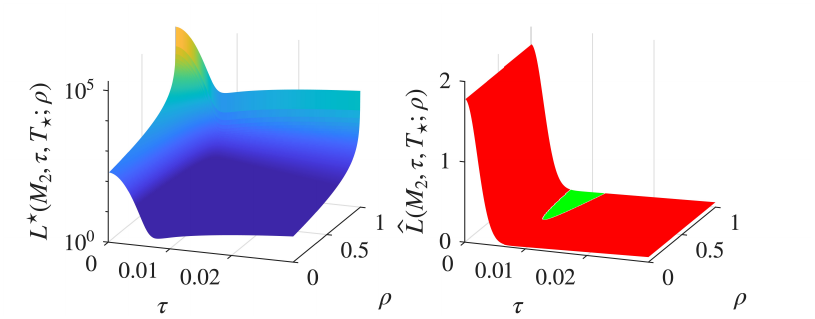}
	\label{fig:suboptrho_32mhz}}\\
\subfloat[$1/\tau_g=64$ MHz]{\includegraphics[width=0.49\linewidth]{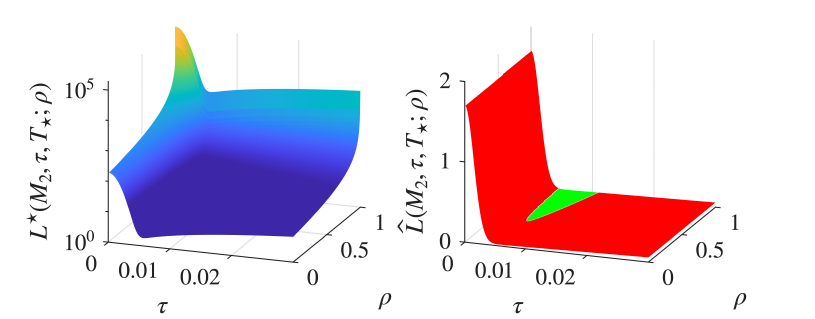}
	\label{fig:suboptrho_64mhz}}
\subfloat[$1/\tau_g=128$ MHz]{\includegraphics[width=0.49\linewidth]{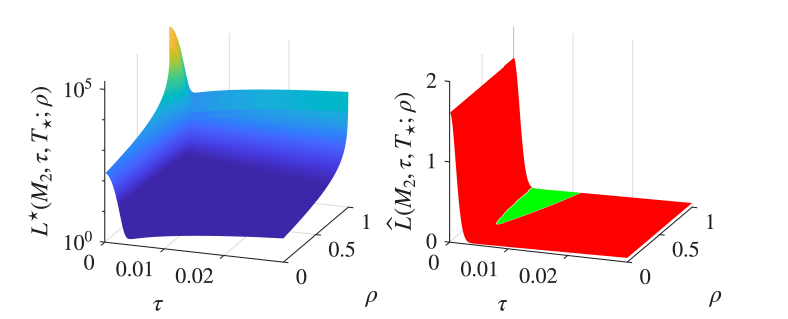}
	\label{fig:suboptrho_128mhz}}
\caption{Evolution of $L^\star$ and $\hat{L}$ when the computational resources grow. We showed here the computed values of the design-dependent sector bound over a grid of $\tau\times\rho$ for the optimal and deployed systems for system~$M_2$ and a given $\tau_g$. As in Fig.~\ref{fig:suboptrho}, the region for which the small-gain condition~\eqref{eq:lqr-dmp-tau-sgc} is satisfied is indicated with green.}\label{fig:subopt_multhz}
\end{figure}

\end{document}

\fi